\newtheorem{theorem}{Theorem}
\newtheorem*{theorem*}{Theorem}
\newtheorem{lemma}[theorem]{Lemma}
\newtheorem{procedure}[theorem]{Procedure}
\newtheorem{definition}[theorem]{Definition}
\newcommand{\F}{\mathbb{F}}
\newcommand{\cP}{\mathcal{P}}
\newcommand{\cS}{\mathcal{S}}
\newcommand{\cC}{\mathcal{C}}
\newcommand{\cQ}{\mathcal{Q}}
\begin{document}

\title{Adaptive Syndrome Extraction}

\author{Noah Berthusen}
\email{nfbert@umd.edu}
\author{Shi Jie Samuel Tan}
\author{Eric Huang}
\author{Daniel Gottesman}
\affiliation{Joint Center for Quantum Information and Computer Science, NIST/University of Maryland, College Park, Maryland 20742, USA}

\begin{abstract}
Device error rates on current quantum computers have improved enough to where demonstrations of error correction below break-even are now possible. Still, the circuits required for quantum error correction introduce significant overhead and sometimes inject more errors than they correct. 
In this work, we introduce adaptive syndrome extraction as a scheme to improve code performance and reduce the quantum error correction cycle time by measuring only the stabilizer generators that are likely to provide useful syndrome information. 
We provide a concrete example of the scheme through the $[[4,2,2]]$ code concatenated with a hypergraph product code and a syndrome extraction cycle that uses quantum error detection to modify the syndrome extraction circuits in real time.
Compared to non-concatenated codes and non-adaptive syndrome extraction, we find that the adaptive scheme achieves over an order of magnitude lower logical error rates while requiring fewer CNOT gates and physical qubits.
Furthermore, we show how to achieve fault-tolerant universal logical computation with $[[4,2,2]]$-concatenated hypergraph product codes.
\end{abstract}


\maketitle


\section{Introduction}

Recent improvements in quantum computing hardware have allowed for demonstrations of quantum error correcting codes (QECCs) operating below threshold~\cite{acharya2024, reichardt2024, dasilva2024, reichardt2024_2, lacroix2024scalinglogiccolorcode}. 
However, performing the syndrome extraction circuits can introduce new errors through faulty gates, idling errors, and incorrect corrections.  
Indeed, it is sometimes the case that bare qubits can idle longer than a logical qubit of a QECC that undergoes several rounds of syndrome extraction and error correction~\cite{Xu_2018}. This issue can be further exacerbated when, for example, implementing nonlocal QECCs on hardware which only have access to local gates~\cite{Delfosse_Beverland_Tremblay_2021, baspin2023lower}.

Quantum error correction (QEC) is inherently adaptive in the sense that after extracting the syndrome and decoding, the appropriate correction is applied to return the system to the codespace. 
Performing this process in real time is necessary when the quantum circuit contains non-Clifford operations.
Making changes to the syndrome extraction circuit based on measurement results obtained during a QEC cycle is a technique that has also seen use in a number of contexts including repeat-until-success protocols~\cite{brown2023} and detection of hook errors~\cite{Reichardt_2020, ryananderson2021}. And while not in real time, Ref.~\cite{berthusen20242dlocalimplementationquantum} provided a scheme for altering syndrome extraction circuits by occasionally neglecting to measure geometrically nonlocal stabilizer generators. 
In this work, we generalize this idea and present a framework that can `short-circuit' syndrome extraction and skip measuring generators that are unlikely to be useful for decoding. 
This decision process is carried out in real time based on the results of the syndrome measurement from the current QEC cycle.

\begin{figure}[t]
    \centering
    \includegraphics[width=\linewidth]{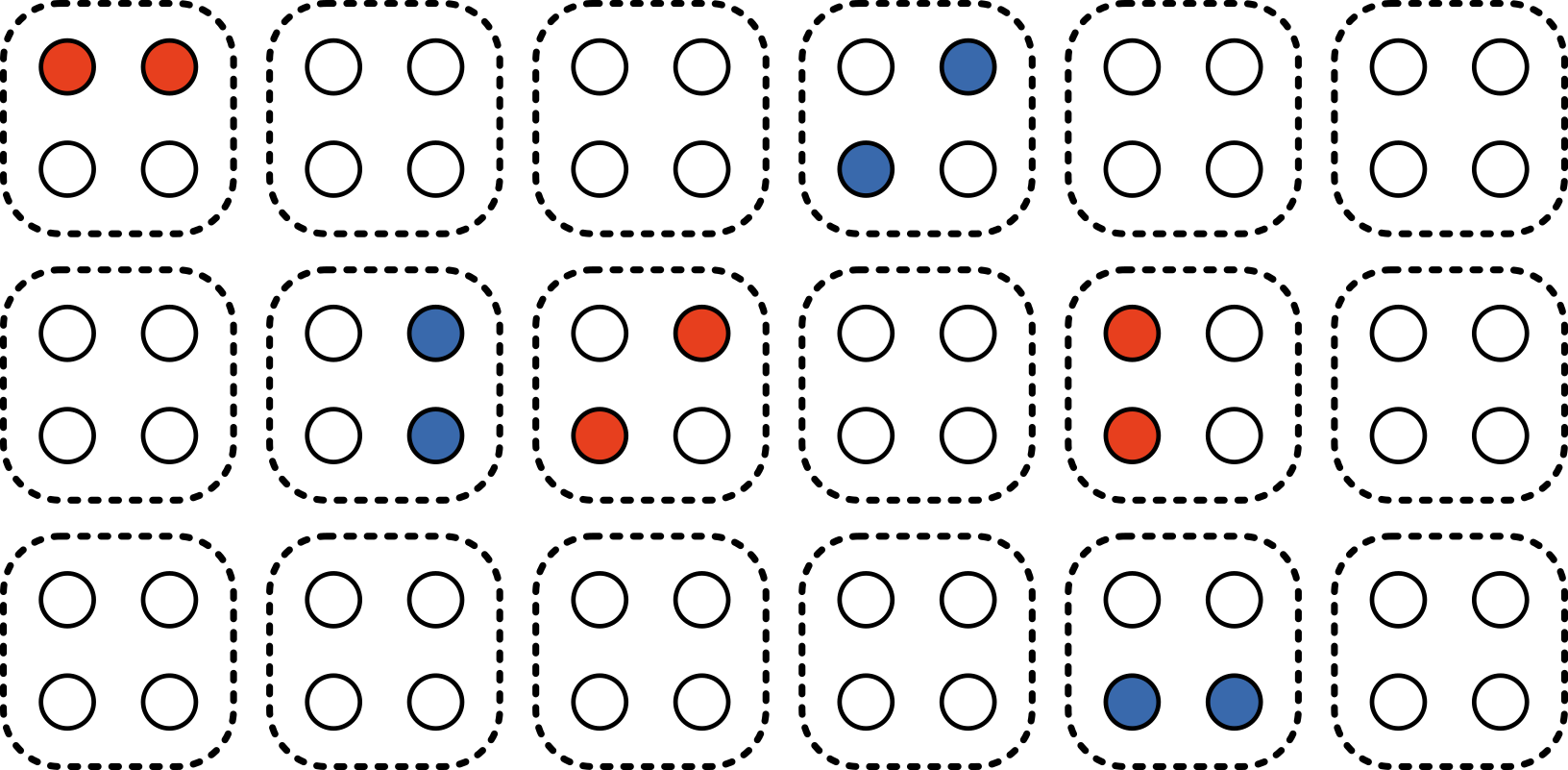}
    \caption{A QED+QEC concatenated code. Physical qubits are first encoded in a small error detecting code, such as the $[[4,2,2]]$ Iceberg code. The logical qubits of many Iceberg code blocks then act as the physical qubits of some high-rate qLDPC code.}
    \label{fig:example-code}
\end{figure}

The main observation motivating the investigation of this adaptive scheme is that in most error correction rounds, much of the syndrome information is unnecessary. For example, in a system without errors, the syndrome does nothing but confirm that the code remains in the correct subspace. 
In this specific case, a high-distance code and an error detecting code would function identically, although the high-distance code would require longer syndrome extraction circuits and introduce more noise.
In general, generators with support on qubits with no errors provide little useful information.
So, how can we cheaply determine approximate error location before committing to an expensive syndrome extraction?
The approach we take in this work is to use a concatenated code where the inner code is a small error detecting code, and the outer code is some high-rate, high-distance quantum low-density parity-check (qLDPC) code. 
The adaptive scheme for this code construction is as follows: in each error correction cycle, the stabilizer generators of the underlying error detecting codes are measured first. 
The resulting syndrome reveals which blocks contain errors; however, since we are using error detecting codes, there is not enough information to correct the error, and so we look for more syndromes by measuring the generators of the outer level of the concatenated code. By the previous argument, we should only measure those generators which provide useful syndrome information. As such, we measure only the stabilizer generators of the outer code which have support on qubits located in QED blocks that have errors in them. 
Generators with no overlapping support are assumed to yield a $+1$ measurement result.

Our adaptive syndrome extraction scheme can be interpreted as a quantum weight reduction protocol in the space-time picture.
Hastings first introduced quantum weight reduction as a toolkit to transform QECCs as to reduce the number of qubits that each checks acts on and the number of checks that each qubit is involved in~\cite{hastings2016weight, hastings2023quantumweightreduction}. 
Reducing the check weights and the qubit connectivity helps to limit the propagation of errors in the QECC as well as reduce the syndrome extraction circuit depth.
Prior to Hastings' work, quantum weight reduction has already been studied in the form of subsystem codes~\cite{poulin2005stabilizer, bacon2006operator, bombin2010topological, baspin2024wire}.
Since then, other works have considered weight reduction in the space-time picture of error correction~\cite{bacon2017sparse, hastings2021dynamically, gottesman2022opportunities, delfosse2023spacetime, de2024dynamical}.
In our scheme, we consider a single space-time block to contain $T$ rounds of syndrome extraction.
By choosing to measure generators of the outer code adaptively, each physical qubit of our concatenated code is now involved in fewer checks ``on average''.
Similarly, the generators of the concatenated code act on fewer qubits ``on average'' in a single space-time block.
While the concatenation scheme increases the actual weight of the checks, we can ensure an overall weight reduction in the space-time picture as long as the outer code generators are measured sufficiently infrequently.

We provide a concrete example of adaptive syndrome extraction facilitated by code concatenation by considering a quantum code consisting of the $[[4,2,2]]$ Iceberg code~\cite{Steane_1996, gottesman1997stabilizercodesquantumerror, Self_2024} concatenated with a hypergraph product (HGP) code~\cite{Tillich_2014}. We show that these $[[4,2,2]]$-concatenated HGP codes function as better quantum memories than non-adaptive and non-concatenated codes, while using fewer CNOT gates and physical qubits.
For these codes, we also provide fault-tolerant logical gadgets with which we can obtain universal logical computation. 
In particular, we adapt the grid Pauli product measurement (GPPM) scheme of Ref.~\cite{xu2024fast} and the single-shot state preparation scheme of Ref.~\cite{hong2024single} to our concatenated codes, allowing us to obtain the same space-time cost of logical computation as non-concatenated HGP codes.

The rest of the work is structured as follows. In Section~\ref{sec:background} we give an introduction to quantum error correction and describe the constructions of the codes used in this work. Section~\ref{sec:adaptive} motivates adaptive syndrome extraction and details how it may provide performance boosts as well as reductions in circuit depth. 
In Section~\ref{sec:canonical} we provide an explicit concatenation structure for $[[4,2,2]]$-concatenated HGP codes that facilitates convenient logical gates and improved numerical performance. 
We then show in Section~\ref{sec:gates} that we can achieve universal logical computation with $[[4,2,2]]$-concatenated HGP codes, with the details deferred to Appendix~\ref{sec:clifford_logical_ops_concatenated_HGP_code}. 
In Section~\ref{sec:numerics} we provide circuit-level simulations comparing the performance of the adaptive scheme with non-adaptive syndrome extraction as well as non-concatenated HGP codes.  
We conclude in Sections~\ref{sec:applications}-\ref{sec:discussion} with a discussion on potential specialized applications for the adaptive scheme, the implications of this work, and potential follow-up research.

\section{Background}
\label{sec:background}

\subsection{Quantum error correction}

An $[n,k,d]$ classical error correcting code $\cC$ is a $k$-dimensional subspace of the full $n$-dimensional vector space of $n$-bit strings. Vectors in this subspace are called codewords. The code $\cC$ can be defined by its parity check matrix (pcm) $H$, where the codewords satisfy $Hv = 0$, that is, $v \in \ker H$. The distance $d$ is the minimum Hamming weight of a codeword.
We say that a code is $(\Delta_V, \Delta_C)$-LDPC if the weight of every column and row of $H$ is bounded by $\Delta_V, \Delta_C \in O(1)$, respectively. 
We can also represent the code $\cC$ with its Tanner graph, a bipartite graph $G = (V \sqcup C, E)$ whose biadjacency matrix is $H$.

An $[[n,k,d]]$ quantum error correcting code (QECC) is a $2^k$-dimensional subspace of the full $2^n$-dimensional Hilbert space of $n$ qubits.
Stabilizer codes~\cite{gottesman1997stabilizercodesquantumerror} are a class of quantum error correcting codes defined to be the joint +1-eigenspace of their stabilizer $\cS$, an Abelian subgroup of the Pauli group on $n$ qubits, $\cP_n$. For an $[[n,k,d]]$ stabilizer code, $\cS$ is fully specified by $m = n-k$ stabilizer generators. The distance of a stabilizer code the minimum Hamming weight of a nontrivial logical operator, that is, Pauli operators which commute with everything in $\cS$ but are not in $\cS$ themselves.
A stabilizer code is said to be a Calderbank-Shor-Steane (CSS)~\cite{shor1996_2, steane1996_2} code if each generator is a tensor product of X and I or a tensor product of Z and I. This then allows us to describe a CSS code using two binary pcms, $H_X$ and $H_Z$, denoting the locations of $X$ and $Z$ Pauli operators, respectively. To ensure that the $X$- and $Z$-type generators commute, we require that $H^T_Z H_X = 0$.
An $[[n,k,d]]$ stabilizer code is considered a $(\Delta_V, \Delta_C)$-qLDPC code if both the $X$- and $Z$-type pcms are $(\Delta_V, \Delta_C)$-LDPC.
Stabilizer codes also have a Tanner graph representation: $H_X$ and $H_Z$ can be individually interpreted as biadjacency matrices; alternatively, the two bipartite graphs could be combined in a single graph $G = (V \sqcup C_X \sqcup C_Z, E)$.

With this redundant encoding, we can detect and correct errors in the system by measuring the eigenvalues of the $m$ stabilizer generators. In the absence of errors, all generators will produce a measurement result of +1, whereas generators that anticommute with an error $e$ will produce a measurement result of -1. This list of measurement results, called a syndrome, is then fed to a classical decoder which attempts to output a correction that returns the system to the codespace. 
An $[[n,k,d]]$ QECC is able to correct for errors on up to $\lfloor(d-1)/2\rfloor$ qubits, and it is able to detect errors of weight up to $d-1$. 
Thus, $[[n,k,d=2]]$ codes can correct no errors and can only detect single-qubit errors; as such, they are called error detecting codes.

\subsubsection{Iceberg codes}

Iceberg codes~\cite{Steane_1996, gottesman1997stabilizercodesquantumerror, Self_2024}, also known as quantum parity codes, are a family of CSS codes with parameters $[[n,n-2,2]]$. The two weight-$n$ stabilizer generators, 
\begin{equation}
    \label{eq:iceberg_stabilizers}
    S_Z = \bigotimes_{i\in [n]} Z_i \quad\text{~and~}\quad S_X = \bigotimes_{i \in [n]} X_i,
\end{equation}
allow the code to detect a single bit-flip and phase-flip error. Here, $[n]$ denotes the sequence $\{1,\ldots,n\}$. Technically, any error of odd parity is detectable, but the code is unable to differentiate between the error weights. Its $n-2$ weight-2 logical operators can be defined as 
\begin{equation}
    \label{eq:iceberg_logicals}
    \overline{X}_i = X_1 X_{i+1} \quad\text{~and~}\quad \overline{Z}_i = Z_{i+1} Z_n,
\end{equation}
and it is easily verified that the commutation relations between logical operators are satisfied. 



\subsubsection{Hypergraph product codes}

Hypergraph product (HGP) codes~\cite{Tillich_2014} are CSS codes which are formed from the hypergraph, or homological~\cite{Bravyi2014} product of two (potentially identical) classical codes. 
A HGP code which is formed from two copies of a single classical code $H$ is called a \textit{square} HGP code, $HGP(H,H)$, and has the following pcms:
\begin{align}
    \label{eq:hgppcms}
    H_X &= \left(H \otimes I_{n},\; I_{m} \otimes H^T\right) \\
    \label{eq:hgppcm2}
    H_Z &= \left(I_{n} \otimes H,\;H^T \otimes I_{m}\right).
\end{align}
When $H$ is the full-rank pcm for a binary linear code with parameters $[n,k,d]$, the resulting HGP code, $HGP(H,H)$, has parameters $[[n^2+(n-k)^2,k^2,d]]$. In this work, we exclusively investigate square HGP codes where $H$ is full-rank.
A notable family of HGP codes are those formed from classical expander codes~\cite{Sipser_Spielman_1996}. The resulting HGP codes are deemed \textit{quantum expander codes}~\cite{Leverrier_2015} and have parameters scaling like $[[n, O(n), O(\sqrt{n})]]$.

\subsubsection{Concatenated codes}

Quantum code concatenation~\cite{gottesman1997stabilizercodesquantumerror} is a procedure that takes two stabilizer codes and produces a larger stabilizer code. In a concatenated code, the physical qubits are first encoded in some $[[n_1, k_1, d_1]]$ stabilizer code encoding $k_1$ logical qubits. These $k_1$ logical qubits then serve as the \textit{physical} qubits for an $[[n_2, k_2, d_2]]$ stabilizer code. This process can be repeated arbitrarily many times, with each encoding step generally increasing the error correction properties of the code. In this work, we focus on concatenated codes consisting of two levels of encoding and use the concatenation procedure defined below. Note that the scheme requires that $k_1$ divides $n_2$.

\begin{procedure}
    \label{proc:concat}
    {\normalfont (Concatenation of quantum codes, see e.g., Section 3.5 of Ref.~\cite{gottesman1997stabilizercodesquantumerror})}. Consider stabilizer codes $\cQ_1$ and $\cQ_2$ with parameters $[[n_1, k_1, d_1]]$, $[[n_2, k_2, d_2]]$ and stabilizers $\cS_1$, $\cS_2$, respectively. Then a concatenated code $\cQ$ with parameters $[[n_1n_2/k_1, k_2, d\ge d_1d_2/k_1]]$ can be constructed as follows:

    Divide the $n_2$ qubits into $n_2/k_1$ blocks of $k_1$ qubits, $B(b), b \in \{1, ..., n_2/k_1\}$. Each block of $k_1$ qubits is then encoded into $n_1$ qubits using $\cQ_1$.
    \begin{itemize}
        \item For each generator $M \in \cS_1$ and for each block $B(b)$ of $n_1$ qubits, include in $\cS$ the Pauli $M_b$ acting on the block tensored with the identity on all other blocks.
        \item Also include in $\cS$ every generator $M \in S_2$, where each single-qubit Pauli $P_i$ is replaced with the corresponding logical Pauli operator according to the mapping of $B(b)$.
    \end{itemize}
\end{procedure}

\section{Adaptive syndrome extraction}
\label{sec:adaptive}


Using Procedure~\ref{proc:concat}, we can obtain a concatenated code that is well-suited for the adaptive syndrome extraction scheme. Let $\cQ_2$ be a $[[n_2, k_2, d_2]]$, $(\Delta_V, \Delta_C)$-qLDPC square quantum expander code, and let $\cQ_1$ be a $[[n_1, k_1, d_1]] = [[n_1, n_1-2, 2]]$ Iceberg code such that $n_1-2~|~n_2$. The resulting concatenated code then has parameters $[[n,k,d]] = [[n_1 n_2 / (n_1-2), k_2, d\ge d_1 d_2 / (n_1 -2) ]]$. Out of the $(n_1n_2/(n_1-2))-k_2$ generators, $2(n_2/k_1)$ come from the Iceberg code blocks, while the outer HGP code supplies its own $n_1-k_1$ generators. For Iceberg codes, the distance lower bound provided in Procedure~\ref{proc:concat} can be improved:

\begin{theorem}
    \label{thm:naive_dist}
    Applying Procedure~\ref{proc:concat} with $\cQ_1$ as a $[[n_1, n_1-2, 2]]$ Iceberg code and $\cQ_2$ as a $[[n_2, k_2, d_2]]$ qLDPC code yields a concatenated code $\cQ$ with parameters $[[n_1n_2/(n_1-2), k_2, 2d_2 \ge d \ge d_2]].$
\end{theorem}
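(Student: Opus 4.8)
The plan is to prove both inequalities by relating the nontrivial logical operators of the concatenated code $\cQ$ to those of the outer code $\cQ_2$, exploiting that $\cQ$ and $\cQ_2$ encode the same number $k_2$ of logical qubits. The central object is the \emph{block-wise reduction map}. Given any Pauli $L$ on the $n_1 n_2/(n_1-2)$ physical qubits of $\cQ$, write $L_b$ for its restriction to the Iceberg block $B(b)$. If $L$ is a logical operator of $\cQ$, then $L$ commutes with the Iceberg stabilizers $S_X^{(b)}, S_Z^{(b)}$ --- which belong to $\cS$ --- so each $L_b$ lies in the normalizer of the block's Iceberg stabilizer group and therefore descends to a well-defined logical Pauli $\tilde L_b$ on the $k_1 = n_1-2$ outer qubits carried by that block. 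Collecting these gives a Pauli $\tilde L$ on the $n_2$ qubits of $\cQ_2$, and the requirement that $L$ commute with the lifted outer generators is exactly the requirement that $\tilde L$ commute with $\cS_2$. First I would check that this map sends stabilizers to stabilizers and nontrivial logicals to nontrivial logicals in both directions, so that nontriviality is preserved.

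For the upper bound $d \le 2 d_2$, I would exhibit an explicit low-weight logical. Because $\cQ_2$ is a CSS code, its distance is realized by a purely $X$-type or purely $Z$-type logical $\ell$ of weight $d_2$. Lifting $\ell$ through the concatenation replaces each single-qubit $X$ (resp.\ $Z$) on an outer qubit $i$ by the Iceberg logical $\overline{X}_i = X_1 X_{i+1}$ (resp.\ $\overline{Z}_i = Z_{i+1} Z_n$), each of weight $2$. Since $\ell$ is supported on $d_2$ outer qubits and each contributes at most two physical qubits, the lift is a nontrivial logical of $\cQ$ of weight at most $2 d_2$, giving $d \le 2 d_2$.

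For the lower bound $d \ge d_2$, I would take a nontrivial logical $L$ of $\cQ$, pass to its induced logical $\tilde L$ of $\cQ_2$, and use that every representative of $\tilde L$ has weight at least $d_2$. The goal is then to bound the physical weight $|L| = \sum_b |L_b|$ from below by $d_2$, which I would attempt through a per-block estimate showing that $|L_b|$ is at least the number of outer qubits on which $\tilde L_b$ acts nontrivially, and then summing over blocks. The main obstacle lies precisely here: the Iceberg logicals $\overline{X}_i$ and $\overline{Z}_i$ all share the pivot qubits $1$ and $n$, so a very light physical operator can carry a heavy logical footprint --- for instance $X_1 X_n$ implements $\overline{X}$ on all $k_1$ outer qubits of a block while having weight only $2$. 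Controlling this collapse is the crux; it requires combining the distance-$2$ property of the Iceberg code with the commutation constraints imposed by the outer checks, and the per-block bound is cleanest (and the argument most transparent) in the $[[4,2,2]]$ case $k_1 = 2$ that is the focus of this work.
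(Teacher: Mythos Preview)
Your approach matches the paper's: both argue block by block via the pivot-overlap structure of the Iceberg logicals, observing that a weight-$w$ pure-type operator lifts to physical weight $w$ (if $w$ is even) or $w+1$ (if $w$ is odd), and then reading off $d \le 2d_2$ by spreading the $d_2$ outer Paulis across distinct blocks and $d \ge d_2$ from the per-block weight comparison. The paper's proof is a brief sketch and does not explicitly confront the Iceberg-stabilizer reduction you flag (the $X_1 X_{n_1}$-type collapse); your instinct that the per-block inequality $|L_b| \ge w_b$ is clean precisely for $n_1 = 4$ --- where every nontrivial block restriction has physical weight at least $2 \ge w_b$ --- is exactly right, and is the step the paper leaves implicit.
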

\begin{proof}
    First notice that the Iceberg code logical operators, Eq.~\eqref{eq:iceberg_logicals}, overlap in exactly one location, i.e., on qubit 1 for $X$-type logicals and on qubit $n$ for $Z$-type logicals.
    Consider a weight-$w$ $X$-type Pauli operator encoded in a $[[w+2, w, 2]]$ Iceberg code. Due to the structure of the Iceberg logical operators, the encoded operator will have weight $w$ if $w$ is even; otherwise it will have weight $w+1$.
    The maximum distance of $2d_2$ is hence achieved when the $d_2$ Pauli operators coming from the HGP logical operator fall into $d_2$ distinct Iceberg code blocks, whereas the minimum distance of $d_2$ is attained if $d_2$ is even and all Pauli operators are in the same Iceberg block.    
\end{proof}
By carefully choosing the assignment of qLDPC physical qubits to Iceberg logical qubits, it may be more likely to achieve the full distance of $2d_2$, see Section~\ref{sec:assignment}.
\begin{figure*}[t]
    \centering
    \includegraphics[width=\linewidth]{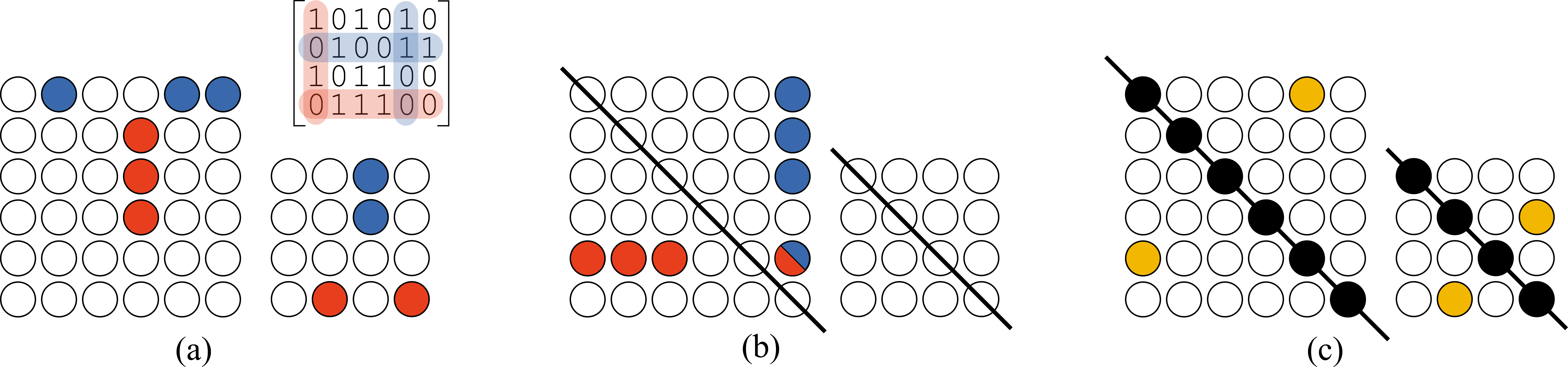}
    \caption{Partitioning the qubits of a $[[52,4,4]]$ HGP code formed from two copies of a $[6,2,4]$ classical code. (a) $Z$-type generator $S_Z(1,3)$ (blue) and $X$-type generator $S_X(4,4)$ (red) obtained by taking rows and columns of the (transpose) parity check matrix of the classical code. (b) Canonical logical operators $\overline{Z}_2$ and $\overline{X}_2$. The qubit highlighted both blue and red is the pivot qubit for that logical pair. (c) Diagonal qubits (black) and twin qubits (yellow). }
    \label{fig:canonical}
\end{figure*}

Briefly, the adaptive syndrome extraction scheme consists of two stages. The first stage measures the stabilizer generators of the inner Iceberg code blocks. Depending on the obtained measurement results, a potentially submaximal set of generators from the outer quantum expander code are measured. The remaining, unmeasured generators are assumed to yield a trivial syndrome. Decoding can then proceed in the normal manner.
Due to the fact that the syndromes of the inner code and outer code are correlated, the total number of generators that are measured can be much less than the number of generators for the concatenated code, especially in the low error rate regime.

Let $\cS_{IB}$ be the generators of the concatenated code coming from the Iceberg blocks, and let $\cS_{HGP}$ be the generators of the concatenated code coming from the HGP code. We can then restrict the full syndrome to these subsets $\sigma_{\cS_{IB}} \in \F_2^{2(n_2/k_1)} \subset \sigma$ or to a single generator, e.g. $\sigma_{\cS_{IB}}^{(i)} \in \{0, 1\}$. 
For any Pauli operator $L$, we denote its support $\text{supp}(L) \subset [n]$ as the set of qubits on which $L$ acts nontrivially. For every $S \in \cS_{IB}, ~|\text{supp}(S)| = n_1$, and for every $S \in \cS_{HGP}, ~|\text{supp}(S)| \le 2 \Delta_C$.

As the simplest example, consider a single $X$ error on the $i$th Iceberg block: assuming no measurement errors, the resulting syndrome when restricted to $\cS_{IB}$ is $\sigma_{\cS_{IB}} = \delta_{ij}$, i.e. all zeros with a single one in the $i$th bit. 
Considering the generators in $\cS_{HGP}$, we can say that
\begin{equation}
    \text{supp}(\cS_{IB}^{(i)}) \cap \text{supp}(\cS_{HGP}^{(j)}) = \emptyset \longrightarrow \sigma_{\cS_{HGP}}^{(j)} = 0,
\end{equation}
since any generator which does not have support on the qubits of the $i$th Iceberg block will not have support on the error. Only those $\cS_{HGP}^{(j)} \in \cS_{HGP}$ such that
\begin{equation}
    \text{supp}(\cS_{IB}^{(i)}) \cap \text{supp}(\cS_{HGP}^{(j)}) \neq \emptyset
\end{equation}
provide additional syndrome information which helps us locate the error within the $i$th Iceberg block. 
Hence in this example, we have determined that it is not necessary to measure all $(n_1n_2/(n_1-2))-k_2$ generators; instead, it suffices to measure only the $2(n_2/k_1)$ Iceberg generators and a small set of overlapping HGP generators.

To generalize, let us first define the following function:
\begin{definition}
    Consider a set of stabilizer generators $\cS$, we define $\varphi(\cS^{(i)})$ to be the set of stabilizer generators which share support with $\cS^{(i)}$    
    \begin{align}
        \label{eq:overlapping}
        \varphi(\cS^{(i)}) = \{ \cS^{(j)} ~|~ \text{supp}(\cS^{(i)}) \cap \text{supp}(\cS^{(j)}) \neq \emptyset \} \\ \nonumber
    \end{align}    
\end{definition}
When the outer HGP code is $(\Delta_V, \Delta_C)$-qLDPC, each Iceberg code block support overlaps with at most $n_1 \cdot \Delta_V$ HGP generators, i.e., $|\varphi(\cS_{IB}^{(i)})| \le n_1 \Delta_V$. Both $n_1$ and $\Delta_V$ are constants, so the number of HGP generators one needs to measure is proportional to the number of Iceberg blocks which have detected an error. Given the syndrome of all Iceberg blocks, $\sigma_{\cS_{IB}}$, the HGP generators which should be measured are then those which share support with Iceberg generators that have indicated an error is present in their block:
\begin{equation}
    \label{eq:hgp_set}
    \{ \varphi(\cS_{IB}^{(i)}) ~|~ \sigma_{\cS_{IB}}^{(i)} = 1 \}.
\end{equation}
This adaptive procedure makes it so that we only obtain the syndrome information necessary to diagnose the error, ultimately yielding shorter syndrome extraction circuits and less introduced noise. 
We note that while here we have described the outer qLDPC code as a HGP code, any other qLDPC code would function similarly.


\section{Canonical logical basis for Hypergraph Product Codes}
\label{sec:canonical}

The physical qubits of HGP($H,H$) can be arranged into two square grids of size $n \times n$ and $m \times m$, see Fig.~\ref{fig:canonical}. In this representation, the stabilizer generators and logical operators have a convenient geometric structure. For each physical qubit in the code, we assign a triplet $(i,j,L)$ or $(k,\ell,R)$ where $1 \le i,j \le n$ and $1 \le k,\ell \le m$. Here, $i~(k)$ denotes the row coordinate, $j~(\ell)$ denotes the column coordinate, and $L~(R)$ specifies whether the qubit is in the left $n \times n$ sector or the right $m \times m$ sector. We denote the physical qubits on the principal diagonal of each sector, i.e. $(r,r,\bullet)$, as \textit{diagonal} qubits. Additionally, for an $(r,c,\bullet)$ qubit, we identify the $(c,r,\bullet)$ qubit as its \textit{mirror} qubit, with the two together considered \textit{twin} qubits, see Fig.~\ref{fig:canonical}(c).

With this layout of the physical qubits, the stabilizer generators of the code have the property that their support is contained in a single row of one sector and a single column of the other. Specifically, $X$-type stabilizers have support on the $k$th row of the $R$ sector and the $j$th column of the $L$ sector. As such, each can be labeled by $S_X(k,j)$. Similarly, a $Z$-type generator labeled by $S_Z(i,\ell)$ has support on the $i$th row of the $L$ sector and the $\ell$th column of the $R$ sector.
The stabilizer generators as expressed in this representation can by constructed by indexing specific rows and columns of the underlying classical parity check matrix $H$~\cite{quintavalle2022reshape}, see Fig.~\ref{fig:canonical}(a), or by reshaping the HGP pcms shown in Eqs.~\eqref{eq:hgppcms},~\eqref{eq:hgppcm2}.

In this layout, we can also obtain a \textit{canonical basis} for hypergraph product code logical operators~\cite{quintavalle2023partitioning}. For these codes, a canonical basis is defined to be a set of logical operators such that they have support contained in a single row or column of a single sector. Additionally, $\overline{X}_i$ and $\overline{Z}_i$ share support on exactly one qubit, whereas $\overline{X}_i$ and $\overline{Z}_j$ for $i \neq j$ do not share any support. The physical qubit on which $\overline{X}_i$ and $\overline{Z}_i$ overlap is deemed the pivot qubit for that logical qubit. Similarly to the physical qubits, the logical qubits are either diagonal logical qubits or part of a twin logical qubit pair depending on the location of its pivot qubit.
We refer to Ref.~\cite{quintavalle2023partitioning} where they provide a method for constructing a canonical basis for square hypergraph product codes.

\subsection{Assignment of $[[4,2,2]]$ logical qubits}
\label{sec:assignment}
\begin{figure}
    \centering
    \includegraphics[width=0.65\linewidth]{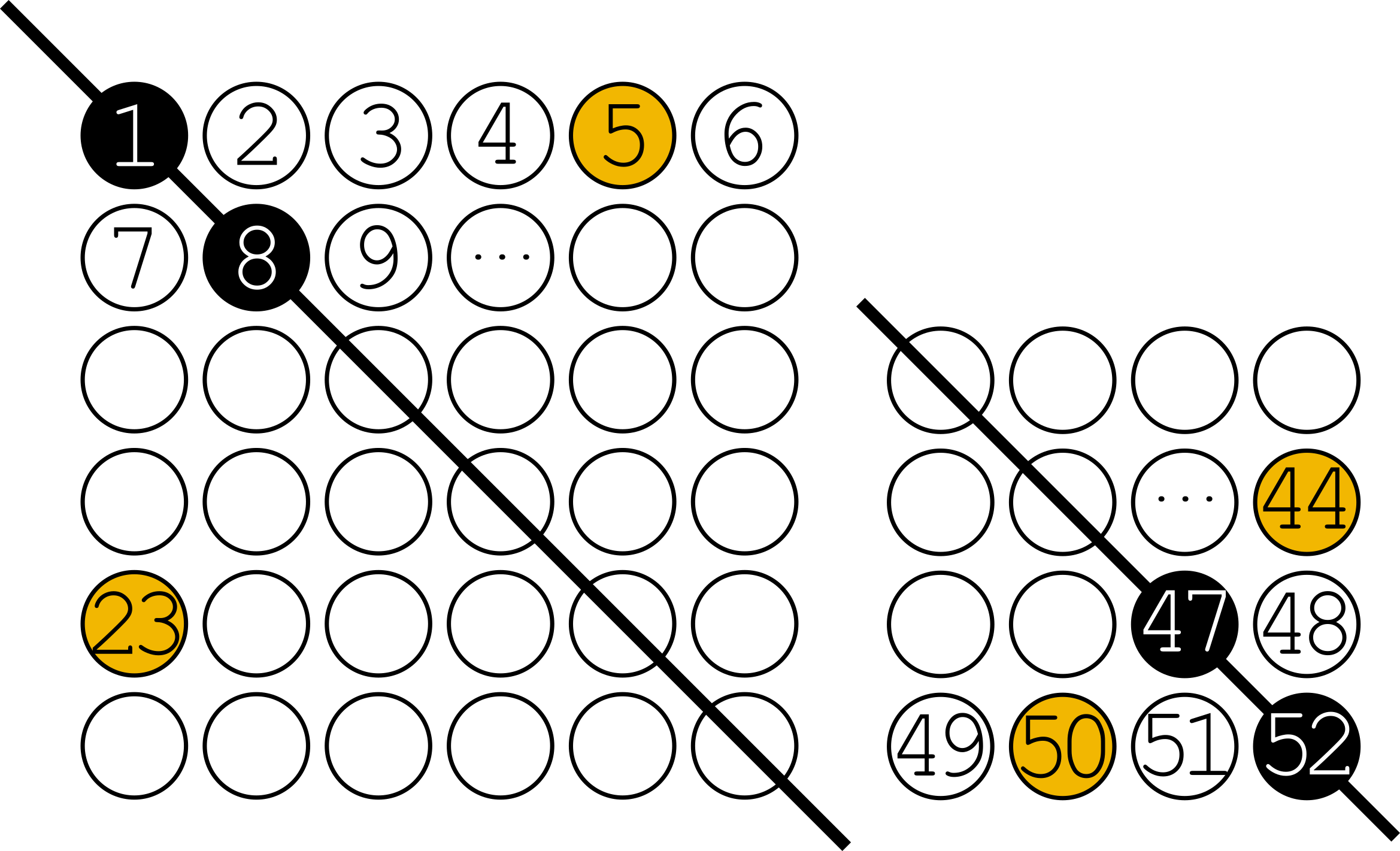}
    \caption{Assignment of physical HGP qubits to logical $[[4,2,2]]$ qubits. Physical HGP qubits are numbered starting from the top left qubit of the $L$ sector, going row-by-row, and finishing on the bottom right qubit of the $R$ sector.
    Consecutive diagonal qubits, e.g. qubits $\{1,8\}$ and $\{47,52\}$ are assigned to the same Iceberg code block. Twin qubits, e.g. qubits $\{5,23\}$ and $\{44,50\}$ are assigned to the same Iceberg block. }
    \label{fig:mapping}
\end{figure}

We now describe a method for assigning the $[[4,2,2]]$ Iceberg code logical qubits to physical qubits of a hypergraph product code which takes advantage of the aforementioned geometric structure of the stabilizer generators and logical operators. Additionally, we will later show that this assignment allows the concatenated code to inherit logical gates from the base HGP code.

The left and right sectors are labeled by the $n^2 + m^2$ physical qubits as shown in Fig.~\ref{fig:mapping}, with the first $n^2$ qubits labeling the left sector, and the remaining $m^2 = (n-k)^2$ qubits labeling the right sector. We then assign the physical qubits to the logical qubits of the $(n^2+m^2)/2$ $[[4,2,2]]$ Iceberg codes as follows: consecutive diagonal qubits, e.g. qubits $\{1,8\}$ and $\{47,52\}$ of Fig.~\ref{fig:mapping}, get mapped to the same Iceberg block. We also map twin qubits, e.g. $\{5,23\}$ and $\{44, 50\}$, to the same Iceberg code block. 

As discussed in the proof of Theorem~\ref{thm:naive_dist}, we obtain suboptimal distances whenever qubits supported on the same HGP logical operator get mapped into the same Iceberg code block. While it is possible for there to exist a low weight logical operator span several rows or columns and have support on both qubits of a single Iceberg code block, this mapping reduces the chance of that happening. 
Indeed, we verified the resulting distances of concatenated codes constructed with this mapping using QDistRnd~\cite{Pryadko_2022} and found that all instances achieved a distance of $2d_2$.
Additionally, using a square HGP code and this mapping ensures symmetric performance between the $X$ and $Z$ basis, where even if $d < 2d_2$, $d_X = d_Z = d$.

\section{Logical Computation for $[[4,2,2]]$-concatenated hypergraph product codes}
\label{sec:gates}
In Ref.~\cite{xu2024fast}, Xu \textit{et al.} developed grid Pauli product measurements (GPPMs) as a framework to perform fast and parallelizable logical computation for homological product codes.
Here, we show that this framework can be adapted for our $[[4,2,2]]$-concatenated HGP codes.
By combining the adapted GPPMs with a set of new logical gadgets that we obtain from the transversal gate sets and fold-transversal gate sets of the $[[4,2,2]]$ Iceberg code and HGP codes, respectively, we are able to show Theorem~\ref{thm:clifford_logical_ops_concatenated} and prove that we can efficiently simulate the full Clifford group on an arbitrary $[[4,2,2]]$-concatenated HGP code block.
In particular, we demonstrate how we can utilize inter- and intra-block CNOT gates, concatenated H-SWAPs, CZ-Ss, and GPPMs to fault-tolerantly simulate a layer of Clifford gates that include $H$, $S$, and intra-block CNOT gates acting on $O(k)$ logical qubits of our $[[4,2,2]]$-concatenated HGP code block with low space and time overhead.
We defer the definitions of the logical gadgets and the proof of Theorem~\ref{thm:clifford_logical_ops_concatenated} to Appendix~\ref{sec:clifford_logical_ops_concatenated_HGP_code}.

\begin{theorem}[Clifford Gates for Concatenated HGP Code]\label{thm:clifford_logical_ops_concatenated}
    A single layer of an ideal Clifford circuit on $k$ logical qubits with $\Theta(k)$ gates consisting of Hadamard, S, and CNOT gates can be simulated on a square HGP code concatenated with the $[[4,2,2]]$ Iceberg code blocks with either one of the following space-time costs:
    \begin{enumerate}
        \item $O(k)$ space and $O\left(k^{3/2}\right)$ time
        \item $O\left(k^{3/2}\right)$ space and $O(k)$ time.
    \end{enumerate}
    If the square HGP code was constructed from one-generator systematic circulant (OGSC) quasi-cyclic base codes, the concatenated code can simulate the layer with either one of the following space-time costs:
    \begin{enumerate}
        \item $O(k)$ space and $O\left(k^{5/4}\right)$ time
        \item $O\left(k^{3/2}\right)$ space and $O\left(k^{3/4}\right)$ time.
    \end{enumerate}
\end{theorem}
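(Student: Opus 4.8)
The plan is to reduce the problem to the grid Pauli product measurement (GPPM) framework of Ref.~\cite{xu2024fast} for the bare HGP code, and then to show that every primitive that framework uses lifts through the inner $[[4,2,2]]$ encoding at only constant multiplicative overhead. The logical qubits of the concatenated code are exactly the $k$ logical qubits of the outer HGP code, arranged on the two square grids of Section~\ref{sec:canonical}, while the Iceberg blocks play the role of the HGP physical qubits under the assignment of Section~\ref{sec:assignment}. Thus a GPPM, which measures a product of outer-code logical Paulis supported along a grid line, becomes a product of Iceberg-block \emph{logical} Paulis, and the whole computation is scheduled exactly as in the bare-code case on a $\sqrt{k}\times\sqrt{k}$ grid.

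First I would assemble the transversal and fold-transversal gadget library. The $[[4,2,2]]$ code supplies a transversal $H^{\otimes 4}$ realizing a logical Hadamard together with a swap of its two encoded qubits (the H-SWAP), a transversal diagonal gate realizing a logical $CZ$ dressed by single-qubit $S$'s (the CZ-S), and a transversal CNOT realizing block-wise logical CNOTs. I would compose each of these, block by block, with the corresponding fold-transversal symmetry of the square HGP code (the reflection across the grid diagonal underlying its fold-transversal $H$/$S$), producing \emph{concatenated} H-SWAP and CZ-S gates whose net logical action is a global Hadamard or phase composed with the diagonal reflection permutation. Intra-block CNOTs come for free from the Iceberg transversal CNOT, while arbitrary inter-block CNOTs and \emph{selective} single-qubit gates are obtained by combining these global gadgets with GPPMs: one applies the global gadget and then uses grid-line measurements to teleport or restore the logical qubits that were not meant to be acted on, as in Ref.~\cite{xu2024fast}. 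The ancilla patches consumed by the GPPMs are prepared fault-tolerantly in a single shot by adapting the state-preparation routine of Ref.~\cite{hong2024single} to the concatenated code.

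With the gadget library in hand, I would decompose an arbitrary Clifford layer of $\Theta(k)$ gates into (i) a bounded number of global concatenated H-SWAP and CZ-S sweeps that fix the single-qubit content, and (ii) a CNOT/permutation network routed on the grid by GPPMs. Routing such a network reduces to $O(\sqrt{k})$ rounds of one-dimensional permutations along the $\sqrt{k}$ grid lines, and each GPPM must be repeated $O(d)=O(\sqrt{k})$ times for fault tolerance. Processing the grid lines serially gives the $O(k)$-space, $O(k^{3/2})$-time point, whereas processing them in parallel with $O(\sqrt{k})$ simultaneous ancilla regions gives the $O(k^{3/2})$-space, $O(k)$-time point; the constant overhead of the Iceberg encoding leaves these asymptotics unchanged. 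For OGSC quasi-cyclic base codes I would exploit the cyclic automorphism group, whose shifts act as cheap logical permutations, to replace the generic one-dimensional routing within each grid line by a shift-based (baby-step/giant-step) schedule, cutting the per-line round count from $O(\sqrt{k})$ to $O(k^{1/4})$ and yielding the improved $O(k^{5/4})$- and $O(k^{3/4})$-time variants.

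The main obstacle I expect is not the asymptotic bookkeeping but verifying \emph{correctness and fault tolerance of the composed gadgets}. I must check that the Iceberg transversal gate and the HGP fold-transversal symmetry multiply to exactly the intended permutation-plus-Clifford, with no stray $CZ$'s or unwanted swaps surviving once the pivot-qubit structure of Section~\ref{sec:canonical} is accounted for; that measuring an outer-code logical Pauli through the Iceberg encoding keeps the effective distance at $\Theta(d_2)$ rather than collapsing to the Iceberg distance $2$, which is precisely where Theorem~\ref{thm:naive_dist} and the twin/diagonal assignment are needed; and that the single-shot ancilla preparation remains single-shot after concatenation. Establishing the $k^{1/4}$ routing speedup for the OGSC case---that the available circulant shifts genuinely compress the permutation network---is the other delicate point, and is what I anticipate will require the most careful argument.
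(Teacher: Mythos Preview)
Your high-level strategy---lift the GPPM framework of Ref.~\cite{xu2024fast} through the Iceberg encoding at constant multiplicative overhead---is exactly the paper's approach, and your identification of the concatenated H-SWAP and CZ-S gadgets is correct. However, two genuine gaps remain.

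First, you misattribute the space-time tradeoff. The two cost points in the theorem do \emph{not} come from processing grid lines serially versus in parallel. They come from the choice between ordinary CSS ancilla-state preparation (requiring $O(d)=O(\sqrt{k})$ repeated syndrome rounds per logical cycle) and the single-shot preparation of Ref.~\cite{hong2024single} (requiring $O(1)$ rounds but a thickened ancilla of $O(d)\cdot O(k)=O(k^{3/2})$ qubits). The dominant gadget cost is $O(k)$ logical cycles for intra-block CNOTs, targeted Hadamards, or targeted $S$ gates (or $O(k^{3/4})$ in the OGSC case); multiplying by the per-cycle cost yields the stated bounds. You mention adapting Hong's scheme, but to make it work for the \emph{concatenated} code one must prove that the thickened concatenated code retains good soundness and that the concatenated code itself has linear confinement. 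These are the paper's main technical lemmas (Lemmas~\ref{lem:soundness_transposed_concatenated_code}--\ref{lem:linear_confinement_concatenated_code}), and without them the single-shot option---hence the second cost point in each pair---is unjustified.

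Second, your treatment of the OGSC logical translation is too optimistic. The cyclic automorphism of the base HGP code permutes its physical qubits, but under the twin/diagonal Iceberg assignment of Section~\ref{sec:assignment} such a permutation generically sends a twin pair to two qubits that are no longer twins, destroying the concatenation structure and invalidating every subsequent gadget. The paper resolves this by teleporting each Iceberg logical qubit individually into a fresh $\ket{\overline{0+}}$ or $\ket{\overline{+0}}$ ancilla block, reassigning according to the desired permutation, and then merging the new pairs back together; this is constant-depth but is precisely the non-obvious step enabling the $O(k^{3/4})$ bound. Your claim that ``shifts act as cheap logical permutations'' assumes away this difficulty. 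A smaller point: your usage of ``intra-block'' and ``inter-block'' appears to refer to Iceberg blocks, whereas the relevant blocks here are concatenated-HGP code blocks; intra-block CNOTs between HGP logical qubits are the expensive ones, implemented via CGPPMs, and do not ``come for free from the Iceberg transversal CNOT.''
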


Our technical contributions include the following: adapting the GPPM scheme of Ref.~\cite{xu2024fast}, developing fault-tolerant logical gadgets, and proving the amenability of a single-shot state preparation scheme for our $[[4,2,2]]$-concatenated HGP codes.
The main challenge in adapting the GPPM scheme lies in resolving the tension between the punctures introduced in the GPPM scheme and the assignment of Iceberg code logical qubits as part of our code concatenation scheme.

The fault-tolerant logical gadgets have to be formulated with the constraints of both the HGP and Iceberg codes in mind.
In particular, we adapt the logical translation gadget introduced in Ref.~\cite{xu2024fast} for square HGP codes constructed from quasi-cyclic one-generator systematic circulant (OGSC) codes.
We reduce the problem to the permutation of arbitrary single logical qubits between Iceberg code blocks and use measurement-based logical SWAP gates to construct the logical translation gadget for our $[[4,2,2]]$-concatenated HGP code.
We provide explicit constructions for the other fault-tolerant logical gadgets for the case where we do not have the logical translation gadget.
We direct readers to Ref.~\cite{xu2024fast} for the case where the logical translation gadget is available, i.e., the square HGP codes are constructed from quasi-cyclic OGSC codes.  

By bounding the soundness and confinement factors for the $[[4,2,2]]$-concatenated HGP code, we prove that our concatenated HGP code is still amenable to the single-shot state preparation scheme for HGP codes proposed by Hong in Ref.~\cite{hong2024single}, giving us the ability to implement all of our logical gadgets in $O(1)$ logical cycles at the expense of additional space.
This trade-off comes from choosing between a single-shot state preparation scheme and a regular CSS code state preparation scheme and gives rise to the two possible space-time costs for each of the two different logical gadget constructions stated in Theorem~\ref{thm:clifford_logical_ops_concatenated}.

Because we can reproduce the logical gadgets in Ref.~\cite{xu2024fast} for our $[[4,2,2]]$-concatenated HGP code, we can use the GPPM scheme to perform the ``8-to-CCZ'' magic state distillation protocol and implement the non-Clifford gates in parallel.
Combining the magic state distillation and consumption protocol with the Clifford gate simulation, we obtain a fault-tolerant protocol for universal logical computation.

\section{Numerical simulations}
\label{sec:numerics}

To quantify the effectiveness of the adaptive QED + QEC scheme, we perform memory experiment simulations. We provide additional details on the codes, decoders, and simulation in Appendix~\ref{apx:sims}.

\subsection{$[[4,2,2]]$-concatenated decoding}
\label{sec:decoding}

Several other codes utilize concatenation where the lowest level code consists of the $[[4,2,2]]$ code: the $C_4/C_6$ code~\cite{c4c6Knill2005}, the $C_4$/Steane code~\cite{yoshida2024concatenatecodessavequbits}, many-hypercube code~\cite{goto2024manyhypercubecodeshighratequantum}, and the $[[4,2,2]]$-Toric code~\cite{criger2016}.
However, in almost all of these works, error correction is done using Knill-style syndrome extraction~\cite{knill2005}. This method is not compatible with an adaptive syndrome extraction protocol, as there is no opportunity to `short-circuit' the syndrome extraction. We instead stick to Shor-style syndrome extraction~\cite{Shor1996}, and as a consequence we are not able to use hard- and soft-decision decoding~\cite{c4c6Knill2005, goto2013}, symbol-MAP decoding, or level-by-level minimum distance decoding~\cite{goto2024manyhypercubecodeshighratequantum}. 

In this work, we decode the $[[4,2,2]]$-concatenated HGP code by first attempting to address errors in the inner $[[4,2,2]]$ code blocks before decoding the outer HGP code. Given that the inner code is only error-detecting, we are unable to applied tailored corrections; instead, we apply the same correction whenever an Iceberg code block detects an error. As a reminder, the generators and logical operators for the $[[4,2,2]]$ code are shown below.

\begin{align}
    S_X = X_1X_2X_3X_4 & \quad S_Z = Z_1Z_2Z_3Z_4 \\
    \overline{X}_1 = X_1X_2 & \quad \overline{Z}_1 = Z_2Z_4\\
    \overline{X}_2 = X_1X_3 & \quad \overline{Z}_2 = Z_3Z_4 
\end{align}

We now go through the process for correcting $Z$-type errors. When decoding the inner $[[4,2,2]]$ codes, we apply the correction $Z_4$ on each code block that has detected an error. The effect of this correction is:

\begin{align}
    \label{eq:corr1}
    Z_1 &\rightarrow Z_4 \rightarrow Z_1Z_4 = \overline{Z}_1 \overline{Z}_2 \cdot S_Z \\
    Z_2 &\rightarrow Z_4 \rightarrow Z_2Z_4 = \overline{Z}_1 \\
    Z_3 &\rightarrow Z_4 \rightarrow Z_3Z_4 = \overline{Z}_2 \\
    \label{eq:corr4}
    Z_4 &\rightarrow Z_4 \rightarrow I     
\end{align}

Hence we have successfully turned every single qubit physical error into zero, one, or two logical errors on the outer HGP code. Additionally, since $Z_4$ does not overlap with the $X$-type logical operators, it ensures that 1. the syndrome is consistent between decoding stages, and 2. errant corrections do not (immediately) affect the logical state of the $[[4,2,2]]$-concatenated HGP code. This provides a built-in robustness to measurement errors on the Iceberg code blocks.
The second decoding stage treats the residual logical errors on the inner $[[4,2,2]]$ code blocks as physical errors on the outer HGP code. Decoding the outer code provides a \textit{logical} correction, which we then translate into a physical correction to apply to the system. The decoding process for $X$-type errors is analogous, but we instead apply $X_1$ as the correction for the inner Iceberg codes. 

It has been shown that the decoding performance of concatenated codes can be improved by passing soft information between levels~\cite{Poulin_2006, meister2024efficientsoftoutputdecoderssurface}. In particular, observing a nonzero syndrome for an Iceberg block informs us that an error is much more likely to be located on those four physical qubits. After the application of the initial correction, Eqs.~\eqref{eq:corr1}-\eqref{eq:corr4}, this translates to an increased probability of there being a logical error on the two corresponding logical qubits.  
To take advantage of this extra information, we can update the channel probabilities given to the HGP decoder. Out of the four error scenarios, both $\overline{Z}_1$ and $\overline{Z}_2$ see an error in two of them; thus we can set the probability of having a logical error at 0.5 for any logical qubit of an Iceberg block detecting an error. The probabilities for all other logical qubits are set to $p^2$. 
The decoding performance could potentially be improved further by using an overlapping window~\cite{huang2024} or circuit-level decoder~\cite{wang2011, xu2023constantoverhead, gong2024lowlatencyiterativedecodingqldpc}. We leave it to future work to adapt the concatenated decoding scheme presented here to these methods.

\subsection{Memory experiments}

The memory experiments consist of a noiseless initialization of the code in the joint $\ket{\overline{0...0}}$ state, $r$ rounds of noisy syndrome extraction and correction, and a destructive measurement of the data qubits. From the destructive measurement, a final (noiseless) syndrome as well as the value of the logical qubits can be constructed. Decoding is considered a success if the logical qubits are measured in the $\ket{\overline{0...0}}$ state, otherwise we say a logical error has occurred. The logical error rate, $p_L$, is the fraction of samples where at least one logical qubit experiences a logical error. We also consider the logical error rate per round,
\begin{equation}
    \label{eq:ler_per_round}
    \epsilon_L = 1 - (1 - p_L(r))^{1/r},
\end{equation}
where $p_L(r)$ is the observed logical error rate at round $r$.

\begin{figure}
    \centering
    \includegraphics[width=0.7\linewidth]{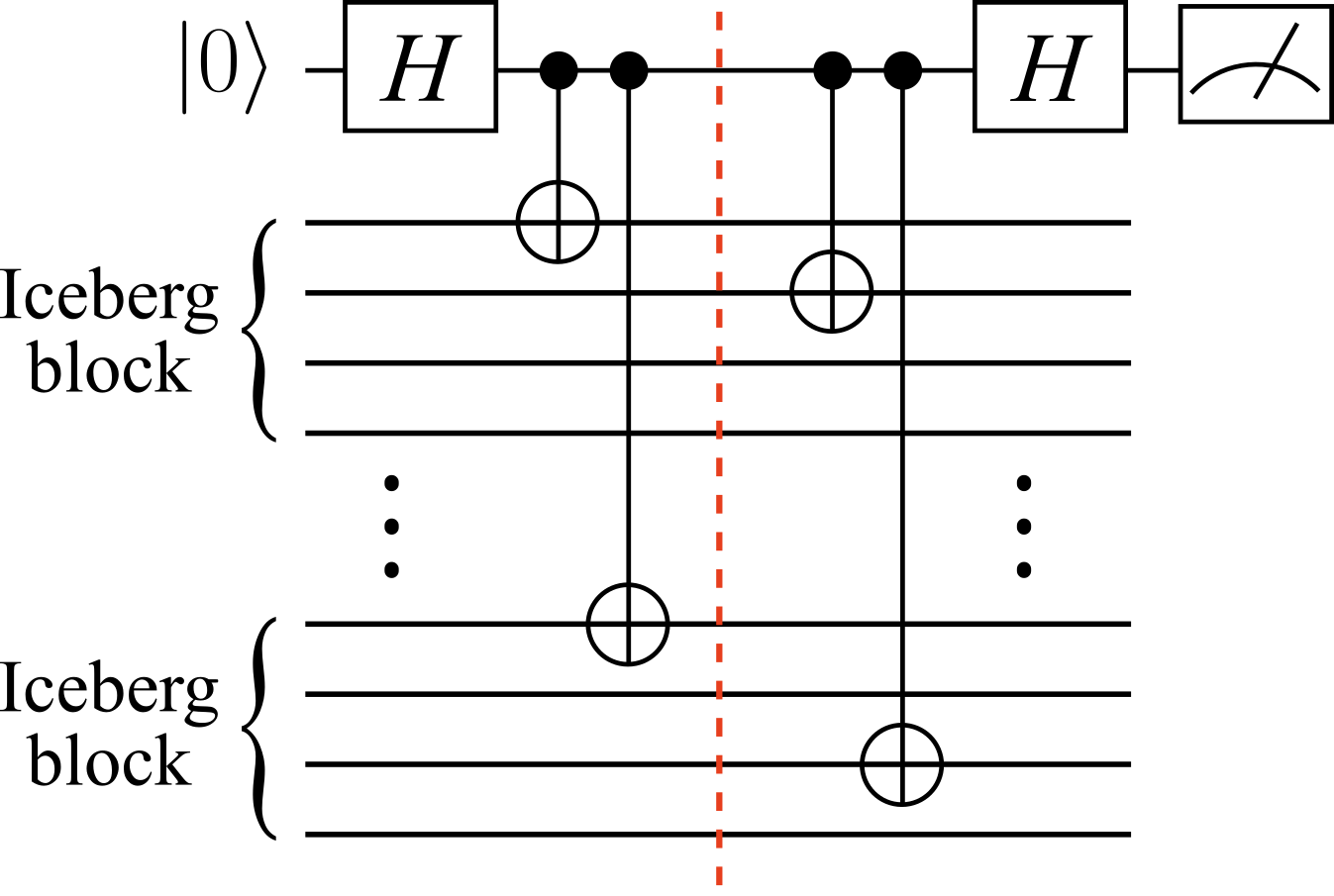}
    \caption{Circuit to measure an $X$-type concatenated HGP generator. A hook error that occurs in the middle of the circuit will propagate to a large number of detectable errors. }
    \label{fig:hook_errors}
\end{figure}

Our simulations use a circuit-level noise model that is parameterized by a noise strength $p$ and consists of the following: One-qubit Clifford gates are followed by a one-qubit depolarizing noise channel of strength $p/10$; two-qubit Clifford gates are followed by a two-qubit depolarizing noise channel of strength $p$; measurement results are flipped with probability $p$; qubit reset operations have probability $p$ of preparing the $\ket{1}$ state instead of the $\ket{0}$ state; and, idle qubits have a one-qubit depolarization noise channel of strength $p/10$ applied to them. This noise model, where single-qubit and idling errors are reduced, is physically relevant for ion-trap~\cite{ryananderson2021, moses2023, dasilva2024, reichardt2024} and neutral-atom~\cite{xu2023constantoverhead, Bluvstein_2023, reichardt2024_2} quantum computers.

To measure the stabilizer generators of the Iceberg code, we use the fault-tolerant syndrome extraction circuit from Ref.~\cite{Self_2024}. The concatenated and non-concatenated HGP generators are measured using a bare ancilla qubit and a circuit derived from an edge coloration of the Tanner graph~\cite{delfosse2021boundsstabilizermeasurementcircuits}, on which we
can apply accurate idling errors. 
Whereas HGP codes enjoy a robustness to hook errors~\cite{Dennis_2002} regardless of the CNOT gate ordering during syndrome extraction~\cite{manes2023, tan2024}, we have to be careful with the CNOT gate ordering for the concatenated HGP generators. For certain orderings, a single error in the middle of the circuit will propagate to a large number of logical Iceberg errors. This is especially detrimental in the adaptive scheme as these logical errors will not be detected by the Iceberg blocks, and so the overlapping HGP generators will not be measured.
As such, we employ the circuit shown in Fig.~\ref{fig:hook_errors}: By measuring the first qubit of each Iceberg logical before the second, we ensure that an error in the middle of the circuit will propagate to a large number of detectable errors.

\begin{figure*}[t]
    \centering
    \includegraphics[width=\linewidth]{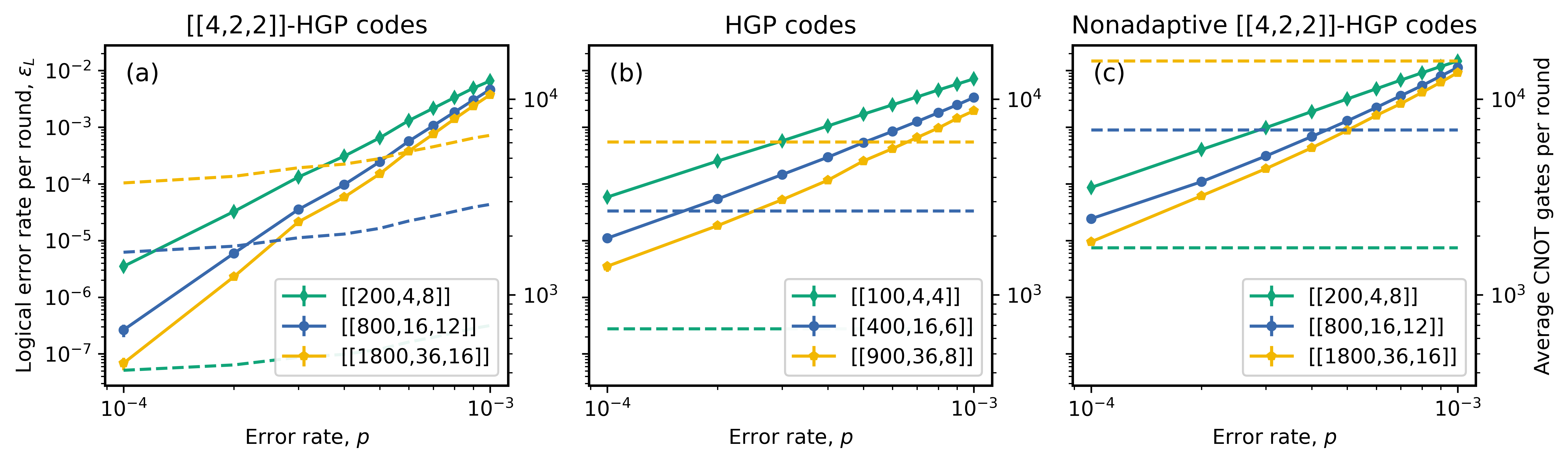}
    \caption{Logical error rate per round, $\epsilon_L$ as a function of the physical error rate, $p$. (a) $[[4,2,2]]$-concatenated HGP codes where the syndrome extraction circuit is adaptive. (b) Non-concatenated HPG codes using normal syndrome extraction. (c) $[[4,2,2]]$-concatenated HGP codes where the syndrome extraction circuit is not adaptive; that is, every generator is measured in every round. Error bars indicate the standard deviation on $\epsilon_L$, Eq.~\eqref{eq:ler_per_round_error}. The dashed lines and right y-axis show the number of CNOT gates in the syndrome extraction circuit averaged over the $r = 100$ rounds. }
    \label{fig:memory_circuit}
\end{figure*}

We initially found that the $[[4,2,2]]$-concatenated HGP codes implemented using the adaptive scheme were not single-shot~\cite{Bomb_n_2015}, see Fig.~\ref{fig:ft_ler_per_round}(b). Whereas the HGP codes exhibit a logical error rate per round that stabilizes with increasing $r$, we see no such behavior out of the concatenated codes. On the contrary, we no longer observe a pseudothreshold when increasing $r$.   
The main cause of this behavior is most likely an accumulation of logical errors in the Iceberg code blocks;
in a normal concatenated QEC scheme this is preventable as these logical errors are detected by the concatenated HGP generators in subsequent rounds. In the adaptive scheme, however, the logical errors are not detected, and the corresponding HGP generators are not measured, hence leading to an overwhelming build-up of errors.
To alleviate this, we measure the entire set of generators every $r'$ rounds to pick up residual logical errors. This \textit{unmasking}~\cite{Berthusen_2023} frequency is set at every 10 rounds for $p=0.1\%$ and scales inversely with $p$.

With this change we now observe single-shot performance out of the concatenated codes, with the logical error rate per round stabilizing by approximately round $r = 100$, see Fig.~\ref{fig:ft_ler_per_round}(a). 
As such, we perform $r = 100$ rounds of QEC in the following memory experiments. 
After each syndrome extraction, we attempt to guess a correction with only the syndrome information from that round. The specific decoder used depends on the current QEC round.
On rounds $r < 100$, we use only the Belief Propagation (BP)~\cite{mackay1997} decoder. Using the noiseless syndrome constructed from the destructive measurement of the data qubits, we instead use belief propagation and localized statics decoding (BP-LSD)~\cite{Roffe_LDPC_Python_tools_2022, hillmann2024localizedstatisticsdecodingparallel} to apply a final correction before verifying the logical state.
The decoding process for the $[[4,2,2]]$-concatenated HGP codes is the same except for in that we follow Sec.~\ref{sec:decoding} to first convert physical errors into logical errors.

\subsubsection{Quantum expander codes}

We first investigate a family of quantum expander codes~\cite{Leverrier_2015}, square hypergraph product codes where $H$ is a random $(3,4)$-regular classical code. 
Fig.~\ref{fig:memory_circuit} displays the logical error rate per round $\epsilon_L$ as a function of $p$ for the HGP codes and the corresponding $[[4,2,2]]$-concatenated HGP codes.
The dashed lines and right y-axis of Fig.~\ref{fig:memory_circuit} show the number of CNOT gates in the syndrome extraction circuit averaged over the $r = 100$ rounds.

At high error rates, the adaptive scheme is outperformed by the non-concatenated HGP codes.
Additionally, we find a lower pseudothreshold when using the adaptive scheme, see Fig.~\ref{fig:pseudothreshold}. 
However, we see over an order of magnitude performance improvement using the adaptive scheme at low error rates. This is the regime in which we expect the $[[4,2,2]]$-concatenated codes and the adaptive scheme to work the best; when the errors are infrequent, the overlapping HGP generators are not measured as often, hence leading to fewer applied gates and less induced circuit noise.
The non-adaptive concatenated codes perform slightly worse than their non-concatenated counterparts and significantly worse than their adaptive counterparts while using twice as many qubits and more CNOT gates. 

All Iceberg-concatenated qLDPC codes have a minimum CNOT gates per round of $2n$, i.e. just measuring the two weight-$n$ Iceberg generators. 
For $(3,4)$-regular quantum expander codes, which are $(4,7)$-qLDPC, this means we achieve a maximum $\sim 2 \times$ reduction in CNOT gate count. 
Using qLDPC codes with higher weight generators would provide better potential savings, but a large stabilizer weight poses problems for physical implementations~\cite{hastings2023quantumweightreduction, tan2024}. 
Note that despite the reduced CNOT count, the circuit depth is often not lower for the $[[4,2,2]]$-concatenated codes. With all-to-all connectivity and an edge coloring of the Tanner graph, the syndrome extraction for a $(\Delta_V, \Delta_C)$-qLDPC code can be done in depth $2\Delta_C +3$~\cite{delfosse2021boundsstabilizermeasurementcircuits}. 
The corresponding $[[4,2,2]]$-concatenated codes are $(2\Delta_V+1, 2\Delta_C)$-qLDPC and require a circuit depth of $4\Delta_C+8$ to measure all stabilizer generators. The additional 5 layers come from the adaptive scheme and the separate measurements of the Iceberg generators.
Hence except on rounds where no Iceberg blocks detect an error, the concatenated codes require deeper syndrome extraction circuits. 
This is true even when a single Iceberg block detects an error, in which case the remaining blocks must sit idle until syndrome extraction has completed; however it may be feasible for an intelligent, real-time circuit compiler~\cite{fang2023dynamicquantumcircuitcompilation} to recognize this and perform gates on the unused blocks at the same time.

However, this circuit-depth analysis only holds on quantum computers which can implement arbitrarily many two-qubit gates in parallel: for architectures unable to achieve maximum parallelization, the CNOT count directly impacts the QEC cycle time. 
In particular, trapped-ion quantum computers typically have many more qubits than gating abilities, with some architectures containing a small number of zones in which gates can be performed~\cite{Pino2021, moses2023} and others requiring completely serial execution. 
An additional scenario in which CNOT count might play an outsized role is if we needed to first compile to a device without all-to-all connectivity, where the implementation of nonlocal two-qubit gates would introduce overhead and reduce parallelization.
We further discuss this latter setting in Sec.~\ref{sec:2dlocal}.

\subsubsection{La-cross codes}

\begin{figure}[t]
    \centering
    \includegraphics[width=\linewidth]{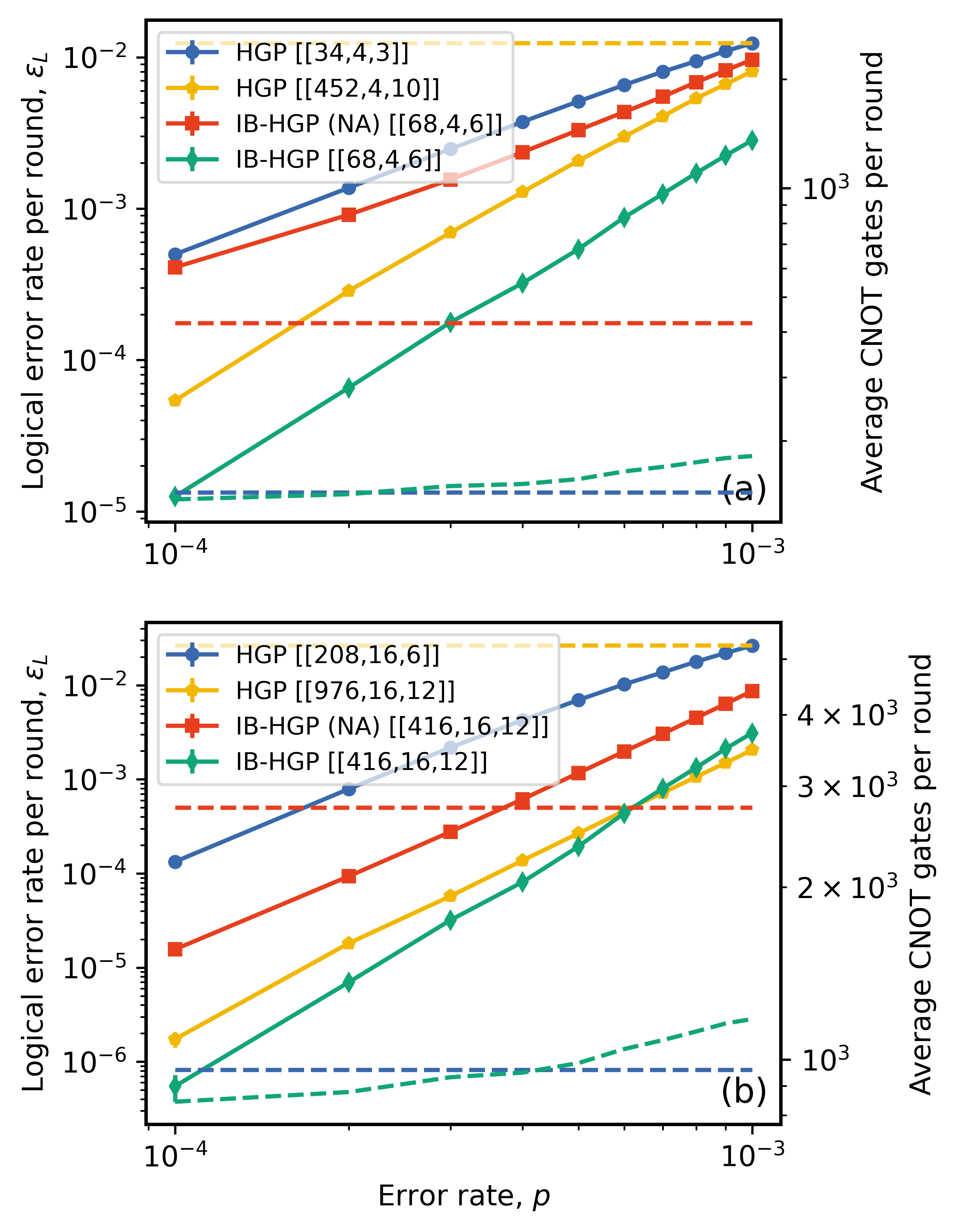}
    \caption{Logical error rate per round $\epsilon_L$ as a function of the physical error rate, $p$ for a family of (a) $z = 2$ La-cross codes and the corresponding $[[4,2,2]]$-concatenated La-cross codes (b) $z = 4$ (concatenated) La-cross codes. The $[[4,2,2]]$-concatenated codes are labeled by IB-HGP. We also show a concatenated code which was measured using a non-adaptive syndrome extraction circuit, labeled as IB-HGP (NA). The dashed lines and right y-axis show the number of CNOT gates in the syndrome extraction circuit averaged over the $r = 100$ rounds. }
    \label{fig:lacross}
\end{figure}

We also investigate several families of small La-cross codes~\cite{pecorari2024highratequantumldpccodes}, hypergraph product codes where $H$ is a circulant matrix. See Appendix~\ref{apx:lacross} for the specific code construction. Fig.~\ref{fig:lacross} displays the logical error rate per round $\epsilon_L$ as a function of the physical error rate $p$ for the La-cross codes and the corresponding $[[4,2,2]]$-concatenated La-cross codes. 

\setlength{\tabcolsep}{0.4em} 
\renewcommand{\arraystretch}{1.2}
\begin{table}
    \centering
    \begin{tabular}{||c|cc|ccc||}
        \toprule
        Code     & $\overline{q}$ & $\overline{w}$  & $p$ & $\overline{q}_{adapt}$ & 
        $\overline{w}_{adapt}$  \\ 
        \midrule
        $[[100,4,4]]$  & 6.72 & 7 &-&-&-   \\
        $[[400,16,6]]$ & 6.72 & 7 &-&-&-   \\
        \hline
        $[[200,4,8]]$  & 8.72 & 8.90 & $10^{-3}$ & 3.62 & 5.76 \\
        -  & - & - & $10^{-4}$ & 2.06 & 4.08 \\
        \hline \hline
        $[[208,16,6]]$ & 4.61 & 5.0 &-&-&- \\
        $[[400,16,8]]$ & 5.04 & 5.25 &-&-&- \\
        \hline
        $[[416,16,12]]$ & 6.62 & 6.88 & $10^{-3}$ & 2.83 & 4.86 \\
        -  & - & - & $10^{-4}$ & 2.03 & 4.04 \\
        \bottomrule
    \end{tabular}
    \caption[Average check weight, $\overline{w}$, and average number of generators each qubit participates in, $\overline{q}$, for several HGP and concatenated-HGP codes.]{Average check weight, $\overline{w}$, and average number of generators each qubit participates in, $\overline{q}$, for several HGP and $[[4,2,2]]$-concatenated HGP codes. We also show average check weight $\overline{w}_{adapt}$ and average qubit degree $\overline{q}_{adapt}$ obtained from averaging over $r=100$ rounds of adaptive syndrome extraction at two physical error rates, $p$.}
    \label{table:qwr}
\end{table}

For these codes, the adaptive scheme outperforms the non-concatenated codes by nearly an order of magnitude across the entire range of physical error rates. 
We were not able to scale up the $k=4$ non-concatenated La-cross codes to match the performance of the $[[4,2,2]]$-concatenated codes. Larger, $k=16$ non-concatenated La-cross were able to nearly match the performance of the concatenated codes and the adaptive scheme; however, they required over twice as many physical qubits and $\sim 5 \times$ more CNOT gates.
We again see that the non-adaptive codes perform comparatively worse, indicating that the adaptive scheme itself provides benefits to the logical error rate. 

\begin{figure*}
    \centering
    \includegraphics[width=\linewidth]{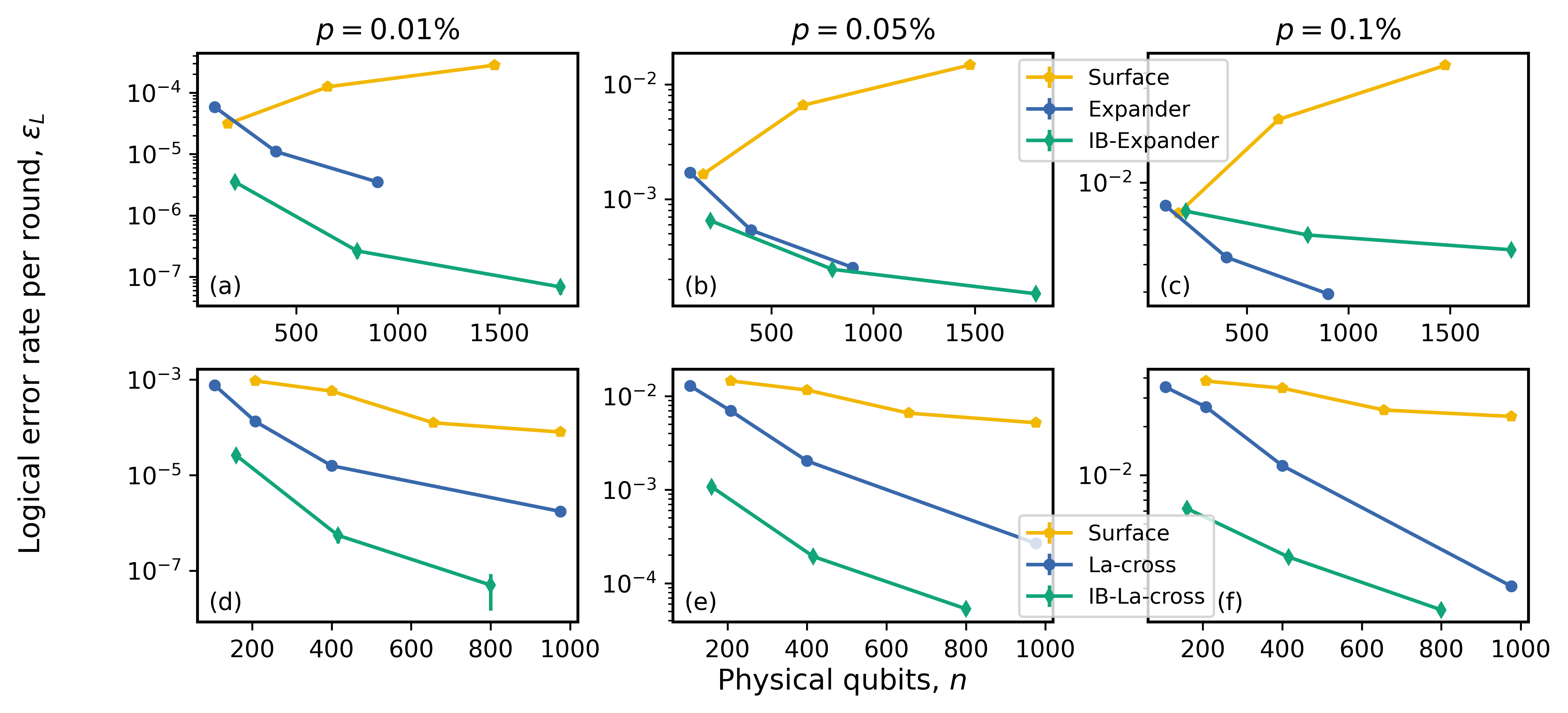}
    \caption{Logical error rate per round, $\epsilon_L$ as a function of the number of physical qubits. 
    Panels (a)-(c) display quantum expander codes and $[[4,2,2]]$-concatenated quantum expander codes (labeled IB-Expander). As a point of reference, we also include the performance of multiple surface code patches encoding $k = 4,~16,~36$ logical qubits. Using $4 \times [[41,1,5]], ~16\times[[41,1,5]]$, and $36\times[[41,1,5]]$ yield blocklengths similar to the IB-expander codes.
    Panels (d)-(f) display $z=4$ La-cross codes and $[[4,2,2]]$-concatenated La-cross codes (labeled IB-La-cross). Also shown for comparison are $k=16$ copies of the surface code, the distances of which were chosen to yield codes that approximately matched the blocklength of the IB-La-cross codes. Specifically, we used 16 patches of the $[[13,1,3]],~[[25,1,4]],~[[41,1,5]$, and $[[61,1,6]]$ surface codes, respectively.}
    \label{fig:overhead}
\end{figure*}

To illustrate how adaptive syndrome extraction can function as a quantum weight reduction scheme in the space-time picture, we display in Table~\ref{table:qwr} the average check weight $\overline{w}$ and the average number of generators each qubit participates in $\overline{q}$. 
When implemented using non-adaptive syndrome extraction, the $[[4,2,2]]$-concatenated HGP codes have higher weight checks and its qubits are involved in more checks than the corresponding non-concatenated codes.
However, if we use adaptive syndrome extraction and average over many rounds, we find large reductions in $\overline{q}$ and $\overline{w}$.
At low error rates, only the two Iceberg generators are measured, yielding $\overline{q}_{adapt}$ and $\overline{w}_{adapt}$ that approach 2 and 4, respectively. This is even better than notable topological codes like the surface code~\cite{bravyi1998quantum, Kitaev_2003}, with $\overline{q} = \overline{w} = 4$ or the color code~\cite{Bombin_2006}, with $\overline{q} = \overline{w} = 6$. Note, however, that the maximum stabilizer weight for the $[[4,2,2]]$-concatenated codes is still much larger at 14; it is just that when averaged over many syndrome extraction rounds, a majority of the generators that get measured come from the Iceberg code blocks.

To further emphasize the potential savings achievable by using the adaptive scheme, we plot in Fig.~\ref{fig:overhead} the logical error rate per round $\epsilon_L$ versus the number of physical qubits at three different physical error rates, $p = ~0.01\%, ~0.05\%, ~0.1\%$. The data shown here is the same as that displayed in Fig.~\ref{fig:memory_circuit} and Fig.~\ref{fig:lacross}.
Panels (a)-(c) display the performance of quantum expander codes. In this representation, it is clear that the adaptive scheme does not provide any benefits above $p = 0.05\%$; however, at $p = 0.01\%$ we observe a tradeoff of fewer logical qubits but lower logical error rates.
As a point of reference, we also include the performance of $r = 100$ rounds of syndrome extraction for surface codes encoding $k = 4, ~16$, and $36$ logical qubits while using approximately the same number of physical qubits as the $[[4,2,2]]$-concatenated quantum expander codes.
To obtain the logical error rate for $k$ surface code patches, we calculate
\begin{equation}
    p_{L,k} = 1 - (1 - p_{L,1})^k,
\end{equation}
where $p_{L,1}$ is the logical error rate for a single patch. We can then compute $\epsilon_L$ using Eq.~\eqref{eq:ler_per_round}.
Quantum expander codes have a constant encoding rate which means that the surface codes are not allowed to scale, hence leading to an increasing $\epsilon_L$ when encoding more logical qubits. 
Panels (d)-(f) display the performance of $z=4$ La-cross codes encoding $k=16$ logical qubits. Here we now see the $[[4,2,2]]$-concatenated codes and the adaptive scheme producing lower logical error rates while also using fewer physical qubits across the entire range of physical error rates. Also shown for comparison are $k=16$ copies of the surface code, the distances of which were chosen to yield codes that approximately matched the blocklength of the $[[4,2,2]]$-concatenated La-cross codes. We note that La-cross codes were already shown to outperform surface codes in Ref.~\cite{pecorari2024highratequantumldpccodes}.


\section{Applications}
\label{sec:applications}

In this section, we propose specific use-cases where $[[4,2,2]]$-concatenated codes and the adaptive scheme might further outperform their non-concatenated, non-adaptive counterparts.

\subsection{2D-local implementations of nonlocal codes}
\label{sec:2dlocal}

One potential application of $[[4,2,2]]$-concatenated codes and the adaptive scheme is in implementations of nonlocal qLDPC codes on restricted architectures. 
There is a close relationship between \textit{locality} and the parameters and properties of a quantum error correcting code.
After embedding the Tanner graph of a quantum code into $\mathbb{Z}^2$, a code is considered 2D-local if the syndrome extraction circuits can be implemented using only gates between neighboring qubits. Codes such as the surface code~\cite{bravyi1998quantum, Kitaev_2003} and color code~\cite{Bombin_2006} are 2D-local; as such, they are leading candidates for architectures like superconducting qubits which are currently limited to local gates only.
However, with ease of implementation comes restrictions in the asymptotic rate and distance of the code, as was shown in Refs.~\cite{Bravyi_Terhal_2009, Bravyi_Poulin_Terhal_2010}. 
To get around these bounds, long-range interactions must be introduced~\cite{Baspin_2022}, posing difficulties for hardware without long-range gates. Several recent works have provided methods for implementing nonlocal qLDPC codes on 2D-local architectures~\cite{delfosse2021boundsstabilizermeasurementcircuits, pattison2023, choe2024faulttolerantlyrealizequantumcircuit, berthusen20242dlocalimplementationquantum}, but it remains an open question as to the most effective methods and codes for this task.

First notice that $[[4,2,2]]$-concatenated codes have an embedding into $\mathbb{Z}^2$ which guarantees \textit{some} locality regardless of the outer code choice, see Fig.~\ref{fig:example-code}. 
Placing an ancilla qubit in each Iceberg codeblock would allow all blocks to be measured with only 2D-local interactions.
This, along with the adaptive scheme, may facilitate 2D-local implementations when the outer code is very nonlocal.
Previous work has indicated that implementing nonlocal, high-rate qLDPC codes in 2D appears to be challenging~\cite{delfosse2021boundsstabilizermeasurementcircuits} without mitigating the additional cost of implementing nonlocal gates~\cite{pattison2023, choe2024faulttolerantlyrealizequantumcircuit, berthusen20242dlocalimplementationquantum}. 
While the $[[4,2,2]]$-concatenated generators would inherit this nonlocality, the adaptive scheme ensures that the usage of these nonlocal checks would decrease at low physical error rates. 
Indeed, for QEC rounds in which only the Iceberg generators are measured, we obtain a syndrome extraction circuit that is completely local. 
This behavior might make naive 2D-local implementations of nonlocal codes feasible.

Certain codes have a Tanner graph structure which allows for convenient embeddings into $\mathbb{Z}^2$, such as the La-cross codes~\cite{pecorari2024highratequantumldpccodes} studied here, the related generalized bicycle codes~\cite{kovalev2013, bravyi2023highthreshold}, and long-range-enhanced surface codes~\cite{Hong_2024}. One such embedding for La-cross codes is shown in Fig.~\ref{fig:embedding}, which yields an embedded generator structure consisting of some local connections and some nonlocal connections. Codes like these may be easier to implement in 2D due to their limited nonlocality.
Unfortunately, if we attempt to recreate this embedding with the $[[4,2,2]]$-concatenated La-cross code, much of this locality is destroyed by our mapping of physical HGP qubits to logical Iceberg qubits.
Instead we can use a mapping that assigns a single physical HGP qubit to each Iceberg block, with the other logical qubit acting as a gauge qubit~\cite{criger2016}. 
With this one-to-one mapping between physical HGP qubits and Iceberg blocks, we can obtain essentially the same locality (although spread out by the Iceberg blocks themselves) for the concatenated generators.
This is in addition to the Iceberg generators, which would continue to be local.

We note that using $[[4,2,2]]$-concatenated codes as a way to reduce the cost of implementing nonlocal codes was introduced by Criger and Terhal in Ref.~\cite{criger2016}, where they investigated the $[[4,2,2]]$-concatenated toric code. This code also has an embedding into $\mathbb{Z}^2$ in which there are both local and nonlocal checks.
The authors discussed a superconducting transmon qubit architecture and predicted that these code constructions would perform well in settings where both short-range, high-fidelity and long-range, low-fidelity gates are available.  

\begin{figure}[t]
    \centering
    \includegraphics[width=0.8\linewidth]{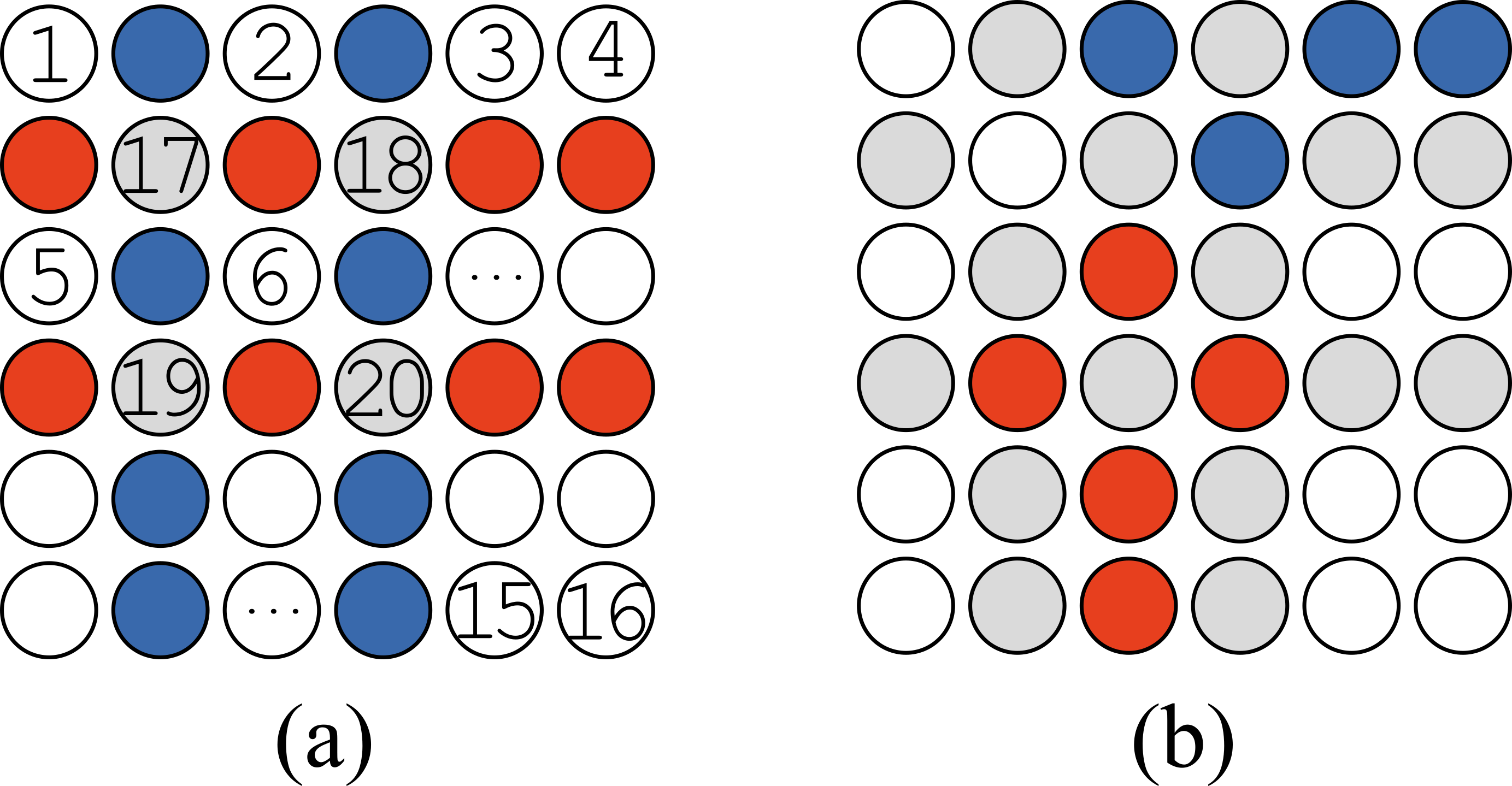}
    \caption{(a) Layout of the data and check qubits of a $[[20,4,2]]$ La-cross code. It can be interpreted as an interleaving of the $L$ (white) and $R$ (gray) sector qubits with qubits used for syndrome extraction readout. Red and blue circles represent $X$-type and $Z$-type check readout qubits, respectively. (b) Example $X$-type (red) and $Z$-type (blue) stabilizer generators. Remaining data and ancilla check qubits are colored white and gray, respectively. The structure of La-cross codes produce generators which have both local and nonlocal connections.  }
    \label{fig:embedding}
\end{figure}

\subsection{Modular QPU architectures}

A related application is implementing qLDPC codes on a modular system of quantum processing units (QPUs). The current generation of quantum computers have around $10^2$-$10^3$ physical qubits with which small experiments and calculations can be performed;
however, scaling a single chip much beyond this size is difficult due to physical and practical limitations.
An alternative to this monolithic architecture is one where many smaller chips are linked together by means of, e.g., photonic interconnects~\cite{monroe2014}, to form a larger, modular system. For these systems, the difficulty is now facilitating inter-module interactions and designing algorithms and QECCs which are well suited for the device.

A simple solution to this problem is to have a separate QECC for each module, and then the inter-module connections would only be used for interactions between code blocks. However, this effectively limits the achievable distance across the entire system, and so combining modules for the same QECC might be desirable~\cite{Strikis_2023}.
This setting fits well with the motivation of the adaptive scheme. The inter-module connections will be much slower and noisier than the intra-module connections, and as such, minimizing the usage of the inter-module connections would likely improve logical performance and speed up the computation. 
Concatenated codes again provide a reasonable approach: 
in each module suppose we put a copies of an $\ell$-level concatenated code $\cQ_\ell$. 
We can then form a final concatenated code from the logical qubits of the copies of $\cQ_\ell$, hence obtaining a code that spans over all modules.
The functionality of the adaptive scheme for this code would be similar to that described in Sec.~\ref{sec:adaptive}, with the main change being that error correction is attempted with the syndrome information from the first $\ell$ levels. In this case, the only time when the inter-module connections would be needed during syndrome extraction is when the lower concatenation levels were unable to fully correct the error.

\section{Discussion}
\label{sec:discussion}

In this work, we have presented adaptive syndrome extraction, a protocol which makes real-time decisions during syndrome extraction in order to minimize the cost of quantum error correction. 
Our protocol can be interpreted as a quantum weight reduction scheme in the space-time picture.
We considered a realization of this scheme through a concatenated code construction, where the logical qubits of many $[[4,2,2]]$ Iceberg code blocks are encoded into a square hypergraph product code.
Note, however, that the scheme is not limited to the code construction presented here;
the main necessity of a code used in adaptive syndrome extraction is the ability to cheaply determine whether a region or subset of qubits has an error, in which case more expensive error correction can take place. 
Concatenated codes provide a straightforward framework to accomplish this, but it is feasible that non-concatenated codes could be designed with this feature in mind. One such option could be the homological product of an error detecting code with a high-rate qLDPC code.
Or adaptive syndrome extraction may be possible even without any additional structure; instead, error detection could be facilitated by measuring a hitting set of stabilizer generators.

Through circuit-level simulations of repeated syndrome extraction, we found that $[[4,2,2]]$-concatenated HGP codes implemented using adaptive syndrome extraction achieve lower logical error rates than their non-concatenated counterparts while requiring fewer CNOT gates and using fewer qubits. 
Although under-performing their non-concatenated counterparts at high error rates, quantum expander codes and the adaptive scheme achieved over an order of magnitude improvement at low error rates while using $\sim 2\times$ fewer CNOT gates. These improved logical error rates come at the cost of fewer logical qubits for the same number of physical qubits. La-cross codes showed a more impressive comparative improvement: when using the adaptive scheme, we found an order of magnitude improvement in the logical error rate even at current experimental physical error rates. For non-concatenated La-cross codes, matching the performance of the adaptive scheme required over twice as many physical qubits and $\sim 5\times$ more CNOT gates.
Despite not reducing circuit depth per se, we described how this reduction in gate count could lead to shorter QEC cycles on certain quantum computing architectures.
We also showed how the scheme provides reductions in the check weight and qubit degree when averaged over many syndrome extraction rounds.

We also provided fault-tolerant gadgets that allowed us to achieve universal logical quantum computation with $[[4,2,2]]$-concatenated HGP codes.
Overall, we showed that our choice of code and syndrome extraction scheme reduces the CNOT gate count for error correction without increasing the space-time cost of logical computation.
In doing so, we showed how the concatenated codes can inherit logical gates from the underlying HGP codes, and we adapted the GPPM scheme of Ref.~\cite{xu2024fast} and the single-shot state preparation scheme of Ref.~\cite{hong2024single} to our logical computation scheme.
By developing a single-shot state preparation scheme for our $[[4,2,2]]$-concatenated HGP code, we showed how we can construct logical gadgets that take $O(1)$ logical cycles, giving us the opportunity to trade space for time in the logical computation protocol. Performance and overheads could potentially be further improved by using more efficient magic state distillation~\cite{zhu2024topological} or alternative codes~\cite{malcolm2025}.

Future research could investigate the potential of replacing the inner and outer codes of the $[[4,2,2]]$-concatenated HGP codes studied here.
It may be interesting to replace the $[[4,2,2]]$ code with larger Iceberg codes, or to use more concatenation layers akin to the many-hypercube~\cite{goto2024manyhypercubecodeshighratequantum} and related codes~\cite{c4c6Knill2005, yoshida2024concatenatecodessavequbits}.
Alternatively, the outer HGP code could be replaced with alternative qLDPC codes, such as bivariate bicycle codes~\cite{bravyi2023highthreshold}, or lifted product codes~\cite{Panteleev_2021, Panteleev_2022}, among others. In both of these cases, one would likely lose the logical-physical qubit mapping of Sec.~\ref{sec:assignment} and the resulting logical gates; however, there might use cases in which those choices are advantageous. 
Finally, there is an alternative concatenation scheme to that described by Procedure~\ref{proc:concat}: for input codes $\cQ_1, \cQ_2$ with parameters $[[n_1, k_1, d_1]]$ and $[[n_2, k_2, d_2]]$, respectively, we can obtain a concatenated code with parameters $[[n_1n_2, k_1k_2, d \ge d_1d_2]]$, see e.g. Section 3.5 of Ref.~\cite{gottesman1997stabilizercodesquantumerror}. The performance of such a code in the adaptive scheme would likely be comparable, but it may provide opportunities for different logical gates~\cite{criger2016}.

\section*{Acknowledgements}

We thank Ciar\'an Ryan-Anderson and Natalie Brown for helpful discussions at the start of the project. We thank Nadine Meister for answering questions about Ref.~\cite{meister2024efficientsoftoutputdecoderssurface}. We also thank Qian Xu and Guo (Jerry) Zheng for helpful discussions and answering questions about Ref.~\cite{xu2024fast}. We thank Yifan Hong for answering questions about Ref.~\cite{hong2024single}. SJST acknowledges funding and support from Joint
Center for Quantum Information and Computer Science (QuICS) Lanczos Graduate Fellowship,
MathQuantum Graduate Fellowship, and the National University of Singapore (NUS) Development Grant. EH is supported by the Fulbright Future Scholarship. DG is partially supported by the National Science Foundation (RQS QLCI grant OMA-2120757).

\section*{Data Availability}
The source code and data to generate the figures in the paper are available at \url{https://github.com/noahberthusen/adaptive_qec}. 

\bibliography{bibliography}

\begin{thebibliography}{93}%
\makeatletter
\providecommand \@ifxundefined [1]{%
 \@ifx{#1\undefined}
}%
\providecommand \@ifnum [1]{%
 \ifnum #1\expandafter \@firstoftwo
 \else \expandafter \@secondoftwo
 \fi
}%
\providecommand \@ifx [1]{%
 \ifx #1\expandafter \@firstoftwo
 \else \expandafter \@secondoftwo
 \fi
}%
\providecommand \natexlab [1]{#1}%
\providecommand \enquote  [1]{``#1''}%
\providecommand \bibnamefont  [1]{#1}%
\providecommand \bibfnamefont [1]{#1}%
\providecommand \citenamefont [1]{#1}%
\providecommand \href@noop [0]{\@secondoftwo}%
\providecommand \href [0]{\begingroup \@sanitize@url \@href}%
\providecommand \@href[1]{\@@startlink{#1}\@@href}%
\providecommand \@@href[1]{\endgroup#1\@@endlink}%
\providecommand \@sanitize@url [0]{\catcode `\\12\catcode `\$12\catcode `\&12\catcode `\#12\catcode `\^12\catcode `\_12\catcode `\%12\relax}%
\providecommand \@@startlink[1]{}%
\providecommand \@@endlink[0]{}%
\providecommand \url  [0]{\begingroup\@sanitize@url \@url }%
\providecommand \@url [1]{\endgroup\@href {#1}{\urlprefix }}%
\providecommand \urlprefix  [0]{URL }%
\providecommand \Eprint [0]{\href }%
\providecommand \doibase [0]{https://doi.org/}%
\providecommand \selectlanguage [0]{\@gobble}%
\providecommand \bibinfo  [0]{\@secondoftwo}%
\providecommand \bibfield  [0]{\@secondoftwo}%
\providecommand \translation [1]{[#1]}%
\providecommand \BibitemOpen [0]{}%
\providecommand \bibitemStop [0]{}%
\providecommand \bibitemNoStop [0]{.\EOS\space}%
\providecommand \EOS [0]{\spacefactor3000\relax}%
\providecommand \BibitemShut  [1]{\csname bibitem#1\endcsname}%
\let\auto@bib@innerbib\@empty
\bibitem [{\citenamefont {{Google Quantum AI}}\ and\ \citenamefont {Collaborators}(2024)}]{acharya2024}%
  \BibitemOpen
  \bibfield  {author} {\bibinfo {author} {\bibnamefont {{Google Quantum AI}}}\ and\ \bibinfo {author} {\bibnamefont {Collaborators}},\ }\bibfield  {title} {\bibinfo {title} {Quantum error correction below the surface code threshold},\ }\href {https://doi.org/10.1038/s41586-024-08449-y} {\bibfield  {journal} {\bibinfo  {journal} {Nature}\ } (\bibinfo {year} {2024})}\BibitemShut {NoStop}%
\bibitem [{\citenamefont {Reichardt}\ \emph {et~al.}(2024{\natexlab{a}})\citenamefont {Reichardt}, \citenamefont {Aasen}, \citenamefont {Chao}, \citenamefont {Chernoguzov}, \citenamefont {van Dam}, \citenamefont {Gaebler}, \citenamefont {Gresh}, \citenamefont {Lucchetti}, \citenamefont {Mills}, \citenamefont {Moses} \emph {et~al.}}]{reichardt2024}%
  \BibitemOpen
  \bibfield  {author} {\bibinfo {author} {\bibfnamefont {B.~W.}\ \bibnamefont {Reichardt}}, \bibinfo {author} {\bibfnamefont {D.}~\bibnamefont {Aasen}}, \bibinfo {author} {\bibfnamefont {R.}~\bibnamefont {Chao}}, \bibinfo {author} {\bibfnamefont {A.}~\bibnamefont {Chernoguzov}}, \bibinfo {author} {\bibfnamefont {W.}~\bibnamefont {van Dam}}, \bibinfo {author} {\bibfnamefont {J.~P.}\ \bibnamefont {Gaebler}}, \bibinfo {author} {\bibfnamefont {D.}~\bibnamefont {Gresh}}, \bibinfo {author} {\bibfnamefont {D.}~\bibnamefont {Lucchetti}}, \bibinfo {author} {\bibfnamefont {M.}~\bibnamefont {Mills}}, \bibinfo {author} {\bibfnamefont {S.~A.}\ \bibnamefont {Moses}}, \emph {et~al.},\ }\bibfield  {title} {\bibinfo {title} {Demonstration of quantum computation and error correction with a tesseract code},\ }\href {https://doi.org/10.48550/arXiv.2409.04628} {\bibfield  {journal} {\bibinfo  {journal} {arXiv preprint arXiv:2409.04628}\ } (\bibinfo {year} {2024}{\natexlab{a}})}\BibitemShut {NoStop}%
\bibitem [{\citenamefont {da~Silva}\ \emph {et~al.}(2024)\citenamefont {da~Silva}, \citenamefont {Ryan-Anderson}, \citenamefont {Bello-Rivas}, \citenamefont {Chernoguzov}, \citenamefont {Dreiling}, \citenamefont {Foltz}, \citenamefont {Frachon}, \citenamefont {Gaebler}, \citenamefont {Gatterman}, \citenamefont {Grans-Samuelsson}, \citenamefont {Hayes} \emph {et~al.}}]{dasilva2024}%
  \BibitemOpen
  \bibfield  {author} {\bibinfo {author} {\bibfnamefont {M.~P.}\ \bibnamefont {da~Silva}}, \bibinfo {author} {\bibfnamefont {C.}~\bibnamefont {Ryan-Anderson}}, \bibinfo {author} {\bibfnamefont {J.~M.}\ \bibnamefont {Bello-Rivas}}, \bibinfo {author} {\bibfnamefont {A.}~\bibnamefont {Chernoguzov}}, \bibinfo {author} {\bibfnamefont {J.~M.}\ \bibnamefont {Dreiling}}, \bibinfo {author} {\bibfnamefont {C.}~\bibnamefont {Foltz}}, \bibinfo {author} {\bibfnamefont {F.}~\bibnamefont {Frachon}}, \bibinfo {author} {\bibfnamefont {J.~P.}\ \bibnamefont {Gaebler}}, \bibinfo {author} {\bibfnamefont {T.~M.}\ \bibnamefont {Gatterman}}, \bibinfo {author} {\bibfnamefont {L.}~\bibnamefont {Grans-Samuelsson}}, \bibinfo {author} {\bibfnamefont {D.}~\bibnamefont {Hayes}}, \emph {et~al.},\ }\bibfield  {title} {\bibinfo {title} {Demonstration of logical qubits and repeated error correction with better-than-physical error rates},\ }\href {https://doi.org/10.48550/arXiv.2404.02280} {\bibfield  {journal} {\bibinfo  {journal} {arXiv
  preprint arXiv:2404.02280}\ } (\bibinfo {year} {2024})}\BibitemShut {NoStop}%
\bibitem [{\citenamefont {Reichardt}\ \emph {et~al.}(2024{\natexlab{b}})\citenamefont {Reichardt}, \citenamefont {Paetznick}, \citenamefont {Aasen}, \citenamefont {Basov}, \citenamefont {Bello-Rivas}, \citenamefont {Bonderson}, \citenamefont {Chao}, \citenamefont {van Dam}, \citenamefont {Hastings} \emph {et~al.}}]{reichardt2024_2}%
  \BibitemOpen
  \bibfield  {author} {\bibinfo {author} {\bibfnamefont {B.~W.}\ \bibnamefont {Reichardt}}, \bibinfo {author} {\bibfnamefont {A.}~\bibnamefont {Paetznick}}, \bibinfo {author} {\bibfnamefont {D.}~\bibnamefont {Aasen}}, \bibinfo {author} {\bibfnamefont {I.}~\bibnamefont {Basov}}, \bibinfo {author} {\bibfnamefont {J.~M.}\ \bibnamefont {Bello-Rivas}}, \bibinfo {author} {\bibfnamefont {P.}~\bibnamefont {Bonderson}}, \bibinfo {author} {\bibfnamefont {R.}~\bibnamefont {Chao}}, \bibinfo {author} {\bibfnamefont {W.}~\bibnamefont {van Dam}}, \bibinfo {author} {\bibfnamefont {M.~B.}\ \bibnamefont {Hastings}}, \emph {et~al.},\ }\bibfield  {title} {\bibinfo {title} {Logical computation demonstrated with a neutral atom quantum processor},\ }\href {https://doi.org/10.48550/arXiv.2411.11822} {\bibfield  {journal} {\bibinfo  {journal} {arXiv preprint arXiv:2411.11822}\ } (\bibinfo {year} {2024}{\natexlab{b}})}\BibitemShut {NoStop}%
\bibitem [{\citenamefont {Lacroix}\ \emph {et~al.}(2024)\citenamefont {Lacroix}, \citenamefont {Bourassa}, \citenamefont {Heras}, \citenamefont {Zhang}, \citenamefont {Bausch}, \citenamefont {Senior}, \citenamefont {Edlich}, \citenamefont {Shutty}, \citenamefont {Sivak}, \citenamefont {Bengtsson} \emph {et~al.}}]{lacroix2024scalinglogiccolorcode}%
  \BibitemOpen
  \bibfield  {author} {\bibinfo {author} {\bibfnamefont {N.}~\bibnamefont {Lacroix}}, \bibinfo {author} {\bibfnamefont {A.}~\bibnamefont {Bourassa}}, \bibinfo {author} {\bibfnamefont {F.~J.~H.}\ \bibnamefont {Heras}}, \bibinfo {author} {\bibfnamefont {L.~M.}\ \bibnamefont {Zhang}}, \bibinfo {author} {\bibfnamefont {J.}~\bibnamefont {Bausch}}, \bibinfo {author} {\bibfnamefont {A.~W.}\ \bibnamefont {Senior}}, \bibinfo {author} {\bibfnamefont {T.}~\bibnamefont {Edlich}}, \bibinfo {author} {\bibfnamefont {N.}~\bibnamefont {Shutty}}, \bibinfo {author} {\bibfnamefont {V.}~\bibnamefont {Sivak}}, \bibinfo {author} {\bibfnamefont {A.}~\bibnamefont {Bengtsson}}, \emph {et~al.},\ }\bibfield  {title} {\bibinfo {title} {Scaling and logic in the color code on a superconducting quantum processor},\ }\href {https://doi.org/10.48550/arXiv.2412.14256} {\bibfield  {journal} {\bibinfo  {journal} {arXiv preprint arXiv:2412.14256}\ } (\bibinfo {year} {2024})}\BibitemShut {NoStop}%
\bibitem [{\citenamefont {Xu}\ \emph {et~al.}(2018)\citenamefont {Xu}, \citenamefont {Beaudrap}, \citenamefont {O’Gorman},\ and\ \citenamefont {Benjamin}}]{Xu_2018}%
  \BibitemOpen
  \bibfield  {author} {\bibinfo {author} {\bibfnamefont {X.}~\bibnamefont {Xu}}, \bibinfo {author} {\bibfnamefont {N.~d.}\ \bibnamefont {Beaudrap}}, \bibinfo {author} {\bibfnamefont {J.}~\bibnamefont {O’Gorman}},\ and\ \bibinfo {author} {\bibfnamefont {S.~C.}\ \bibnamefont {Benjamin}},\ }\bibfield  {title} {\bibinfo {title} {An integrity measure to benchmark quantum error correcting memories},\ }\href {http://dx.doi.org/10.1088/1367-2630/aaa372} {\bibfield  {journal} {\bibinfo  {journal} {New Journal of Physics}\ }\textbf {\bibinfo {volume} {20}},\ \bibinfo {pages} {023009} (\bibinfo {year} {2018})}\BibitemShut {NoStop}%
\bibitem [{\citenamefont {Delfosse}\ \emph {et~al.}(2021{\natexlab{a}})\citenamefont {Delfosse}, \citenamefont {Beverland},\ and\ \citenamefont {Tremblay}}]{Delfosse_Beverland_Tremblay_2021}%
  \BibitemOpen
  \bibfield  {author} {\bibinfo {author} {\bibfnamefont {N.}~\bibnamefont {Delfosse}}, \bibinfo {author} {\bibfnamefont {M.~E.}\ \bibnamefont {Beverland}},\ and\ \bibinfo {author} {\bibfnamefont {M.~A.}\ \bibnamefont {Tremblay}},\ }\bibfield  {title} {\bibinfo {title} {Bounds on stabilizer measurement circuits and obstructions to local implementations of quantum {LDPC} codes},\ }\href {https://doi.org/10.48550/arXiv.2109.14599} {\bibfield  {journal} {\bibinfo  {journal} {arXiv preprint arXiv:2109.14599}\ } (\bibinfo {year} {2021}{\natexlab{a}})}\BibitemShut {NoStop}%
\bibitem [{\citenamefont {Baspin}\ \emph {et~al.}(2023)\citenamefont {Baspin}, \citenamefont {Fawzi},\ and\ \citenamefont {Shayeghi}}]{baspin2023lower}%
  \BibitemOpen
  \bibfield  {author} {\bibinfo {author} {\bibfnamefont {N.}~\bibnamefont {Baspin}}, \bibinfo {author} {\bibfnamefont {O.}~\bibnamefont {Fawzi}},\ and\ \bibinfo {author} {\bibfnamefont {A.}~\bibnamefont {Shayeghi}},\ }\bibfield  {title} {\bibinfo {title} {A lower bound on the overhead of quantum error correction in low dimensions},\ }\href {https://doi.org/10.48550/arXiv.2302.04317} {\bibfield  {journal} {\bibinfo  {journal} {arXiv preprint arXiv:2302.04317}\ } (\bibinfo {year} {2023})}\BibitemShut {NoStop}%
\bibitem [{\citenamefont {Brown}\ \emph {et~al.}(2023)\citenamefont {Brown}, \citenamefont {III}, \citenamefont {Granade}, \citenamefont {Heim}, \citenamefont {Wernli}, \citenamefont {Ryan-Anderson}, \citenamefont {Lucchetti}, \citenamefont {Paetznick}, \citenamefont {Roetteler}, \citenamefont {Svore},\ and\ \citenamefont {Chernoguzov}}]{brown2023}%
  \BibitemOpen
  \bibfield  {author} {\bibinfo {author} {\bibfnamefont {N.~C.}\ \bibnamefont {Brown}}, \bibinfo {author} {\bibfnamefont {J.~P.~C.}\ \bibnamefont {III}}, \bibinfo {author} {\bibfnamefont {C.}~\bibnamefont {Granade}}, \bibinfo {author} {\bibfnamefont {B.}~\bibnamefont {Heim}}, \bibinfo {author} {\bibfnamefont {S.}~\bibnamefont {Wernli}}, \bibinfo {author} {\bibfnamefont {C.}~\bibnamefont {Ryan-Anderson}}, \bibinfo {author} {\bibfnamefont {D.}~\bibnamefont {Lucchetti}}, \bibinfo {author} {\bibfnamefont {A.}~\bibnamefont {Paetznick}}, \bibinfo {author} {\bibfnamefont {M.}~\bibnamefont {Roetteler}}, \bibinfo {author} {\bibfnamefont {K.}~\bibnamefont {Svore}},\ and\ \bibinfo {author} {\bibfnamefont {A.}~\bibnamefont {Chernoguzov}},\ }\bibfield  {title} {\bibinfo {title} {Advances in compilation for quantum hardware -- a demonstration of magic state distillation and repeat-until-success protocols},\ }\href {https://doi.org/10.48550/arXiv.2310.12106} {\bibfield  {journal} {\bibinfo  {journal} {arXiv preprint
  arXiv:2310.12106}\ } (\bibinfo {year} {2023})}\BibitemShut {NoStop}%
\bibitem [{\citenamefont {Reichardt}(2020)}]{Reichardt_2020}%
  \BibitemOpen
  \bibfield  {author} {\bibinfo {author} {\bibfnamefont {B.~W.}\ \bibnamefont {Reichardt}},\ }\bibfield  {title} {\bibinfo {title} {Fault-tolerant quantum error correction for steane’s seven-qubit color code with few or no extra qubits},\ }\href {http://dx.doi.org/10.1088/2058-9565/abc6f4} {\bibfield  {journal} {\bibinfo  {journal} {Quantum Science and Technology}\ }\textbf {\bibinfo {volume} {6}},\ \bibinfo {pages} {015007} (\bibinfo {year} {2020})}\BibitemShut {NoStop}%
\bibitem [{\citenamefont {Ryan-Anderson}\ \emph {et~al.}(2021)\citenamefont {Ryan-Anderson}, \citenamefont {Bohnet}, \citenamefont {Lee}, \citenamefont {Gresh}, \citenamefont {Hankin}, \citenamefont {Gaebler}, \citenamefont {Francois}, \citenamefont {Chernoguzov}, \citenamefont {Lucchetti}, \citenamefont {Brown} \emph {et~al.}}]{ryananderson2021}%
  \BibitemOpen
  \bibfield  {author} {\bibinfo {author} {\bibfnamefont {C.}~\bibnamefont {Ryan-Anderson}}, \bibinfo {author} {\bibfnamefont {J.~G.}\ \bibnamefont {Bohnet}}, \bibinfo {author} {\bibfnamefont {K.}~\bibnamefont {Lee}}, \bibinfo {author} {\bibfnamefont {D.}~\bibnamefont {Gresh}}, \bibinfo {author} {\bibfnamefont {A.}~\bibnamefont {Hankin}}, \bibinfo {author} {\bibfnamefont {J.~P.}\ \bibnamefont {Gaebler}}, \bibinfo {author} {\bibfnamefont {D.}~\bibnamefont {Francois}}, \bibinfo {author} {\bibfnamefont {A.}~\bibnamefont {Chernoguzov}}, \bibinfo {author} {\bibfnamefont {D.}~\bibnamefont {Lucchetti}}, \bibinfo {author} {\bibfnamefont {N.~C.}\ \bibnamefont {Brown}}, \emph {et~al.},\ }\bibfield  {title} {\bibinfo {title} {Realization of real-time fault-tolerant quantum error correction},\ }\href {https://link.aps.org/doi/10.1103/PhysRevX.11.041058} {\bibfield  {journal} {\bibinfo  {journal} {Phys. Rev. X}\ }\textbf {\bibinfo {volume} {11}},\ \bibinfo {pages} {041058} (\bibinfo {year} {2021})}\BibitemShut {NoStop}%
\bibitem [{\citenamefont {Berthusen}\ \emph {et~al.}(2025)\citenamefont {Berthusen}, \citenamefont {Devulapalli}, \citenamefont {Schoute}, \citenamefont {Childs}, \citenamefont {Gullans}, \citenamefont {Gorshkov},\ and\ \citenamefont {Gottesman}}]{berthusen20242dlocalimplementationquantum}%
  \BibitemOpen
  \bibfield  {author} {\bibinfo {author} {\bibfnamefont {N.}~\bibnamefont {Berthusen}}, \bibinfo {author} {\bibfnamefont {D.}~\bibnamefont {Devulapalli}}, \bibinfo {author} {\bibfnamefont {E.}~\bibnamefont {Schoute}}, \bibinfo {author} {\bibfnamefont {A.~M.}\ \bibnamefont {Childs}}, \bibinfo {author} {\bibfnamefont {M.~J.}\ \bibnamefont {Gullans}}, \bibinfo {author} {\bibfnamefont {A.~V.}\ \bibnamefont {Gorshkov}},\ and\ \bibinfo {author} {\bibfnamefont {D.}~\bibnamefont {Gottesman}},\ }\bibfield  {title} {\bibinfo {title} {Toward a {2D} local implementation of quantum low-density parity-check codes},\ }\href {http://dx.doi.org/10.1103/PRXQuantum.6.010306} {\bibfield  {journal} {\bibinfo  {journal} {PRX Quantum}\ }\textbf {\bibinfo {volume} {6}} (\bibinfo {year} {2025})}\BibitemShut {NoStop}%
\bibitem [{\citenamefont {Hastings}(2016)}]{hastings2016weight}%
  \BibitemOpen
  \bibfield  {author} {\bibinfo {author} {\bibfnamefont {M.~B.}\ \bibnamefont {Hastings}},\ }\bibfield  {title} {\bibinfo {title} {Weight reduction for quantum codes},\ }\href {https://doi.org/10.48550/arXiv.1611.03790} {\bibfield  {journal} {\bibinfo  {journal} {arXiv preprint arXiv:1611.03790}\ } (\bibinfo {year} {2016})}\BibitemShut {NoStop}%
\bibitem [{\citenamefont {Hastings}(2023)}]{hastings2023quantumweightreduction}%
  \BibitemOpen
  \bibfield  {author} {\bibinfo {author} {\bibfnamefont {M.~B.}\ \bibnamefont {Hastings}},\ }\bibfield  {title} {\bibinfo {title} {On quantum weight reduction},\ }\href {https://doi.org/10.48550/arXiv.2102.10030} {\bibfield  {journal} {\bibinfo  {journal} {arXiv preprint arXiv:2102.10030}\ } (\bibinfo {year} {2023})}\BibitemShut {NoStop}%
\bibitem [{\citenamefont {Poulin}(2005)}]{poulin2005stabilizer}%
  \BibitemOpen
  \bibfield  {author} {\bibinfo {author} {\bibfnamefont {D.}~\bibnamefont {Poulin}},\ }\bibfield  {title} {\bibinfo {title} {Stabilizer formalism for operator quantum error correction},\ }\href {https://doi.org/10.1103/PhysRevLett.95.230504} {\bibfield  {journal} {\bibinfo  {journal} {Phys. Rev. Lett.}\ }\textbf {\bibinfo {volume} {95}},\ \bibinfo {pages} {230504} (\bibinfo {year} {2005})}\BibitemShut {NoStop}%
\bibitem [{\citenamefont {Bacon}(2006)}]{bacon2006operator}%
  \BibitemOpen
  \bibfield  {author} {\bibinfo {author} {\bibfnamefont {D.}~\bibnamefont {Bacon}},\ }\bibfield  {title} {\bibinfo {title} {Operator quantum error-correcting subsystems for self-correcting quantum memories},\ }\href {https://doi.org/10.1103/PhysRevA.73.012340} {\bibfield  {journal} {\bibinfo  {journal} {Phys. Rev. A}\ }\textbf {\bibinfo {volume} {73}},\ \bibinfo {pages} {012340} (\bibinfo {year} {2006})}\BibitemShut {NoStop}%
\bibitem [{\citenamefont {Bombin}(2010)}]{bombin2010topological}%
  \BibitemOpen
  \bibfield  {author} {\bibinfo {author} {\bibfnamefont {H.}~\bibnamefont {Bombin}},\ }\bibfield  {title} {\bibinfo {title} {Topological subsystem codes},\ }\href {https://link.aps.org/doi/10.1103/PhysRevA.81.032301} {\bibfield  {journal} {\bibinfo  {journal} {Phys. Rev. A}\ }\textbf {\bibinfo {volume} {81}},\ \bibinfo {pages} {032301} (\bibinfo {year} {2010})}\BibitemShut {NoStop}%
\bibitem [{\citenamefont {Baspin}\ and\ \citenamefont {Williamson}(2024)}]{baspin2024wire}%
  \BibitemOpen
  \bibfield  {author} {\bibinfo {author} {\bibfnamefont {N.}~\bibnamefont {Baspin}}\ and\ \bibinfo {author} {\bibfnamefont {D.}~\bibnamefont {Williamson}},\ }\bibfield  {title} {\bibinfo {title} {Wire codes},\ }\href {https://doi.org/10.48550/arXiv.2410.10194} {\bibfield  {journal} {\bibinfo  {journal} {arXiv preprint arXiv:2410.10194}\ } (\bibinfo {year} {2024})}\BibitemShut {NoStop}%
\bibitem [{\citenamefont {Bacon}\ \emph {et~al.}(2017)\citenamefont {Bacon}, \citenamefont {Flammia}, \citenamefont {Harrow},\ and\ \citenamefont {Shi}}]{bacon2017sparse}%
  \BibitemOpen
  \bibfield  {author} {\bibinfo {author} {\bibfnamefont {D.}~\bibnamefont {Bacon}}, \bibinfo {author} {\bibfnamefont {S.~T.}\ \bibnamefont {Flammia}}, \bibinfo {author} {\bibfnamefont {A.~W.}\ \bibnamefont {Harrow}},\ and\ \bibinfo {author} {\bibfnamefont {J.}~\bibnamefont {Shi}},\ }\bibfield  {title} {\bibinfo {title} {Sparse quantum codes from quantum circuits},\ }\href {https://doi.org/10.1109/TIT.2017.2663199} {\bibfield  {journal} {\bibinfo  {journal} {IEEE Transactions on Information Theory}\ }\textbf {\bibinfo {volume} {63}},\ \bibinfo {pages} {2464} (\bibinfo {year} {2017})}\BibitemShut {NoStop}%
\bibitem [{\citenamefont {Hastings}\ and\ \citenamefont {Haah}(2021)}]{hastings2021dynamically}%
  \BibitemOpen
  \bibfield  {author} {\bibinfo {author} {\bibfnamefont {M.~B.}\ \bibnamefont {Hastings}}\ and\ \bibinfo {author} {\bibfnamefont {J.}~\bibnamefont {Haah}},\ }\bibfield  {title} {\bibinfo {title} {Dynamically generated logical qubits},\ }\href {http://dx.doi.org/10.22331/q-2021-10-19-564} {\bibfield  {journal} {\bibinfo  {journal} {Quantum}\ }\textbf {\bibinfo {volume} {5}},\ \bibinfo {pages} {564} (\bibinfo {year} {2021})}\BibitemShut {NoStop}%
\bibitem [{\citenamefont {Gottesman}(2022)}]{gottesman2022opportunities}%
  \BibitemOpen
  \bibfield  {author} {\bibinfo {author} {\bibfnamefont {D.}~\bibnamefont {Gottesman}},\ }\bibfield  {title} {\bibinfo {title} {Opportunities and challenges in fault-tolerant quantum computation},\ }\href {https://doi.org/10.48550/arXiv.2210.15844} {\bibfield  {journal} {\bibinfo  {journal} {arXiv preprint arXiv:2210.15844}\ } (\bibinfo {year} {2022})}\BibitemShut {NoStop}%
\bibitem [{\citenamefont {Delfosse}\ and\ \citenamefont {Paetznick}(2023)}]{delfosse2023spacetime}%
  \BibitemOpen
  \bibfield  {author} {\bibinfo {author} {\bibfnamefont {N.}~\bibnamefont {Delfosse}}\ and\ \bibinfo {author} {\bibfnamefont {A.}~\bibnamefont {Paetznick}},\ }\bibfield  {title} {\bibinfo {title} {Spacetime codes of clifford circuits},\ }\href {https://doi.org/10.48550/arXiv.2304.05943} {\bibfield  {journal} {\bibinfo  {journal} {arXiv preprint arXiv:2304.05943}\ } (\bibinfo {year} {2023})}\BibitemShut {NoStop}%
\bibitem [{\citenamefont {de~la Fuente}(2024)}]{de2024dynamical}%
  \BibitemOpen
  \bibfield  {author} {\bibinfo {author} {\bibfnamefont {J.~C.~M.}\ \bibnamefont {de~la Fuente}},\ }\bibfield  {title} {\bibinfo {title} {Dynamical weight reduction of pauli measurements},\ }\href {https://doi.org/10.48550/arXiv.2410.12527} {\bibfield  {journal} {\bibinfo  {journal} {arXiv preprint arXiv:2410.12527}\ } (\bibinfo {year} {2024})}\BibitemShut {NoStop}%
\bibitem [{\citenamefont {Steane}(1996{\natexlab{a}})}]{Steane_1996}%
  \BibitemOpen
  \bibfield  {author} {\bibinfo {author} {\bibfnamefont {A.~M.}\ \bibnamefont {Steane}},\ }\bibfield  {title} {\bibinfo {title} {Simple quantum error-correcting codes},\ }\href {https://doi.org/10.1103/physreva.54.4741} {\bibfield  {journal} {\bibinfo  {journal} {Phys. Rev. A}\ }\textbf {\bibinfo {volume} {54}},\ \bibinfo {pages} {4741–4751} (\bibinfo {year} {1996}{\natexlab{a}})}\BibitemShut {NoStop}%
\bibitem [{\citenamefont {Gottesman}(1997)}]{gottesman1997stabilizercodesquantumerror}%
  \BibitemOpen
  \bibfield  {author} {\bibinfo {author} {\bibfnamefont {D.}~\bibnamefont {Gottesman}},\ }\bibfield  {title} {\bibinfo {title} {Stabilizer codes and quantum error correction},\ }\href {https://doi.org/10.48550/arXiv.quant-ph/9705052} {\bibfield  {journal} {\bibinfo  {journal} {arXiv preprint arXiv:quant-ph/9705052}\ } (\bibinfo {year} {1997})}\BibitemShut {NoStop}%
\bibitem [{\citenamefont {Self}\ \emph {et~al.}(2024)\citenamefont {Self}, \citenamefont {Benedetti},\ and\ \citenamefont {Amaro}}]{Self_2024}%
  \BibitemOpen
  \bibfield  {author} {\bibinfo {author} {\bibfnamefont {C.~N.}\ \bibnamefont {Self}}, \bibinfo {author} {\bibfnamefont {M.}~\bibnamefont {Benedetti}},\ and\ \bibinfo {author} {\bibfnamefont {D.}~\bibnamefont {Amaro}},\ }\bibfield  {title} {\bibinfo {title} {Protecting expressive circuits with a quantum error detection code},\ }\href {https://doi.org/10.1038/s41567-023-02282-2} {\bibfield  {journal} {\bibinfo  {journal} {Nature Physics}\ }\textbf {\bibinfo {volume} {20}},\ \bibinfo {pages} {219–224} (\bibinfo {year} {2024})}\BibitemShut {NoStop}%
\bibitem [{\citenamefont {Tillich}\ and\ \citenamefont {Zemor}(2014)}]{Tillich_2014}%
  \BibitemOpen
  \bibfield  {author} {\bibinfo {author} {\bibfnamefont {J.-P.}\ \bibnamefont {Tillich}}\ and\ \bibinfo {author} {\bibfnamefont {G.}~\bibnamefont {Zemor}},\ }\bibfield  {title} {\bibinfo {title} {Quantum {LDPC} codes with positive rate and minimum distance proportional to the square root of the blocklength},\ }\href {https://doi.org/10.1109/tit.2013.2292061} {\bibfield  {journal} {\bibinfo  {journal} {IEEE Transactions on Information Theory}\ }\textbf {\bibinfo {volume} {60}},\ \bibinfo {pages} {1193–1202} (\bibinfo {year} {2014})}\BibitemShut {NoStop}%
\bibitem [{\citenamefont {Xu}\ \emph {et~al.}(2024{\natexlab{a}})\citenamefont {Xu}, \citenamefont {Zhou}, \citenamefont {Zheng}, \citenamefont {Bluvstein}, \citenamefont {Ataides}, \citenamefont {Lukin},\ and\ \citenamefont {Jiang}}]{xu2024fast}%
  \BibitemOpen
  \bibfield  {author} {\bibinfo {author} {\bibfnamefont {Q.}~\bibnamefont {Xu}}, \bibinfo {author} {\bibfnamefont {H.}~\bibnamefont {Zhou}}, \bibinfo {author} {\bibfnamefont {G.}~\bibnamefont {Zheng}}, \bibinfo {author} {\bibfnamefont {D.}~\bibnamefont {Bluvstein}}, \bibinfo {author} {\bibfnamefont {J.}~\bibnamefont {Ataides}}, \bibinfo {author} {\bibfnamefont {M.~D.}\ \bibnamefont {Lukin}},\ and\ \bibinfo {author} {\bibfnamefont {L.}~\bibnamefont {Jiang}},\ }\bibfield  {title} {\bibinfo {title} {Fast and parallelizable logical computation with homological product codes},\ }\href {https://doi.org/10.48550/arXiv.2407.18490} {\bibfield  {journal} {\bibinfo  {journal} {arXiv preprint arXiv:2407.18490}\ } (\bibinfo {year} {2024}{\natexlab{a}})}\BibitemShut {NoStop}%
\bibitem [{\citenamefont {Hong}(2024)}]{hong2024single}%
  \BibitemOpen
  \bibfield  {author} {\bibinfo {author} {\bibfnamefont {Y.}~\bibnamefont {Hong}},\ }\bibfield  {title} {\bibinfo {title} {Single-shot preparation of hypergraph product codes via dimension jump},\ }\href {https://doi.org/10.48550/arXiv.2410.05171} {\bibfield  {journal} {\bibinfo  {journal} {arXiv preprint arXiv:2410.05171}\ } (\bibinfo {year} {2024})}\BibitemShut {NoStop}%
\bibitem [{\citenamefont {Calderbank}\ and\ \citenamefont {Shor}(1996)}]{shor1996_2}%
  \BibitemOpen
  \bibfield  {author} {\bibinfo {author} {\bibfnamefont {A.~R.}\ \bibnamefont {Calderbank}}\ and\ \bibinfo {author} {\bibfnamefont {P.~W.}\ \bibnamefont {Shor}},\ }\bibfield  {title} {\bibinfo {title} {Good quantum error-correcting codes exist},\ }\href {https://doi.org/10.1103/PhysRevA.54.1098} {\bibfield  {journal} {\bibinfo  {journal} {Phys. Rev. A}\ }\textbf {\bibinfo {volume} {54}},\ \bibinfo {pages} {1098} (\bibinfo {year} {1996})}\BibitemShut {NoStop}%
\bibitem [{\citenamefont {Steane}(1996{\natexlab{b}})}]{steane1996_2}%
  \BibitemOpen
  \bibfield  {author} {\bibinfo {author} {\bibfnamefont {A.}~\bibnamefont {Steane}},\ }\bibfield  {title} {\bibinfo {title} {Multiple particle interference and quantum error correction},\ }\href {https://doi.org/10.1098/rspa.1996.0136} {\bibfield  {journal} {\bibinfo  {journal} {Proceedings of the Royal Society of London. Series A: Mathematical, Physical and Engineering Sciences}\ }\textbf {\bibinfo {volume} {452}},\ \bibinfo {pages} {2551} (\bibinfo {year} {1996}{\natexlab{b}})}\BibitemShut {NoStop}%
\bibitem [{\citenamefont {Bravyi}\ and\ \citenamefont {Hastings}(2014)}]{Bravyi2014}%
  \BibitemOpen
  \bibfield  {author} {\bibinfo {author} {\bibfnamefont {S.}~\bibnamefont {Bravyi}}\ and\ \bibinfo {author} {\bibfnamefont {M.~B.}\ \bibnamefont {Hastings}},\ }\bibfield  {title} {\bibinfo {title} {Homological product codes},\ }in\ \href {https://doi.org/10.1145/2591796.2591870} {\emph {\bibinfo {booktitle} {Proceedings of the Forty-Sixth Annual ACM Symposium on Theory of Computing}}},\ \bibinfo {series and number} {STOC '14}\ (\bibinfo  {publisher} {Association for Computing Machinery},\ \bibinfo {address} {New York, NY, USA},\ \bibinfo {year} {2014})\ p.\ \bibinfo {pages} {273–282}\BibitemShut {NoStop}%
\bibitem [{\citenamefont {Sipser}\ and\ \citenamefont {Spielman}(1996)}]{Sipser_Spielman_1996}%
  \BibitemOpen
  \bibfield  {author} {\bibinfo {author} {\bibfnamefont {M.}~\bibnamefont {Sipser}}\ and\ \bibinfo {author} {\bibfnamefont {D.}~\bibnamefont {Spielman}},\ }\bibfield  {title} {\bibinfo {title} {Expander codes},\ }\href {https://doi.org/10.1109/18.556667} {\bibfield  {journal} {\bibinfo  {journal} {IEEE Transactions on Information Theory}\ }\textbf {\bibinfo {volume} {42}},\ \bibinfo {pages} {1710–1722} (\bibinfo {year} {1996})}\BibitemShut {NoStop}%
\bibitem [{\citenamefont {Leverrier}\ \emph {et~al.}(2015)\citenamefont {Leverrier}, \citenamefont {Tillich},\ and\ \citenamefont {Zemor}}]{Leverrier_2015}%
  \BibitemOpen
  \bibfield  {author} {\bibinfo {author} {\bibfnamefont {A.}~\bibnamefont {Leverrier}}, \bibinfo {author} {\bibfnamefont {J.-P.}\ \bibnamefont {Tillich}},\ and\ \bibinfo {author} {\bibfnamefont {G.}~\bibnamefont {Zemor}},\ }\bibfield  {title} {\bibinfo {title} {Quantum expander codes},\ }in\ \href {https://doi.org/10.1109/focs.2015.55} {\emph {\bibinfo {booktitle} {2015 IEEE 56th Annual Symposium on Foundations of Computer Science}}}\ (\bibinfo  {publisher} {IEEE},\ \bibinfo {year} {2015})\BibitemShut {NoStop}%
\bibitem [{\citenamefont {Quintavalle}\ and\ \citenamefont {Campbell}(2022)}]{quintavalle2022reshape}%
  \BibitemOpen
  \bibfield  {author} {\bibinfo {author} {\bibfnamefont {A.~O.}\ \bibnamefont {Quintavalle}}\ and\ \bibinfo {author} {\bibfnamefont {E.~T.}\ \bibnamefont {Campbell}},\ }\bibfield  {title} {\bibinfo {title} {Reshape: A decoder for hypergraph product codes},\ }\href {https://doi.org/10.1109/TIT.2022.3184108} {\bibfield  {journal} {\bibinfo  {journal} {IEEE Transactions on Information Theory}\ }\textbf {\bibinfo {volume} {68}} (\bibinfo {year} {2022})}\BibitemShut {NoStop}%
\bibitem [{\citenamefont {Quintavalle}\ \emph {et~al.}(2023)\citenamefont {Quintavalle}, \citenamefont {Webster},\ and\ \citenamefont {Vasmer}}]{quintavalle2023partitioning}%
  \BibitemOpen
  \bibfield  {author} {\bibinfo {author} {\bibfnamefont {A.~O.}\ \bibnamefont {Quintavalle}}, \bibinfo {author} {\bibfnamefont {P.}~\bibnamefont {Webster}},\ and\ \bibinfo {author} {\bibfnamefont {M.}~\bibnamefont {Vasmer}},\ }\bibfield  {title} {\bibinfo {title} {Partitioning qubits in hypergraph product codes to implement logical gates},\ }\href {http://dx.doi.org/10.22331/q-2023-10-24-1153} {\bibfield  {journal} {\bibinfo  {journal} {Quantum}\ }\textbf {\bibinfo {volume} {7}},\ \bibinfo {pages} {1153} (\bibinfo {year} {2023})}\BibitemShut {NoStop}%
\bibitem [{\citenamefont {Pryadko}\ \emph {et~al.}(2022)\citenamefont {Pryadko}, \citenamefont {Shabashov},\ and\ \citenamefont {Kozin}}]{Pryadko_2022}%
  \BibitemOpen
  \bibfield  {author} {\bibinfo {author} {\bibfnamefont {L.~P.}\ \bibnamefont {Pryadko}}, \bibinfo {author} {\bibfnamefont {V.~A.}\ \bibnamefont {Shabashov}},\ and\ \bibinfo {author} {\bibfnamefont {V.~K.}\ \bibnamefont {Kozin}},\ }\bibfield  {title} {\bibinfo {title} {{QDistRnd}: A {GAP} package for computing the distance of quantum error-correcting codes},\ }\href {https://doi.org/10.21105/joss.04120} {\bibfield  {journal} {\bibinfo  {journal} {Journal of Open Source Software}\ }\textbf {\bibinfo {volume} {7}},\ \bibinfo {pages} {4120} (\bibinfo {year} {2022})}\BibitemShut {NoStop}%
\bibitem [{\citenamefont {Knill}(2005{\natexlab{a}})}]{c4c6Knill2005}%
  \BibitemOpen
  \bibfield  {author} {\bibinfo {author} {\bibfnamefont {E.}~\bibnamefont {Knill}},\ }\bibfield  {title} {\bibinfo {title} {Quantum computing with realistically noisy devices},\ }\href {https://doi.org/10.1038/nature03350} {\bibfield  {journal} {\bibinfo  {journal} {Nature}\ }\textbf {\bibinfo {volume} {434}},\ \bibinfo {pages} {39} (\bibinfo {year} {2005}{\natexlab{a}})}\BibitemShut {NoStop}%
\bibitem [{\citenamefont {Yoshida}\ \emph {et~al.}(2024)\citenamefont {Yoshida}, \citenamefont {Tamiya},\ and\ \citenamefont {Yamasaki}}]{yoshida2024concatenatecodessavequbits}%
  \BibitemOpen
  \bibfield  {author} {\bibinfo {author} {\bibfnamefont {S.}~\bibnamefont {Yoshida}}, \bibinfo {author} {\bibfnamefont {S.}~\bibnamefont {Tamiya}},\ and\ \bibinfo {author} {\bibfnamefont {H.}~\bibnamefont {Yamasaki}},\ }\bibfield  {title} {\bibinfo {title} {Concatenate codes, save qubits},\ }\href {https://doi.org/10.48550/arXiv.2402.09606} {\bibfield  {journal} {\bibinfo  {journal} {arXiv preprint arXiv:2402.09606}\ } (\bibinfo {year} {2024})}\BibitemShut {NoStop}%
\bibitem [{\citenamefont {Goto}(2024)}]{goto2024manyhypercubecodeshighratequantum}%
  \BibitemOpen
  \bibfield  {author} {\bibinfo {author} {\bibfnamefont {H.}~\bibnamefont {Goto}},\ }\bibfield  {title} {\bibinfo {title} {High-performance fault-tolerant quantum computing with many-hypercube codes},\ }\href {http://dx.doi.org/10.1126/sciadv.adp6388} {\bibfield  {journal} {\bibinfo  {journal} {Science Advances}\ }\textbf {\bibinfo {volume} {10}} (\bibinfo {year} {2024})}\BibitemShut {NoStop}%
\bibitem [{\citenamefont {Criger}\ and\ \citenamefont {Terhal}(2016)}]{criger2016}%
  \BibitemOpen
  \bibfield  {author} {\bibinfo {author} {\bibfnamefont {B.}~\bibnamefont {Criger}}\ and\ \bibinfo {author} {\bibfnamefont {B.}~\bibnamefont {Terhal}},\ }\bibfield  {title} {\bibinfo {title} {Noise thresholds for the [[4, 2, 2]]-concatenated toric code},\ }\href {http://dx.doi.org/10.26421/QIC16.15-16} {\bibfield  {journal} {\bibinfo  {journal} {Quantum Information and Computation}\ }\textbf {\bibinfo {volume} {16}} (\bibinfo {year} {2016})}\BibitemShut {NoStop}%
\bibitem [{\citenamefont {Knill}(2005{\natexlab{b}})}]{knill2005}%
  \BibitemOpen
  \bibfield  {author} {\bibinfo {author} {\bibfnamefont {E.}~\bibnamefont {Knill}},\ }\bibfield  {title} {\bibinfo {title} {Scalable quantum computing in the presence of large detected-error rates},\ }\href {https://link.aps.org/doi/10.1103/PhysRevA.71.042322} {\bibfield  {journal} {\bibinfo  {journal} {Phys. Rev. A}\ }\textbf {\bibinfo {volume} {71}},\ \bibinfo {pages} {042322} (\bibinfo {year} {2005}{\natexlab{b}})}\BibitemShut {NoStop}%
\bibitem [{\citenamefont {Shor}(1996)}]{Shor1996}%
  \BibitemOpen
  \bibfield  {author} {\bibinfo {author} {\bibfnamefont {P.}~\bibnamefont {Shor}},\ }\bibfield  {title} {\bibinfo {title} {Fault-tolerant quantum computation},\ }in\ \href {https://doi.org/10.1109/SFCS.1996.548464} {\emph {\bibinfo {booktitle} {Proceedings of 37th Conference on Foundations of Computer Science}}}\ (\bibinfo {year} {1996})\ pp.\ \bibinfo {pages} {56--65}\BibitemShut {NoStop}%
\bibitem [{\citenamefont {Goto}\ and\ \citenamefont {Uchikawa}(2013)}]{goto2013}%
  \BibitemOpen
  \bibfield  {author} {\bibinfo {author} {\bibfnamefont {H.}~\bibnamefont {Goto}}\ and\ \bibinfo {author} {\bibfnamefont {H.}~\bibnamefont {Uchikawa}},\ }\bibfield  {title} {\bibinfo {title} {Fault-tolerant quantum computation with a soft-decision decoder for error correction and detection by teleportation},\ }\href {https://doi.org/10.1038/srep02044} {\bibfield  {journal} {\bibinfo  {journal} {Sci. Rep.}\ }\textbf {\bibinfo {volume} {3}} (\bibinfo {year} {2013})}\BibitemShut {NoStop}%
\bibitem [{\citenamefont {Poulin}(2006)}]{Poulin_2006}%
  \BibitemOpen
  \bibfield  {author} {\bibinfo {author} {\bibfnamefont {D.}~\bibnamefont {Poulin}},\ }\bibfield  {title} {\bibinfo {title} {Optimal and efficient decoding of concatenated quantum block codes},\ }\href {http://dx.doi.org/10.1103/PhysRevA.74.052333} {\bibfield  {journal} {\bibinfo  {journal} {Phys. Rev. A}\ }\textbf {\bibinfo {volume} {74}} (\bibinfo {year} {2006})}\BibitemShut {NoStop}%
\bibitem [{\citenamefont {Meister}\ \emph {et~al.}(2024)\citenamefont {Meister}, \citenamefont {Pattison},\ and\ \citenamefont {Preskill}}]{meister2024efficientsoftoutputdecoderssurface}%
  \BibitemOpen
  \bibfield  {author} {\bibinfo {author} {\bibfnamefont {N.}~\bibnamefont {Meister}}, \bibinfo {author} {\bibfnamefont {C.~A.}\ \bibnamefont {Pattison}},\ and\ \bibinfo {author} {\bibfnamefont {J.}~\bibnamefont {Preskill}},\ }\bibfield  {title} {\bibinfo {title} {Efficient soft-output decoders for the surface code},\ }\href {https://doi.org/10.48550/arXiv.2405.07433} {\bibfield  {journal} {\bibinfo  {journal} {arXiv preprint arXiv:2405.07433}\ } (\bibinfo {year} {2024})}\BibitemShut {NoStop}%
\bibitem [{\citenamefont {Huang}\ and\ \citenamefont {Puri}(2024)}]{huang2024}%
  \BibitemOpen
  \bibfield  {author} {\bibinfo {author} {\bibfnamefont {S.}~\bibnamefont {Huang}}\ and\ \bibinfo {author} {\bibfnamefont {S.}~\bibnamefont {Puri}},\ }\bibfield  {title} {\bibinfo {title} {Increasing memory lifetime of quantum low-density parity check codes with sliding-window noisy syndrome decoding},\ }\href {https://link.aps.org/doi/10.1103/PhysRevA.110.012453} {\bibfield  {journal} {\bibinfo  {journal} {Phys. Rev. A}\ }\textbf {\bibinfo {volume} {110}},\ \bibinfo {pages} {012453} (\bibinfo {year} {2024})}\BibitemShut {NoStop}%
\bibitem [{\citenamefont {Wang}\ \emph {et~al.}(2011)\citenamefont {Wang}, \citenamefont {Fowler},\ and\ \citenamefont {Hollenberg}}]{wang2011}%
  \BibitemOpen
  \bibfield  {author} {\bibinfo {author} {\bibfnamefont {D.~S.}\ \bibnamefont {Wang}}, \bibinfo {author} {\bibfnamefont {A.~G.}\ \bibnamefont {Fowler}},\ and\ \bibinfo {author} {\bibfnamefont {L.~C.~L.}\ \bibnamefont {Hollenberg}},\ }\bibfield  {title} {\bibinfo {title} {Surface code quantum computing with error rates over 1\%},\ }\href {https://link.aps.org/doi/10.1103/PhysRevA.83.020302} {\bibfield  {journal} {\bibinfo  {journal} {Phys. Rev. A}\ }\textbf {\bibinfo {volume} {83}},\ \bibinfo {pages} {020302} (\bibinfo {year} {2011})}\BibitemShut {NoStop}%
\bibitem [{\citenamefont {Xu}\ \emph {et~al.}(2024{\natexlab{b}})\citenamefont {Xu}, \citenamefont {Ataides}, \citenamefont {Pattison}, \citenamefont {Raveendran}, \citenamefont {Bluvstein}, \citenamefont {Wurtz}, \citenamefont {Vasic}, \citenamefont {Lukin}, \citenamefont {Jiang},\ and\ \citenamefont {Zhou}}]{xu2023constantoverhead}%
  \BibitemOpen
  \bibfield  {author} {\bibinfo {author} {\bibfnamefont {Q.}~\bibnamefont {Xu}}, \bibinfo {author} {\bibfnamefont {J.~P.~B.}\ \bibnamefont {Ataides}}, \bibinfo {author} {\bibfnamefont {C.~A.}\ \bibnamefont {Pattison}}, \bibinfo {author} {\bibfnamefont {N.}~\bibnamefont {Raveendran}}, \bibinfo {author} {\bibfnamefont {D.}~\bibnamefont {Bluvstein}}, \bibinfo {author} {\bibfnamefont {J.}~\bibnamefont {Wurtz}}, \bibinfo {author} {\bibfnamefont {B.}~\bibnamefont {Vasic}}, \bibinfo {author} {\bibfnamefont {M.~D.}\ \bibnamefont {Lukin}}, \bibinfo {author} {\bibfnamefont {L.}~\bibnamefont {Jiang}},\ and\ \bibinfo {author} {\bibfnamefont {H.}~\bibnamefont {Zhou}},\ }\bibfield  {title} {\bibinfo {title} {Constant-overhead fault-tolerant quantum computation with reconfigurable atom arrays},\ }\href {https://doi.org/10.1038/s41567-024-02479-z} {\bibfield  {journal} {\bibinfo  {journal} {Nat. Phys.}\ }\textbf {\bibinfo {volume} {20}},\ \bibinfo {pages} {1084} (\bibinfo {year} {2024}{\natexlab{b}})}\BibitemShut {NoStop}%
\bibitem [{\citenamefont {Gong}\ \emph {et~al.}(2024)\citenamefont {Gong}, \citenamefont {Cammerer},\ and\ \citenamefont {Renes}}]{gong2024lowlatencyiterativedecodingqldpc}%
  \BibitemOpen
  \bibfield  {author} {\bibinfo {author} {\bibfnamefont {A.}~\bibnamefont {Gong}}, \bibinfo {author} {\bibfnamefont {S.}~\bibnamefont {Cammerer}},\ and\ \bibinfo {author} {\bibfnamefont {J.~M.}\ \bibnamefont {Renes}},\ }\bibfield  {title} {\bibinfo {title} {Toward low-latency iterative decoding of {QLDPC} codes under circuit-level noise},\ }\href {https://doi.org/10.48550/arXiv.2403.18901} {\bibfield  {journal} {\bibinfo  {journal} {arXiv preprint arXiv:2403.18901}\ } (\bibinfo {year} {2024})}\BibitemShut {NoStop}%
\bibitem [{\citenamefont {Moses}\ \emph {et~al.}(2023)\citenamefont {Moses}, \citenamefont {Baldwin}, \citenamefont {Allman}, \citenamefont {Ancona}, \citenamefont {Ascarrunz}, \citenamefont {Barnes}, \citenamefont {Bartolotta}, \citenamefont {Bjork}, \citenamefont {Blanchard} \emph {et~al.}}]{moses2023}%
  \BibitemOpen
  \bibfield  {author} {\bibinfo {author} {\bibfnamefont {S.~A.}\ \bibnamefont {Moses}}, \bibinfo {author} {\bibfnamefont {C.~H.}\ \bibnamefont {Baldwin}}, \bibinfo {author} {\bibfnamefont {M.~S.}\ \bibnamefont {Allman}}, \bibinfo {author} {\bibfnamefont {R.}~\bibnamefont {Ancona}}, \bibinfo {author} {\bibfnamefont {L.}~\bibnamefont {Ascarrunz}}, \bibinfo {author} {\bibfnamefont {C.}~\bibnamefont {Barnes}}, \bibinfo {author} {\bibfnamefont {J.}~\bibnamefont {Bartolotta}}, \bibinfo {author} {\bibfnamefont {B.}~\bibnamefont {Bjork}}, \bibinfo {author} {\bibfnamefont {P.}~\bibnamefont {Blanchard}}, \emph {et~al.},\ }\bibfield  {title} {\bibinfo {title} {A race-track trapped-ion quantum processor},\ }\href {https://link.aps.org/doi/10.1103/PhysRevX.13.041052} {\bibfield  {journal} {\bibinfo  {journal} {Phys. Rev. X}\ }\textbf {\bibinfo {volume} {13}},\ \bibinfo {pages} {041052} (\bibinfo {year} {2023})}\BibitemShut {NoStop}%
\bibitem [{\citenamefont {Bluvstein}\ \emph {et~al.}(2024)\citenamefont {Bluvstein}, \citenamefont {Evered}, \citenamefont {Geim}, \citenamefont {Li}, \citenamefont {Zhou}, \citenamefont {Manovitz}, \citenamefont {Ebadi}, \citenamefont {Cain}, \citenamefont {Kalinowski}, \citenamefont {Hangleiter} \emph {et~al.}}]{Bluvstein_2023}%
  \BibitemOpen
  \bibfield  {author} {\bibinfo {author} {\bibfnamefont {D.}~\bibnamefont {Bluvstein}}, \bibinfo {author} {\bibfnamefont {S.~J.}\ \bibnamefont {Evered}}, \bibinfo {author} {\bibfnamefont {A.~A.}\ \bibnamefont {Geim}}, \bibinfo {author} {\bibfnamefont {S.~H.}\ \bibnamefont {Li}}, \bibinfo {author} {\bibfnamefont {H.}~\bibnamefont {Zhou}}, \bibinfo {author} {\bibfnamefont {T.}~\bibnamefont {Manovitz}}, \bibinfo {author} {\bibfnamefont {S.}~\bibnamefont {Ebadi}}, \bibinfo {author} {\bibfnamefont {M.}~\bibnamefont {Cain}}, \bibinfo {author} {\bibfnamefont {M.}~\bibnamefont {Kalinowski}}, \bibinfo {author} {\bibfnamefont {D.}~\bibnamefont {Hangleiter}}, \emph {et~al.},\ }\bibfield  {title} {\bibinfo {title} {Logical quantum processor based on reconfigurable atom arrays},\ }\href {http://dx.doi.org/10.1038/s41586-023-06927-3} {\bibfield  {journal} {\bibinfo  {journal} {Nature}\ }\textbf {\bibinfo {volume} {626}},\ \bibinfo {pages} {58} (\bibinfo {year} {2024})}\BibitemShut {NoStop}%
\bibitem [{\citenamefont {Delfosse}\ \emph {et~al.}(2021{\natexlab{b}})\citenamefont {Delfosse}, \citenamefont {Beverland},\ and\ \citenamefont {Tremblay}}]{delfosse2021boundsstabilizermeasurementcircuits}%
  \BibitemOpen
  \bibfield  {author} {\bibinfo {author} {\bibfnamefont {N.}~\bibnamefont {Delfosse}}, \bibinfo {author} {\bibfnamefont {M.~E.}\ \bibnamefont {Beverland}},\ and\ \bibinfo {author} {\bibfnamefont {M.~A.}\ \bibnamefont {Tremblay}},\ }\bibfield  {title} {\bibinfo {title} {Bounds on stabilizer measurement circuits and obstructions to local implementations of quantum {LDPC} codes},\ }\href {https://doi.org/10.48550/arXiv.2109.14599} {\bibfield  {journal} {\bibinfo  {journal} {arXiv preprint arXiv:2109.14599}\ } (\bibinfo {year} {2021}{\natexlab{b}})}\BibitemShut {NoStop}%
\bibitem [{\citenamefont {Dennis}\ \emph {et~al.}(2002)\citenamefont {Dennis}, \citenamefont {Kitaev}, \citenamefont {Landahl},\ and\ \citenamefont {Preskill}}]{Dennis_2002}%
  \BibitemOpen
  \bibfield  {author} {\bibinfo {author} {\bibfnamefont {E.}~\bibnamefont {Dennis}}, \bibinfo {author} {\bibfnamefont {A.}~\bibnamefont {Kitaev}}, \bibinfo {author} {\bibfnamefont {A.}~\bibnamefont {Landahl}},\ and\ \bibinfo {author} {\bibfnamefont {J.}~\bibnamefont {Preskill}},\ }\bibfield  {title} {\bibinfo {title} {Topological quantum memory},\ }\href {http://dx.doi.org/10.1063/1.1499754} {\bibfield  {journal} {\bibinfo  {journal} {Journal of Mathematical Physics}\ }\textbf {\bibinfo {volume} {43}},\ \bibinfo {pages} {4452–4505} (\bibinfo {year} {2002})}\BibitemShut {NoStop}%
\bibitem [{\citenamefont {Manes}\ and\ \citenamefont {Claes}(2023)}]{manes2023}%
  \BibitemOpen
  \bibfield  {author} {\bibinfo {author} {\bibfnamefont {A.~G.}\ \bibnamefont {Manes}}\ and\ \bibinfo {author} {\bibfnamefont {J.}~\bibnamefont {Claes}},\ }\bibfield  {title} {\bibinfo {title} {Distance-preserving stabilizer measurements in hypergraph product codes},\ }\href {https://doi.org/10.48550/arXiv.2308.15520} {\bibfield  {journal} {\bibinfo  {journal} {arXiv preprint arXiv:2308.15520}\ } (\bibinfo {year} {2023})}\BibitemShut {NoStop}%
\bibitem [{\citenamefont {Tan}\ and\ \citenamefont {Stambler}(2024)}]{tan2024}%
  \BibitemOpen
  \bibfield  {author} {\bibinfo {author} {\bibfnamefont {S.~J.~S.}\ \bibnamefont {Tan}}\ and\ \bibinfo {author} {\bibfnamefont {L.}~\bibnamefont {Stambler}},\ }\bibfield  {title} {\bibinfo {title} {Effective distance of higher dimensional hgps and weight-reduced quantum {LDPC} codes},\ }\href {https://doi.org/10.48550/arXiv.2409.02193} {\bibfield  {journal} {\bibinfo  {journal} {arXiv preprint arXiv:2409.02193}\ } (\bibinfo {year} {2024})}\BibitemShut {NoStop}%
\bibitem [{\citenamefont {Bombín}(2015)}]{Bomb_n_2015}%
  \BibitemOpen
  \bibfield  {author} {\bibinfo {author} {\bibfnamefont {H.}~\bibnamefont {Bombín}},\ }\bibfield  {title} {\bibinfo {title} {Single-shot fault-tolerant quantum error correction},\ }\href {http://dx.doi.org/10.1103/PhysRevX.5.031043} {\bibfield  {journal} {\bibinfo  {journal} {Phys. Rev. X}\ }\textbf {\bibinfo {volume} {5}} (\bibinfo {year} {2015})}\BibitemShut {NoStop}%
\bibitem [{\citenamefont {Berthusen}\ and\ \citenamefont {Gottesman}(2024)}]{Berthusen_2023}%
  \BibitemOpen
  \bibfield  {author} {\bibinfo {author} {\bibfnamefont {N.}~\bibnamefont {Berthusen}}\ and\ \bibinfo {author} {\bibfnamefont {D.}~\bibnamefont {Gottesman}},\ }\bibfield  {title} {\bibinfo {title} {Partial syndrome measurement for hypergraph product codes},\ }\href {http://dx.doi.org/10.22331/q-2024-05-14-1345} {\bibfield  {journal} {\bibinfo  {journal} {Quantum}\ }\textbf {\bibinfo {volume} {8}},\ \bibinfo {pages} {1345} (\bibinfo {year} {2024})}\BibitemShut {NoStop}%
\bibitem [{\citenamefont {MacKay}\ and\ \citenamefont {Neal}(1997)}]{mackay1997}%
  \BibitemOpen
  \bibfield  {author} {\bibinfo {author} {\bibfnamefont {D.~J.~C.}\ \bibnamefont {MacKay}}\ and\ \bibinfo {author} {\bibfnamefont {R.~M.}\ \bibnamefont {Neal}},\ }\bibfield  {title} {\bibinfo {title} {Near shannon limit performance of low density parity check codes},\ }\href {https://doi.org/10.1049/el:19970362} {\bibfield  {journal} {\bibinfo  {journal} {Electronics Letters}\ }\textbf {\bibinfo {volume} {33}} (\bibinfo {year} {1997})}\BibitemShut {NoStop}%
\bibitem [{\citenamefont {Roffe}(2022)}]{Roffe_LDPC_Python_tools_2022}%
  \BibitemOpen
  \bibfield  {author} {\bibinfo {author} {\bibfnamefont {J.}~\bibnamefont {Roffe}},\ }\href {https://pypi.org/project/ldpc/} {\bibinfo {title} {{LDPC: Python tools for low density parity check codes}}} (\bibinfo {year} {2022})\BibitemShut {NoStop}%
\bibitem [{\citenamefont {Hillmann}\ \emph {et~al.}(2024)\citenamefont {Hillmann}, \citenamefont {Berent}, \citenamefont {Quintavalle}, \citenamefont {Eisert}, \citenamefont {Wille},\ and\ \citenamefont {Roffe}}]{hillmann2024localizedstatisticsdecodingparallel}%
  \BibitemOpen
  \bibfield  {author} {\bibinfo {author} {\bibfnamefont {T.}~\bibnamefont {Hillmann}}, \bibinfo {author} {\bibfnamefont {L.}~\bibnamefont {Berent}}, \bibinfo {author} {\bibfnamefont {A.~O.}\ \bibnamefont {Quintavalle}}, \bibinfo {author} {\bibfnamefont {J.}~\bibnamefont {Eisert}}, \bibinfo {author} {\bibfnamefont {R.}~\bibnamefont {Wille}},\ and\ \bibinfo {author} {\bibfnamefont {J.}~\bibnamefont {Roffe}},\ }\bibfield  {title} {\bibinfo {title} {Localized statistics decoding: A parallel decoding algorithm for quantum low-density parity-check codes},\ }\href {https://doi.org/10.48550/arXiv.2406.18655} {\bibfield  {journal} {\bibinfo  {journal} {arXiv preprint arXiv:2406.18655}\ } (\bibinfo {year} {2024})}\BibitemShut {NoStop}%
\bibitem [{\citenamefont {Fang}\ \emph {et~al.}(2023)\citenamefont {Fang}, \citenamefont {Zhang}, \citenamefont {Shi},\ and\ \citenamefont {Li}}]{fang2023dynamicquantumcircuitcompilation}%
  \BibitemOpen
  \bibfield  {author} {\bibinfo {author} {\bibfnamefont {K.}~\bibnamefont {Fang}}, \bibinfo {author} {\bibfnamefont {M.}~\bibnamefont {Zhang}}, \bibinfo {author} {\bibfnamefont {R.}~\bibnamefont {Shi}},\ and\ \bibinfo {author} {\bibfnamefont {Y.}~\bibnamefont {Li}},\ }\bibfield  {title} {\bibinfo {title} {Dynamic quantum circuit compilation},\ }\href {https://doi.org/10.48550/arXiv.2310.11021} {\bibfield  {journal} {\bibinfo  {journal} {arXiv preprint arXiv:2310.11021}\ } (\bibinfo {year} {2023})}\BibitemShut {NoStop}%
\bibitem [{\citenamefont {Pino}\ \emph {et~al.}(2021)\citenamefont {Pino}, \citenamefont {Dreiling}, \citenamefont {Figgatt}, \citenamefont {Gaebler}, \citenamefont {Moses}, \citenamefont {Allman}, \citenamefont {Baldwin}, \citenamefont {Foss-Feig}, \citenamefont {Hayes}, \citenamefont {Mayer}, \citenamefont {Ryan-Anderson},\ and\ \citenamefont {Neyenhuis}}]{Pino2021}%
  \BibitemOpen
  \bibfield  {author} {\bibinfo {author} {\bibfnamefont {J.~M.}\ \bibnamefont {Pino}}, \bibinfo {author} {\bibfnamefont {J.~M.}\ \bibnamefont {Dreiling}}, \bibinfo {author} {\bibfnamefont {C.}~\bibnamefont {Figgatt}}, \bibinfo {author} {\bibfnamefont {J.~P.}\ \bibnamefont {Gaebler}}, \bibinfo {author} {\bibfnamefont {S.~A.}\ \bibnamefont {Moses}}, \bibinfo {author} {\bibfnamefont {M.~S.}\ \bibnamefont {Allman}}, \bibinfo {author} {\bibfnamefont {C.~H.}\ \bibnamefont {Baldwin}}, \bibinfo {author} {\bibfnamefont {M.}~\bibnamefont {Foss-Feig}}, \bibinfo {author} {\bibfnamefont {D.}~\bibnamefont {Hayes}}, \bibinfo {author} {\bibfnamefont {K.}~\bibnamefont {Mayer}}, \bibinfo {author} {\bibfnamefont {C.}~\bibnamefont {Ryan-Anderson}},\ and\ \bibinfo {author} {\bibfnamefont {B.}~\bibnamefont {Neyenhuis}},\ }\bibfield  {title} {\bibinfo {title} {Demonstration of the trapped-ion quantum {CCD} computer architecture},\ }\href {https://doi.org/10.1038/s41586-021-03318-4} {\bibfield  {journal} {\bibinfo  {journal}
  {Nature}\ }\textbf {\bibinfo {volume} {592}},\ \bibinfo {pages} {209} (\bibinfo {year} {2021})}\BibitemShut {NoStop}%
\bibitem [{\citenamefont {Pecorari}\ \emph {et~al.}(2025)\citenamefont {Pecorari}, \citenamefont {Jandura}, \citenamefont {Brennen},\ and\ \citenamefont {Pupillo}}]{pecorari2024highratequantumldpccodes}%
  \BibitemOpen
  \bibfield  {author} {\bibinfo {author} {\bibfnamefont {L.}~\bibnamefont {Pecorari}}, \bibinfo {author} {\bibfnamefont {S.}~\bibnamefont {Jandura}}, \bibinfo {author} {\bibfnamefont {G.~K.}\ \bibnamefont {Brennen}},\ and\ \bibinfo {author} {\bibfnamefont {G.}~\bibnamefont {Pupillo}},\ }\bibfield  {title} {\bibinfo {title} {High-rate quantum ldpc codes for long-range-connected neutral atom registers},\ }\href {http://dx.doi.org/10.1038/s41467-025-56255-5} {\bibfield  {journal} {\bibinfo  {journal} {Nature Communications}\ }\textbf {\bibinfo {volume} {16}} (\bibinfo {year} {2025})}\BibitemShut {NoStop}%
\bibitem [{\citenamefont {Bravyi}\ and\ \citenamefont {Kitaev}(1998)}]{bravyi1998quantum}%
  \BibitemOpen
  \bibfield  {author} {\bibinfo {author} {\bibfnamefont {S.~B.}\ \bibnamefont {Bravyi}}\ and\ \bibinfo {author} {\bibfnamefont {A.~Y.}\ \bibnamefont {Kitaev}},\ }\bibfield  {title} {\bibinfo {title} {Quantum codes on a lattice with boundary},\ }\href {https://doi.org/10.48550/arXiv.quant-ph/9811052} {\bibfield  {journal} {\bibinfo  {journal} {arXiv prerpint arXiv:quant-ph/9811052}\ } (\bibinfo {year} {1998})}\BibitemShut {NoStop}%
\bibitem [{\citenamefont {Kitaev}(2003)}]{Kitaev_2003}%
  \BibitemOpen
  \bibfield  {author} {\bibinfo {author} {\bibfnamefont {A.}~\bibnamefont {Kitaev}},\ }\bibfield  {title} {\bibinfo {title} {Fault-tolerant quantum computation by anyons},\ }\href {https://doi.org/10.1016/s0003-4916(02)00018-0} {\bibfield  {journal} {\bibinfo  {journal} {Annals of Physics}\ }\textbf {\bibinfo {volume} {303}},\ \bibinfo {pages} {2} (\bibinfo {year} {2003})}\BibitemShut {NoStop}%
\bibitem [{\citenamefont {Bombin}\ and\ \citenamefont {Martin-Delgado}(2006)}]{Bombin_2006}%
  \BibitemOpen
  \bibfield  {author} {\bibinfo {author} {\bibfnamefont {H.}~\bibnamefont {Bombin}}\ and\ \bibinfo {author} {\bibfnamefont {M.~A.}\ \bibnamefont {Martin-Delgado}},\ }\bibfield  {title} {\bibinfo {title} {Topological quantum distillation},\ }\bibfield  {journal} {\bibinfo  {journal} {Phys. Rev. Lett.}\ }\textbf {\bibinfo {volume} {97}},\ \href {https://doi.org/10.1103/physrevlett.97.180501} {10.1103/physrevlett.97.180501} (\bibinfo {year} {2006})\BibitemShut {NoStop}%
\bibitem [{\citenamefont {Bravyi}\ and\ \citenamefont {Terhal}(2009)}]{Bravyi_Terhal_2009}%
  \BibitemOpen
  \bibfield  {author} {\bibinfo {author} {\bibfnamefont {S.}~\bibnamefont {Bravyi}}\ and\ \bibinfo {author} {\bibfnamefont {B.}~\bibnamefont {Terhal}},\ }\bibfield  {title} {\bibinfo {title} {A no-go theorem for a two-dimensional self-correcting quantum memory based on stabilizer codes},\ }\href {https://doi.org/10.1088/1367-2630/11/4/043029} {\bibfield  {journal} {\bibinfo  {journal} {New Journal of Physics}\ }\textbf {\bibinfo {volume} {11}},\ \bibinfo {pages} {043029} (\bibinfo {year} {2009})}\BibitemShut {NoStop}%
\bibitem [{\citenamefont {Bravyi}\ \emph {et~al.}(2010)\citenamefont {Bravyi}, \citenamefont {Poulin},\ and\ \citenamefont {Terhal}}]{Bravyi_Poulin_Terhal_2010}%
  \BibitemOpen
  \bibfield  {author} {\bibinfo {author} {\bibfnamefont {S.}~\bibnamefont {Bravyi}}, \bibinfo {author} {\bibfnamefont {D.}~\bibnamefont {Poulin}},\ and\ \bibinfo {author} {\bibfnamefont {B.}~\bibnamefont {Terhal}},\ }\bibfield  {title} {\bibinfo {title} {Tradeoffs for reliable quantum information storage in {2D} systems},\ }\href {https://link.aps.org/doi/10.1103/PhysRevLett.104.050503} {\bibfield  {journal} {\bibinfo  {journal} {Phys. Rev. Lett.}\ }\textbf {\bibinfo {volume} {104}},\ \bibinfo {pages} {050503} (\bibinfo {year} {2010})}\BibitemShut {NoStop}%
\bibitem [{\citenamefont {Baspin}\ and\ \citenamefont {Krishna}(2022)}]{Baspin_2022}%
  \BibitemOpen
  \bibfield  {author} {\bibinfo {author} {\bibfnamefont {N.}~\bibnamefont {Baspin}}\ and\ \bibinfo {author} {\bibfnamefont {A.}~\bibnamefont {Krishna}},\ }\bibfield  {title} {\bibinfo {title} {Quantifying nonlocality: How outperforming local quantum codes is expensive},\ }\href {https://doi.org/10.1103/PhysRevLett.129.050505} {\bibfield  {journal} {\bibinfo  {journal} {Phys. Rev. Lett.}\ }\textbf {\bibinfo {volume} {129}},\ \bibinfo {pages} {050505} (\bibinfo {year} {2022})}\BibitemShut {NoStop}%
\bibitem [{\citenamefont {Pattison}\ \emph {et~al.}(2023)\citenamefont {Pattison}, \citenamefont {Krishna},\ and\ \citenamefont {Preskill}}]{pattison2023}%
  \BibitemOpen
  \bibfield  {author} {\bibinfo {author} {\bibfnamefont {C.~A.}\ \bibnamefont {Pattison}}, \bibinfo {author} {\bibfnamefont {A.}~\bibnamefont {Krishna}},\ and\ \bibinfo {author} {\bibfnamefont {J.}~\bibnamefont {Preskill}},\ }\bibfield  {title} {\bibinfo {title} {Hierarchical memories: Simulating quantum ldpc codes with local gates},\ }\href {https://doi.org/10.48550/arXiv.2303.04798} {\bibfield  {journal} {\bibinfo  {journal} {arXiv preprint arXiv:2303.04798}\ } (\bibinfo {year} {2023})}\BibitemShut {NoStop}%
\bibitem [{\citenamefont {Choe}\ and\ \citenamefont {Koenig}(2024)}]{choe2024faulttolerantlyrealizequantumcircuit}%
  \BibitemOpen
  \bibfield  {author} {\bibinfo {author} {\bibfnamefont {S.~H.}\ \bibnamefont {Choe}}\ and\ \bibinfo {author} {\bibfnamefont {R.}~\bibnamefont {Koenig}},\ }\bibfield  {title} {\bibinfo {title} {How to fault-tolerantly realize any quantum circuit with local operations},\ }\href {https://doi.org/10.48550/arXiv.2402.13863} {\bibfield  {journal} {\bibinfo  {journal} {arXiv preprint arXiv:2402.13863}\ } (\bibinfo {year} {2024})}\BibitemShut {NoStop}%
\bibitem [{\citenamefont {Kovalev}\ and\ \citenamefont {Pryadko}(2013)}]{kovalev2013}%
  \BibitemOpen
  \bibfield  {author} {\bibinfo {author} {\bibfnamefont {A.~A.}\ \bibnamefont {Kovalev}}\ and\ \bibinfo {author} {\bibfnamefont {L.~P.}\ \bibnamefont {Pryadko}},\ }\bibfield  {title} {\bibinfo {title} {Quantum kronecker sum-product low-density parity-check codes with finite rate},\ }\href {https://doi.org/10.1103/PhysRevA.88.012311} {\bibfield  {journal} {\bibinfo  {journal} {Phys. Rev. A}\ }\textbf {\bibinfo {volume} {88}},\ \bibinfo {pages} {012311} (\bibinfo {year} {2013})}\BibitemShut {NoStop}%
\bibitem [{\citenamefont {Bravyi}\ \emph {et~al.}(2024)\citenamefont {Bravyi}, \citenamefont {Cross}, \citenamefont {Gambetta}, \citenamefont {Maslov}, \citenamefont {Rall},\ and\ \citenamefont {Yoder}}]{bravyi2023highthreshold}%
  \BibitemOpen
  \bibfield  {author} {\bibinfo {author} {\bibfnamefont {S.}~\bibnamefont {Bravyi}}, \bibinfo {author} {\bibfnamefont {A.~W.}\ \bibnamefont {Cross}}, \bibinfo {author} {\bibfnamefont {J.~M.}\ \bibnamefont {Gambetta}}, \bibinfo {author} {\bibfnamefont {D.}~\bibnamefont {Maslov}}, \bibinfo {author} {\bibfnamefont {P.}~\bibnamefont {Rall}},\ and\ \bibinfo {author} {\bibfnamefont {T.~J.}\ \bibnamefont {Yoder}},\ }\bibfield  {title} {\bibinfo {title} {High-threshold and low-overhead fault-tolerant quantum memory},\ }\href {https://doi.org/https://doi.org/10.1038/s41586-024-07107-7} {\bibfield  {journal} {\bibinfo  {journal} {Nature}\ }\textbf {\bibinfo {volume} {627}},\ \bibinfo {pages} {778} (\bibinfo {year} {2024})}\BibitemShut {NoStop}%
\bibitem [{\citenamefont {Hong}\ \emph {et~al.}(2024)\citenamefont {Hong}, \citenamefont {Marinelli}, \citenamefont {Kaufman},\ and\ \citenamefont {Lucas}}]{Hong_2024}%
  \BibitemOpen
  \bibfield  {author} {\bibinfo {author} {\bibfnamefont {Y.}~\bibnamefont {Hong}}, \bibinfo {author} {\bibfnamefont {M.}~\bibnamefont {Marinelli}}, \bibinfo {author} {\bibfnamefont {A.~M.}\ \bibnamefont {Kaufman}},\ and\ \bibinfo {author} {\bibfnamefont {A.}~\bibnamefont {Lucas}},\ }\bibfield  {title} {\bibinfo {title} {Long-range-enhanced surface codes},\ }\href {http://dx.doi.org/10.1103/PhysRevA.110.022607} {\bibfield  {journal} {\bibinfo  {journal} {Phys. Rev. A}\ }\textbf {\bibinfo {volume} {110}} (\bibinfo {year} {2024})}\BibitemShut {NoStop}%
\bibitem [{\citenamefont {Monroe}\ \emph {et~al.}(2014)\citenamefont {Monroe}, \citenamefont {Raussendorf}, \citenamefont {Ruthven}, \citenamefont {Brown}, \citenamefont {Maunz}, \citenamefont {Duan},\ and\ \citenamefont {Kim}}]{monroe2014}%
  \BibitemOpen
  \bibfield  {author} {\bibinfo {author} {\bibfnamefont {C.}~\bibnamefont {Monroe}}, \bibinfo {author} {\bibfnamefont {R.}~\bibnamefont {Raussendorf}}, \bibinfo {author} {\bibfnamefont {A.}~\bibnamefont {Ruthven}}, \bibinfo {author} {\bibfnamefont {K.~R.}\ \bibnamefont {Brown}}, \bibinfo {author} {\bibfnamefont {P.}~\bibnamefont {Maunz}}, \bibinfo {author} {\bibfnamefont {L.-M.}\ \bibnamefont {Duan}},\ and\ \bibinfo {author} {\bibfnamefont {J.}~\bibnamefont {Kim}},\ }\bibfield  {title} {\bibinfo {title} {Large-scale modular quantum-computer architecture with atomic memory and photonic interconnects},\ }\href {https://doi.org/10.1103/PhysRevA.89.022317} {\bibfield  {journal} {\bibinfo  {journal} {Phys. Rev. A}\ }\textbf {\bibinfo {volume} {89}},\ \bibinfo {pages} {022317} (\bibinfo {year} {2014})}\BibitemShut {NoStop}%
\bibitem [{\citenamefont {Strikis}\ and\ \citenamefont {Berent}(2023)}]{Strikis_2023}%
  \BibitemOpen
  \bibfield  {author} {\bibinfo {author} {\bibfnamefont {A.}~\bibnamefont {Strikis}}\ and\ \bibinfo {author} {\bibfnamefont {L.}~\bibnamefont {Berent}},\ }\bibfield  {title} {\bibinfo {title} {Quantum low-density parity-check codes for modular architectures},\ }\bibfield  {journal} {\bibinfo  {journal} {PRX Quantum}\ }\textbf {\bibinfo {volume} {4}},\ \href {https://doi.org/10.1103/prxquantum.4.020321} {10.1103/prxquantum.4.020321} (\bibinfo {year} {2023})\BibitemShut {NoStop}%
\bibitem [{\citenamefont {Zhu}(2025)}]{zhu2024topological}%
  \BibitemOpen
  \bibfield  {author} {\bibinfo {author} {\bibfnamefont {G.}~\bibnamefont {Zhu}},\ }\bibfield  {title} {\bibinfo {title} {A topological theory for q{LDPC}: non-clifford gates and magic state fountain on homological product codes with constant rate and beyond the $n^{1/3}$ distance barrier},\ }\href {https://doi.org/10.48550/arXiv.2501.19375} {\bibfield  {journal} {\bibinfo  {journal} {arXiv preprint arXiv:2501.19375}\ } (\bibinfo {year} {2025})}\BibitemShut {NoStop}%
\bibitem [{\citenamefont {Malcolm}\ \emph {et~al.}(2025)\citenamefont {Malcolm}, \citenamefont {Glaudell}, \citenamefont {Fuentes}, \citenamefont {Chandra}, \citenamefont {Schotte}, \citenamefont {DeLisle}, \citenamefont {Haenel}, \citenamefont {Ebrahimi}, \citenamefont {Roffe}, \citenamefont {Quintavalle}, \citenamefont {Beale}, \citenamefont {Lee-Hone},\ and\ \citenamefont {Simmons}}]{malcolm2025}%
  \BibitemOpen
  \bibfield  {author} {\bibinfo {author} {\bibfnamefont {A.~J.}\ \bibnamefont {Malcolm}}, \bibinfo {author} {\bibfnamefont {A.~N.}\ \bibnamefont {Glaudell}}, \bibinfo {author} {\bibfnamefont {P.}~\bibnamefont {Fuentes}}, \bibinfo {author} {\bibfnamefont {D.}~\bibnamefont {Chandra}}, \bibinfo {author} {\bibfnamefont {A.}~\bibnamefont {Schotte}}, \bibinfo {author} {\bibfnamefont {C.}~\bibnamefont {DeLisle}}, \bibinfo {author} {\bibfnamefont {R.}~\bibnamefont {Haenel}}, \bibinfo {author} {\bibfnamefont {A.}~\bibnamefont {Ebrahimi}}, \bibinfo {author} {\bibfnamefont {J.}~\bibnamefont {Roffe}}, \bibinfo {author} {\bibfnamefont {A.~O.}\ \bibnamefont {Quintavalle}}, \bibinfo {author} {\bibfnamefont {S.~J.}\ \bibnamefont {Beale}}, \bibinfo {author} {\bibfnamefont {N.~R.}\ \bibnamefont {Lee-Hone}},\ and\ \bibinfo {author} {\bibfnamefont {S.}~\bibnamefont {Simmons}},\ }\bibfield  {title} {\bibinfo {title} {Computing efficiently in {QLDPC} codes},\ }\href {https://doi.org/10.48550/arXiv.2502.07150} {\bibfield  {journal}
  {\bibinfo  {journal} {arXiv preprint arXiv:2502.07150}\ } (\bibinfo {year} {2025})}\BibitemShut {NoStop}%
\bibitem [{\citenamefont {Panteleev}\ and\ \citenamefont {Kalachev}(2021)}]{Panteleev_2021}%
  \BibitemOpen
  \bibfield  {author} {\bibinfo {author} {\bibfnamefont {P.}~\bibnamefont {Panteleev}}\ and\ \bibinfo {author} {\bibfnamefont {G.}~\bibnamefont {Kalachev}},\ }\bibfield  {title} {\bibinfo {title} {Degenerate quantum {LDPC} codes with good finite length performance},\ }\href {https://doi.org/10.22331/q-2021-11-22-585} {\bibfield  {journal} {\bibinfo  {journal} {Quantum}\ }\textbf {\bibinfo {volume} {5}},\ \bibinfo {pages} {585} (\bibinfo {year} {2021})}\BibitemShut {NoStop}%
\bibitem [{\citenamefont {Panteleev}\ and\ \citenamefont {Kalachev}(2022)}]{Panteleev_2022}%
  \BibitemOpen
  \bibfield  {author} {\bibinfo {author} {\bibfnamefont {P.}~\bibnamefont {Panteleev}}\ and\ \bibinfo {author} {\bibfnamefont {G.}~\bibnamefont {Kalachev}},\ }\bibfield  {title} {\bibinfo {title} {Quantum ldpc codes with almost linear minimum distance},\ }\href {https://doi.org/10.1109/tit.2021.3119384} {\bibfield  {journal} {\bibinfo  {journal} {IEEE Transactions on Information Theory}\ }\textbf {\bibinfo {volume} {68}},\ \bibinfo {pages} {213–229} (\bibinfo {year} {2022})}\BibitemShut {NoStop}%
\bibitem [{\citenamefont {Grospellier}\ \emph {et~al.}(2021)\citenamefont {Grospellier}, \citenamefont {Grouès}, \citenamefont {Krishna},\ and\ \citenamefont {Leverrier}}]{Grospellier_2021}%
  \BibitemOpen
  \bibfield  {author} {\bibinfo {author} {\bibfnamefont {A.}~\bibnamefont {Grospellier}}, \bibinfo {author} {\bibfnamefont {L.}~\bibnamefont {Grouès}}, \bibinfo {author} {\bibfnamefont {A.}~\bibnamefont {Krishna}},\ and\ \bibinfo {author} {\bibfnamefont {A.}~\bibnamefont {Leverrier}},\ }\bibfield  {title} {\bibinfo {title} {Combining hard and soft decoders for hypergraph product codes},\ }\href {http://dx.doi.org/10.22331/q-2021-04-15-432} {\bibfield  {journal} {\bibinfo  {journal} {Quantum}\ }\textbf {\bibinfo {volume} {5}},\ \bibinfo {pages} {432} (\bibinfo {year} {2021})}\BibitemShut {NoStop}%
\bibitem [{\citenamefont {Kovalev}\ \emph {et~al.}(2018)\citenamefont {Kovalev}, \citenamefont {Prabhakar}, \citenamefont {Dumer},\ and\ \citenamefont {Pryadko}}]{Kovalev_2018}%
  \BibitemOpen
  \bibfield  {author} {\bibinfo {author} {\bibfnamefont {A.~A.}\ \bibnamefont {Kovalev}}, \bibinfo {author} {\bibfnamefont {S.}~\bibnamefont {Prabhakar}}, \bibinfo {author} {\bibfnamefont {I.}~\bibnamefont {Dumer}},\ and\ \bibinfo {author} {\bibfnamefont {L.~P.}\ \bibnamefont {Pryadko}},\ }\bibfield  {title} {\bibinfo {title} {Numerical and analytical bounds on threshold error rates for hypergraph-product codes},\ }\href {http://dx.doi.org/10.1103/PhysRevA.97.062320} {\bibfield  {journal} {\bibinfo  {journal} {Phys. Rev. A}\ }\textbf {\bibinfo {volume} {97}} (\bibinfo {year} {2018})}\BibitemShut {NoStop}%
\bibitem [{\citenamefont {Tremblay}\ \emph {et~al.}(2022)\citenamefont {Tremblay}, \citenamefont {Delfosse},\ and\ \citenamefont {Beverland}}]{Tremblay_2022}%
  \BibitemOpen
  \bibfield  {author} {\bibinfo {author} {\bibfnamefont {M.~A.}\ \bibnamefont {Tremblay}}, \bibinfo {author} {\bibfnamefont {N.}~\bibnamefont {Delfosse}},\ and\ \bibinfo {author} {\bibfnamefont {M.~E.}\ \bibnamefont {Beverland}},\ }\bibfield  {title} {\bibinfo {title} {Constant-overhead quantum error correction with thin planar connectivity},\ }\href {http://dx.doi.org/10.1103/PhysRevLett.129.050504} {\bibfield  {journal} {\bibinfo  {journal} {Phys. Rev. Lett.}\ }\textbf {\bibinfo {volume} {129}} (\bibinfo {year} {2022})}\BibitemShut {NoStop}%
\bibitem [{\citenamefont {Gidney}(2021)}]{Gidney_2021}%
  \BibitemOpen
  \bibfield  {author} {\bibinfo {author} {\bibfnamefont {C.}~\bibnamefont {Gidney}},\ }\bibfield  {title} {\bibinfo {title} {Stim: a fast stabilizer circuit simulator},\ }\href {http://dx.doi.org/10.22331/q-2021-07-06-497} {\bibfield  {journal} {\bibinfo  {journal} {Quantum}\ }\textbf {\bibinfo {volume} {5}},\ \bibinfo {pages} {497} (\bibinfo {year} {2021})}\BibitemShut {NoStop}%
\bibitem [{\citenamefont {Pattison}(2024)}]{Pattison_2024}%
  \BibitemOpen
  \bibfield  {author} {\bibinfo {author} {\bibfnamefont {C.}~\bibnamefont {Pattison}},\ }\href {https://github.com/qldpc/exp_ldpc} {\bibinfo {title} {{QEC} utilities for practical realizations of general q{LDPC} codes}} (\bibinfo {year} {2024})\BibitemShut {NoStop}%
\bibitem [{\citenamefont {Gottesman}(1998)}]{gottesman1998theory}%
  \BibitemOpen
  \bibfield  {author} {\bibinfo {author} {\bibfnamefont {D.}~\bibnamefont {Gottesman}},\ }\bibfield  {title} {\bibinfo {title} {Theory of fault-tolerant quantum computation},\ }\href {https://doi.org/10.1103/PhysRevA.57.127} {\bibfield  {journal} {\bibinfo  {journal} {Phys. Rev. A}\ }\textbf {\bibinfo {volume} {57}},\ \bibinfo {pages} {127} (\bibinfo {year} {1998})}\BibitemShut {NoStop}%
\bibitem [{\citenamefont {Sayginel}\ \emph {et~al.}(2024)\citenamefont {Sayginel}, \citenamefont {Koutsioumpas}, \citenamefont {Webster}, \citenamefont {Rajput},\ and\ \citenamefont {Browne}}]{sayginel2024fault}%
  \BibitemOpen
  \bibfield  {author} {\bibinfo {author} {\bibfnamefont {H.}~\bibnamefont {Sayginel}}, \bibinfo {author} {\bibfnamefont {S.}~\bibnamefont {Koutsioumpas}}, \bibinfo {author} {\bibfnamefont {M.}~\bibnamefont {Webster}}, \bibinfo {author} {\bibfnamefont {A.}~\bibnamefont {Rajput}},\ and\ \bibinfo {author} {\bibfnamefont {D.~E.}\ \bibnamefont {Browne}},\ }\bibfield  {title} {\bibinfo {title} {Fault-tolerant logical clifford gates from code automorphisms},\ }\href {https://doi.org/10.48550/arXiv.2409.18175} {\bibfield  {journal} {\bibinfo  {journal} {arXiv preprint arXiv:2409.18175}\ } (\bibinfo {year} {2024})}\BibitemShut {NoStop}%
\bibitem [{\citenamefont {Campbell}(2019)}]{campbell2019theory}%
  \BibitemOpen
  \bibfield  {author} {\bibinfo {author} {\bibfnamefont {E.~T.}\ \bibnamefont {Campbell}},\ }\bibfield  {title} {\bibinfo {title} {A theory of single-shot error correction for adversarial noise},\ }\href {http://dx.doi.org/10.1088/2058-9565/aafc8f} {\bibfield  {journal} {\bibinfo  {journal} {Quantum Science and Technology}\ }\textbf {\bibinfo {volume} {4}},\ \bibinfo {pages} {025006} (\bibinfo {year} {2019})}\BibitemShut {NoStop}%
\bibitem [{\citenamefont {Quintavalle}\ \emph {et~al.}(2021)\citenamefont {Quintavalle}, \citenamefont {Vasmer}, \citenamefont {Roffe},\ and\ \citenamefont {Campbell}}]{quintavalle2021single}%
  \BibitemOpen
  \bibfield  {author} {\bibinfo {author} {\bibfnamefont {A.~O.}\ \bibnamefont {Quintavalle}}, \bibinfo {author} {\bibfnamefont {M.}~\bibnamefont {Vasmer}}, \bibinfo {author} {\bibfnamefont {J.}~\bibnamefont {Roffe}},\ and\ \bibinfo {author} {\bibfnamefont {E.~T.}\ \bibnamefont {Campbell}},\ }\bibfield  {title} {\bibinfo {title} {Single-shot error correction of three-dimensional homological product codes},\ }\href {https://link.aps.org/doi/10.1103/PRXQuantum.2.020340} {\bibfield  {journal} {\bibinfo  {journal} {PRX Quantum}\ }\textbf {\bibinfo {volume} {2}},\ \bibinfo {pages} {020340} (\bibinfo {year} {2021})}\BibitemShut {NoStop}%
\bibitem [{\citenamefont {Anshu}\ \emph {et~al.}(2023)\citenamefont {Anshu}, \citenamefont {Breuckmann},\ and\ \citenamefont {Nirkhe}}]{anshu2023nlts}%
  \BibitemOpen
  \bibfield  {author} {\bibinfo {author} {\bibfnamefont {A.}~\bibnamefont {Anshu}}, \bibinfo {author} {\bibfnamefont {N.~P.}\ \bibnamefont {Breuckmann}},\ and\ \bibinfo {author} {\bibfnamefont {C.}~\bibnamefont {Nirkhe}},\ }\bibfield  {title} {\bibinfo {title} {{NLTS} hamiltonians from good quantum codes},\ }in\ \href {https://doi.org/10.1145/3564246.3585114} {\emph {\bibinfo {booktitle} {Proceedings of the 55th Annual ACM Symposium on Theory of Computing}}},\ \bibinfo {series and number} {STOC 2023}\ (\bibinfo  {publisher} {Association for Computing Machinery},\ \bibinfo {address} {New York, NY, USA},\ \bibinfo {year} {2023})\ p.\ \bibinfo {pages} {1090–1096}\BibitemShut {NoStop}%
\bibitem [{\citenamefont {Breuckmann}\ and\ \citenamefont {Burton}(2024)}]{breuckmann2024fold}%
  \BibitemOpen
  \bibfield  {author} {\bibinfo {author} {\bibfnamefont {N.~P.}\ \bibnamefont {Breuckmann}}\ and\ \bibinfo {author} {\bibfnamefont {S.}~\bibnamefont {Burton}},\ }\bibfield  {title} {\bibinfo {title} {Fold-transversal clifford gates for quantum codes},\ }\href {http://dx.doi.org/10.22331/q-2024-06-13-1372} {\bibfield  {journal} {\bibinfo  {journal} {Quantum}\ }\textbf {\bibinfo {volume} {8}},\ \bibinfo {pages} {1372} (\bibinfo {year} {2024})}\BibitemShut {NoStop}%
\bibitem [{\citenamefont {Chao}\ and\ \citenamefont {Reichardt}(2018)}]{chao2018fault}%
  \BibitemOpen
  \bibfield  {author} {\bibinfo {author} {\bibfnamefont {R.}~\bibnamefont {Chao}}\ and\ \bibinfo {author} {\bibfnamefont {B.~W.}\ \bibnamefont {Reichardt}},\ }\bibfield  {title} {\bibinfo {title} {Fault-tolerant quantum computation with few qubits},\ }\href {https://doi.org/10.1038/s41534-018-0085-z} {\bibfield  {journal} {\bibinfo  {journal} {npj Quantum Information}\ }\textbf {\bibinfo {volume} {4}},\ \bibinfo {pages} {42} (\bibinfo {year} {2018})}\BibitemShut {NoStop}%
\end{thebibliography}%

\appendix

\section{Hypergraph product codes}
\label{apx:codes}





\subsection{Quantum expander codes}


Quantum expander codes~\cite{Leverrier_2015} are families of hypergraph product codes where the base classical code(s) are randomly generated LDPC codes. We focus on square quantum expander codes where the Tanner graph of the single base classical code $H$ is randomly generated and $(\Delta_V, \Delta_C)$-regular, meaning that each bit is involved in $\Delta_V$ checks, and each check is supported on $\Delta_C$ bits. Random classical codes generated in this way are good expanders with high probability~\cite{Sipser_Spielman_1996}, and they have asymptotic parameters scaling like $[n, \Theta(n), \Theta(n)]$. The resulting HGP code, is $(\Delta_C, \Delta_V + \Delta_C)$-qLDPC (but not regular), and has parameters scaling like $[[n, O(n), O(\sqrt{n})]]$. Using the configuration model and edge swapping~\cite{Grospellier_2021}, we generate random $(3,4)$-regular classical codes which yield a family of hypergraph product codes with rate $k/n \ge 1/25$ and $[[4,2,2]]$-concatenated HGP codes with rate $k/n \ge 1/50$. 

\subsection{La-cross codes}
\label{apx:lacross}

La-cross codes~\cite{pecorari2024highratequantumldpccodes} are families of hypergraph product codes where the parity check matrix of the base classical code is circulant. That is, the parity check matrix consists of cyclic shifts of a $n$-dimensional vector, $(c_0, c_1, ..., c_{n-1}) \in \mathbb{F}_2^n$. The full parity check matrix $H \in \mathbb{F}_2^{n \times n}$ is then:
\begin{equation}
    H = \begin{pmatrix}
c_0 & c_1 & c_2 & \cdots & c_{n-1} \\
c_{n-1} & c_0 & c_1 &  &  \\
c_{n-2} & c_{n-1} & c_0 & & \vdots \\
\vdots &  &  & \ddots &  \\
c_1 & c_2 & c_3 & \cdots & c_0 \\
\end{pmatrix}
\end{equation}

The seed vector can be represented with a degree-$n$ polynomial of the form $h(x) = \sum_{i=0}^{n-1} c_i x^i$. In this work, we follow Ref.~\cite{pecorari2024highratequantumldpccodes} and consider codes whose polynomials take the form $h(x) = 1+x+x^z$. By taking the first $n-z$ rows of $H$, this yields a new parity check matrix $H' \in \mathbb{F}_2^{(n-z) \times n}$. The corresponding classical code has parameters $[n, z, d]$ and the square HGP code, HGP$(H',H')$ has parameters $[[n^2 + (n-z)^2, z^2, d]]$. For a fixed $z$, La-cross codes have a rate $k/n$ that goes to zero with increasing blocklength.

\subsection{Surface codes}

We note for completeness that surface codes~\cite{bravyi1998quantum, Kitaev_2003} can be interpreted as the hypergraph product of two repetition codes. The following $(n-1) \times n$ parity check matrix,
\begin{equation}
\renewcommand\arraystretch{1}
    H = \begin{pmatrix} 
    1 & 1 & &&&  \\
     & 1 & 1 &  &&& \\
     & & & \ddots && & \\
     &  &     &&1 & 1 
    \end{pmatrix}
\end{equation}
corresponds to the repetition code with parameters $[n,1,n]$. Taking the hypergraph product of two repetition codes thus yields a quantum code with parameters $[[n^2, 1, n]]$ and is equivalent to the surface code.

\section{Numerical simulations}
\label{apx:sims}

In this appendix, we provide additional details on the decoders and simulation for the memory experiments presented in the main text.

Previous works have investigated HGP codes formed from random $(3,4)$-regular classical codes in the code capacity~\cite{Kovalev_2018, Grospellier_2021}, phenemological~\cite{Grospellier_2021}, and circuit-level~\cite{delfosse2021boundsstabilizermeasurementcircuits, Tremblay_2022, xu2023constantoverhead} noise models. 
La-cross codes have also been studied in a modified circuit-level noise model~\cite{pecorari2024highratequantumldpccodes}.
We follow Refs.~\cite{delfosse2021boundsstabilizermeasurementcircuits, Tremblay_2022} in that while we have circuit-level noise, we decode using a phenomenological decoder instead of a circuit-level decoder~\cite{wang2011, xu2023constantoverhead, gong2024lowlatencyiterativedecodingqldpc}. This choice leads to potentially worse performance but is much easier to simulate. Decoding is then done by passing the parity check matrix and the observed syndrome to a decoder, which attempts to output a valid correction. To be able to handle syndrome and circuit-level noise, we follow Ref.~\cite{Grospellier_2021} to modify the Tanner graph/parity check matrix: for each check node, we add a single bit node which represents the possibility of a syndrome error. Note that when applying the correction to the system these syndrome errors are ignored. With this change to the parity check matrix, decoding is functionally the same in both the phenomenological and circuit-level noise models. 
We use two different decoders based on the current QEC round. For intermediate rounds we use only the `product-sum' BP~\cite{mackay1997} variant, which consists of 30 iterations of serial~\cite{Roffe_LDPC_Python_tools_2022} message passing. Whether or not BP converges to a valid solution, we apply the guessed correction and continue the simulation. The final noiseless syndrome is decoded using belief propagation plus localized statistics decoding (BP-LSD)~\cite{hillmann2024localizedstatisticsdecodingparallel}, which is based on belief propagation plus ordered statistics decoding (BP-OSD)~\cite{Panteleev_2021}. We again allow only 30 iterations of product-sum BP with serial scheduling, but if a valid correction is not obtained, order-4 combination sweep LSD is used. The process for decoding the $[[4,2,2]]$-concatenated HGP codes is identical after converting the physical errors into logical errors.

\begin{figure}
    \centering
    \includegraphics[width=0.9\linewidth]{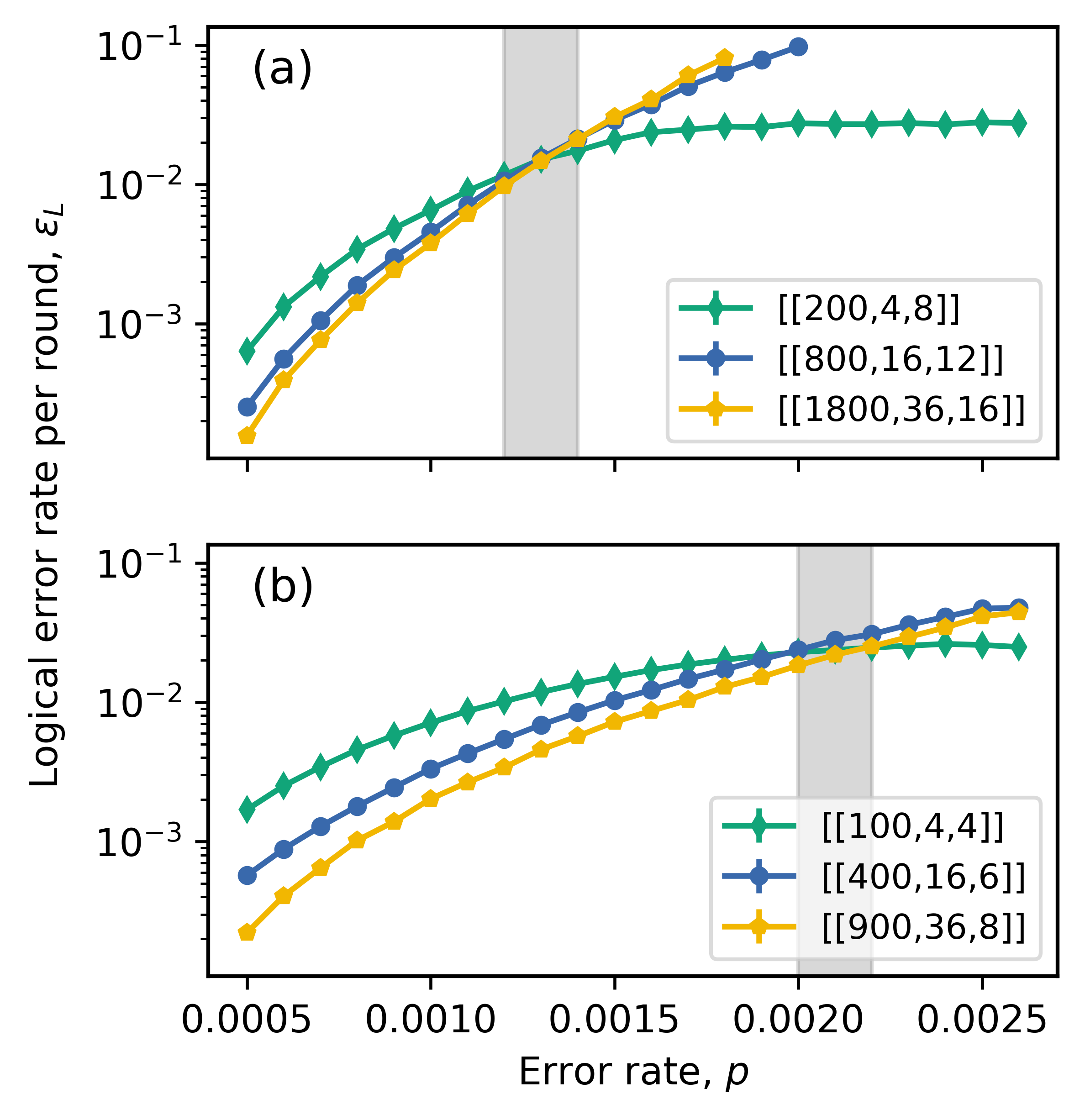}
    \caption{Logical error rate per round $\epsilon_L$ as a function of the physical error rate $p$ around the pseudothreshold. (a) $[[4,2,2]]$-concatenated quantum expander codes using the adaptive scheme exhibit a threshold around 0.13\%. (b) Non-concatenated, non-adaptive quantum expander codes display a higher pseudothreshold around 0.21\%, which is comparable to that the value of 0.23\% reported in Ref~\cite{delfosse2021boundsstabilizermeasurementcircuits} for a similar noise model and decoder. Again, $r = 100$ rounds were used as the total simulation time.}
    \label{fig:pseudothreshold}
\end{figure}

To perform the circuit-level memory experiments, we use \texttt{Stim}~\cite{Gidney_2021}. Since the adaptive scheme requires us to make decisions in real-time based on the outcomes of stabilizer generator measurements, we have to use the slower \texttt{TableauSimulator} which allows us to access the measurement results. 
The syndrome extraction circuits for the HGP codes and the concatenated HPG generators are obtained from an edge coloration of the Tanner graph of the parity check matrix, with which we can apply accurate idling error, see Ref.~\cite{delfosse2021boundsstabilizermeasurementcircuits} for details. 
We use an implementation of the bipartite graph edge coloration algorithm by Pattison~\cite{Pattison_2024}. 
It is possible to record the executed circuit and generate a detector error model which could then be used to perform circuit-level decoding; however, we found that generating the detector error model and corresponding parity check matrix yielded simulation times on the order of seconds per shot. Nonetheless, making the change to a space-time circuit-level decoder would likely improve the performance of the scheme~\cite{xu2023constantoverhead}. 

From the memory experiments as described in the main text, we obtain the logical error rate $p_L$, which is the probability of having a logical error on \textit{any} of the logical qubits. The standard deviation on $p_L$ is the standard error when sampling from a binomial distribution, $\sigma_{p_L} = \sqrt{p_L (1-p_L) / N}$, where $N$ is the number of collected shots. From $p_L$, we can calculate the logical error rate per round $\epsilon_L = 1 - (1- p_L(r))^{1/r}$, where $p_L(r)$ is the observed logical error rate at round $r$. The standard deviation on $\epsilon_L$, which is shown as the error bars in the plots here and in Sec.~\ref{sec:numerics} is then given as: 
 \begin{equation}
    \label{eq:ler_per_round_error}
     \sigma_{\epsilon_L} = \left( \frac{\partial \epsilon_L}{\partial p_L}\right) \sigma_{p_L} = \frac{1}{r} (1 - p_L)^{1/r-1} \sigma_{p_L}.
 \end{equation}

 \begin{figure}
    \centering
    \includegraphics[width=0.9\linewidth]{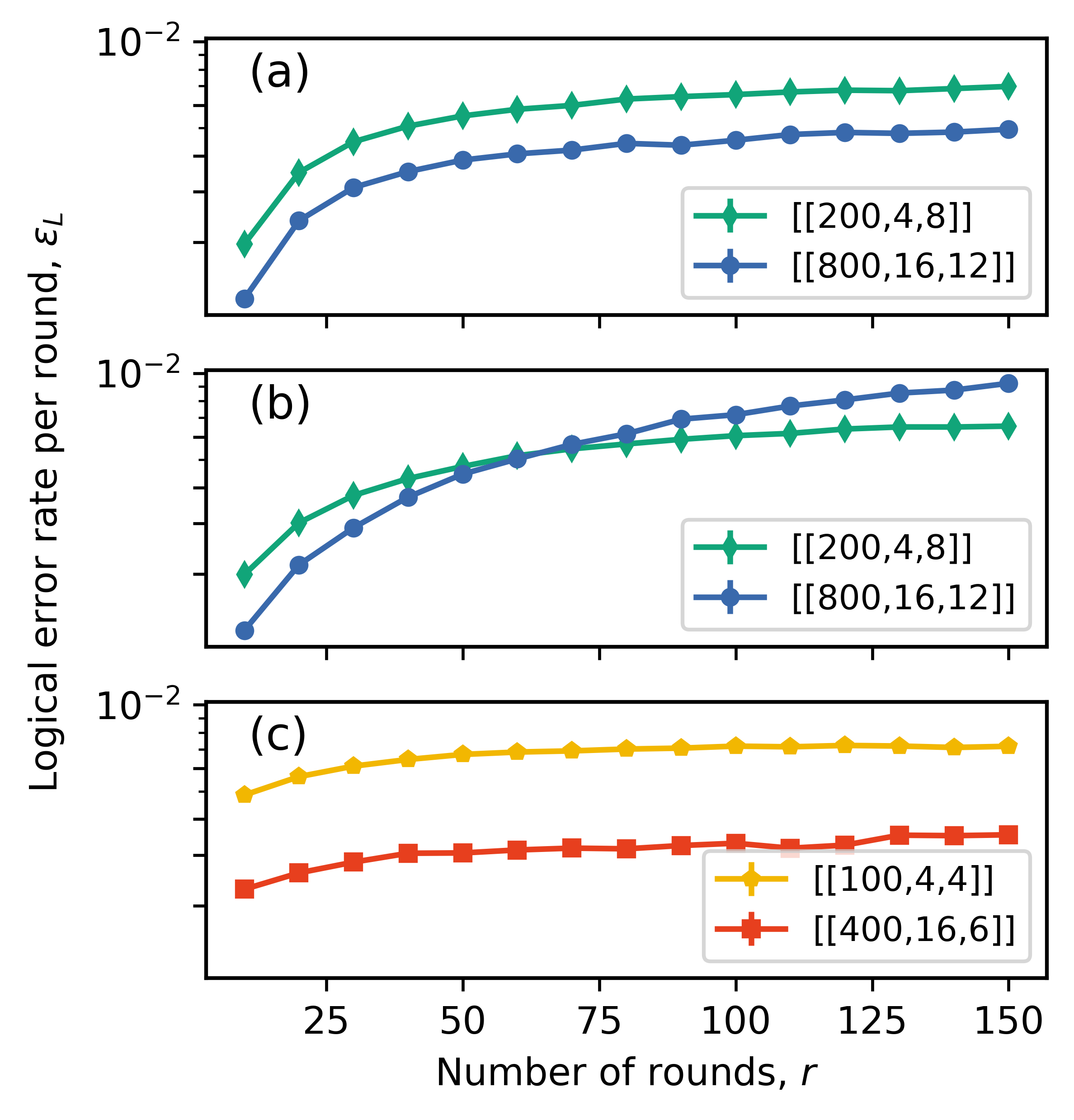}
    \caption{Logical error rate per round $\epsilon_L$ as a function of the number of rounds of quantum error correction. The physical error rate $p$ is fixed at 0.1\%. (a) $[[4,2,2]]$-concatenated quantum expander codes using the adaptive scheme where the entire set of generators is measured every 10 rounds. (b) $[[4,2,2]]$-concatenated HGP codes using the adaptive scheme where generators are never fully unmasked. (c) Normal quantum expander codes. }
    \label{fig:ft_ler_per_round}
\end{figure}

Initially, we were not able to obtain single-shot performance out of the $[[4,2,2]]$-quantum expander codes. For single-shot decoding, we would expect to see the logical error rate stabilize when increasing the number of rounds, as is the case for normal quantum expander codes, Fig.~\ref{fig:ft_ler_per_round}(c); however, the $[[4,2,2]]$-concatenated HGP codes exhibit an increasing logical error rate per round. To remedy this, we measure the entire set of stabilizer generators every 10 rounds. This effect of this change is shown in Fig.~\ref{fig:ft_ler_per_round}(a), where we now see evidence that the $[[4,2,2]]$-concatenated codes are single-shot.

\section{Logical operators for the $[[4,2,2]]$ code}
\label{sec:Iceberg_logical_ops}

\subsection{Logical Clifford operators}

We first discuss the different notions of gate transversality before introducing the various logical Clifford operators obtainable when we restrict ourselves to the various extents of transversality.
Traditionally, transversal logical gates demand that every physical operation acts on one physical qubit in a code block.
They form the foundation for fault-tolerant gates in quantum computation~\cite{gottesman1998theory}.
However, it is possible to relax this restrictive definition of transversality to what is known as \emph{SWAP-transversal} where we allow qubit permutations as well as single-qubit Clifford gates~\cite{sayginel2024fault}.
This notion of transversality is not inherently fault-tolerant except on platforms that have the ability to physically shuttle qubits, such as ion-traps and neutral atom arrays.

In the recent work of Sayginel \emph{et al.}~\cite{sayginel2024fault}, they derived a set of Clifford logical operators as shown in Table~\ref{tab:clifford_transversal_logical_ops_baby_Iceberg}.
\begin{table}[H]
\centering
    \begin{tabular}{||c|c|c||}
    \toprule
    Gate  & Circuit & Type \\
    \midrule
    $\overline{H}_1 \overline{H}_2 \overline{\emph{SWAP}}_{1,2}$ & $H_1 H_2 H_3 H_4$ & \text{Transversal} \\
    \hline
    $\overline{CZ}$ & $S_1^\dagger S_2^\dagger S_3 S_4$ & \text{Transversal} \\
    \hline
    $\overline{CNOT}_{1,2}$ & $\emph{SWAP}_{2,3}$ & $\emph{SWAP}$-transversal \\
    \hline
    $\overline{CNOT}_{2,1}$ & $\emph{SWAP}_{2,4}$ & $\emph{SWAP}$-transversal \\
    \hline
    $\overline{\emph{SWAP}}_{1,2}$ & $\emph{SWAP}_{3,4}$ & \emph{SWAP}-transversal \\
    \bottomrule    
    \end{tabular}
    \caption{\emph{SWAP}-transversal logical Clifford gates of the $[[4,2,2]]$ code.}
    \label{tab:clifford_transversal_logical_ops_baby_Iceberg}
\end{table}
Because the logical Hadamard operator does not target individual logical qubits of the $[[4,2,2]]$ code, the set of \emph{SWAP}-transversal logical Clifford gates in Table~\ref{tab:clifford_transversal_logical_ops_baby_Iceberg} does not give us the full logical Clifford group. By including the gates shown in Table~\ref{tab:clifford_logical_ops_baby_Iceberg2}, we are able to obtain a full set of logical Clifford operators; however, these gates are not transversal or \emph{SWAP}-transversal and as such are not inherently fault-tolerant.


\begin{table}[H]
\centering
    \begin{tabular}{||c|c|c||}
    \toprule
    \textbf{Gate}  & \textbf{Circuit} & \textbf{Type} \\
    \midrule
    $\overline{S}_1$ & $S_1 S_3 CZ_{1,3}$ & \text{General Clifford} \\
    \hline
    $\overline{S}_2$ & $S_1 S_4 CZ_{1,4}$ & \text{General Clifford} \\
    \hline
    $\overline{\sqrt{X}}_{1}$ & $\sqrt{X}_1 \sqrt{X}_4 C(X, X)_{1,4}$ & \text{General Clifford}\\
    \hline
    $\overline{\sqrt{X}}_{2}$ & $\sqrt{X}_1 \sqrt{X}_3 C(X, X)_{1,3}$ & \text{General Clifford}\\
    \bottomrule    
    \end{tabular}
    \caption{Logical Clifford gates of the $[[4,2,2]]$ code via embedded code technique.}
    \label{tab:clifford_logical_ops_baby_Iceberg2}
\end{table}

\subsection{State preparation}
\label{sec:state_prep_Iceberg_codes}
In this section, we describe how to prepare some of the common logical states for the $[[4,2,2]]$ code that we use for our logical computation protocol.
Because the $[[4,2,2]]$ code is a CSS code, it is trivial to initialize it in the logical states $\ket{\overline{00}}$ and $\ket{\overline{++}}$.
Thus, in this section, we focus mainly on describing how we might be able to initialize it in the $\ket{\overline{0+}}$ state.
This task is less trivial because the $[[4,2,2]]$ code only has global Hadamard as a transversal logical operator instead of one that targets a single logical qubit as shown in Section~\ref{sec:Iceberg_logical_ops}.

In Fig.~\ref{fig:0+_state_prep}, we provide a quantum circuit for transforming the logical $\ket{\overline{00}}$ state into $\ket{\overline{0+}}$ by using a joint XX measurement and a single Pauli operator for correction.
Note that the joint XX measurement can be done fault-tolerantly with an additional pair of maximally-entangled ancilla qubits.

\begin{figure}[H]
	\tikzset{
noisy/.style={starburst,fill=yellow,draw=red,line
width=1pt}
}
	\centering
 \begin{quantikz}
 \lstick[4]{$\ket{\overline{00}}$} & & &\rstick[4]{$\ket{\overline{0+}}$} \\
  & &\gate{Z^{a}} & \\
  &\meter[2, style={fill=red!20}, label style={inner sep=1pt}]{a}&\gate{Z^{a}} &\\
  & & &
 \end{quantikz}
					\caption{Physical quantum circuit for transforming the $\ket{\overline{00}}$ state of the $[[4,2,2]]$ Iceberg code into $\ket{\overline{0+}}$.
                    The four qubit lines correspond to the physical qubits of the $[[4,2,2]]$ code.
                    The red measurement with measurement output $a$ is a joint $XX$ measurement.
                    \label{fig:0+_state_prep}
				 }
\end{figure}
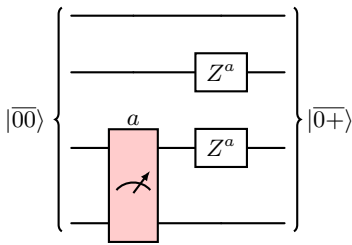

It is easy to see that the quantum circuit in Fig.~\ref{fig:0+_state_prep} gives us the desired transformation.
The logical $\ket{\overline{00}}$ and $\ket{\overline{01}}$ states are given by 
\[\ket{\overline{00}} = \frac{\ket{0000} + \ket{1111}}{\sqrt{2}}, \quad \ket{\overline{01}} = \frac{\ket{0011} + \ket{1100}}{\sqrt{2}}.\]
Since $\ket{\overline{0+}} = \frac{1}{\sqrt{2}}\left(\ket{\overline{00}} + \ket{\overline{01}}\right)$, performing a XX measurement on the third and fourth qubits of $\ket{\overline{00}}$ and getting a measurement eigenvalue of $+1\,(a = 0)$ gives us $\ket{\overline{0+}}$.
When the eigenvalue is $-1$, we apply $Z_2Z_3$ to flip the phase to get the desired $\ket{\overline{0+}}$ state.

The $\ket{\overline{0+}}$ is particularly useful for logical qubit teleportation for our $[[4,2,2]]$ Iceberg code.
We prepare the $\ket{\overline{0+}}$ state several times in the different logical gadgets that we construct in the subsequent sections.

\section{Clifford Logical Operators for $[[4,2,2]]$-concatenated HGP codes}
\label{sec:clifford_logical_ops_concatenated_HGP_code}

In this appendix, we describe how we can obtain the full Clifford group for logical computation on a $[[4,2,2]]$-concatenated square HGP code. 
We follow the concatenation procedure described in Procedure~\ref{proc:concat} and assign logical qubits following Sec.~\ref{sec:assignment}.
To achieve the full Clifford group for the resulting concatenated code
we need the following logical gadgets:
\begin{enumerate}
    \item $\ket{\overline{0}}/\ket{\overline{+}}$ state preparation
    \item Z/X basis measurements
    \item Concatenated grid Pauli product measurements (cGPPMs)
    \item Inter-block CNOTs
    \item H-SWAP
    \item CZ-S
    \item Logical translation
\end{enumerate}
In the subsequent sections, we discuss how each of the logical gadgets above can be constructed for our concatenated code.
Because the logical translation gadget is only available for a special subclass of hypergraph product codes, we leave it until last, before we describe how we construct it.
In fact, the first six logical gadgets can all be implemented without the logical translation gadget.
However, we can achieve better space-time cost savings if we have a logical translation gadget.
Thus, we discuss the different costs of implementing the different logical gadgets both with and without the logical translation gadget in the subsequent sections.

\subsection{$\ket{\overline{0}}/\ket{\overline{+}}$ state preparation}
It is easy to check that our concatenated code is still a CSS code.
Thus, the standard state preparation procedure would still work for our concatenated code~\cite{Dennis_2002}.
For the sake of concreteness, we state how we can prepare the $\ket{\overline{0}}^{\otimes k}$ logical state.
\begin{enumerate}
    \item Initialize all 2n physical qubits to be in the $\ket{0}$ state.
    \item Measure all the $X$ stabilizer generators for our concatenated code.
    \item Repeat step 2 an additional $d-1$ times.
    \item Decode the $d$ sets of syndromes collectively and apply the resulting $Z$ correction to obtain the desired $\ket{\overline{0}}^{\otimes k}$ logical state.
\end{enumerate}
By exchanging $X$ and $Z$ as well as initializing the 2n physical qubits to be in the $\ket{+}$ state, we can prepare the $\ket{\overline{+}}^{\otimes k}$ logical state.

Alternatively, we can consider a single-shot $\ket{\overline{0}}/\ket{\overline{+}}$ state preparation gadget introduced by Hong in Ref.~\cite{hong2024single}.
We now proceed to show why our concatenated code is amenable to such a state preparation gadget.
To do that, we need to show that the homological product of the concatenated code and some classical code has good soundness.
In addition, it would be helpful for us to prove that the concatenated code has linear confinement to show that it is amenable to single-shot decoding.

To begin, we first define soundness and what it means for a quantum code to have good soundness.

\begin{definition}[{\cite[Soundness]{campbell2019theory}}]
Let $t$ be an integer and $f:
\mathbb{Z} \to \mathbb{R}$ be some function called the soundness function
with $f(0) = 0$. 
Given some set of Pauli checks $M$, we say
it is $(t, f)$-sound if for all Pauli errors $E$ with $\left|\sigma(E)\right|$ =
$x < t$, it follows that there exists an $E'$ with $\sigma(E') =
\sigma(E)$ such that $\text{wt}(E') \leq f(x)$.
\end{definition}

\begin{definition}[{\cite[Good Soundness]{campbell2019theory}}]
Consider an infinite
check family $\mathcal{M}_n$. 
We say the family has good soundness
if each $\mathcal{M}_n$ is $(t, f)$-sound where:
\begin{enumerate}
    \item $t$ grows with $n$ such that $t \geq an^b$
for some positive
constants $a, b$. 
That is, $t = \Omega\left(n^b
\right)$ with $b > 0$;
\item and $f(x)$ is some polynomial that is monotonically
increasing with $x$ and independent of $n$.
\end{enumerate}
\end{definition}

Having stated the above definitions, we now proceed to prove that the our concatenated quantum code has the following soundness:

\begin{lemma}[Soundness of ``Transposed'' Concatenated Code]\label{lem:soundness_transposed_concatenated_code}
    Let the chain complex corresponding to the square HGP code be 
    \[C_{-1} \xrightarrow[\delta_{-1} = H_X^\top]{} C_0 \xrightarrow[\delta_0 = H_Z]{} C_1.\]
    In addition, let the chain complex corresponding to our concatenated code be 
    \[\tilde{C}_{-1} \xrightarrow[\tilde{\delta}_{-1} = \tilde{H}_X^\top]{} \tilde{C}_0 \xrightarrow[\tilde{\delta}_0 = \tilde{H}_Z]{} \tilde{C}_1\]
    Then, the maps $\tilde{\delta}_0^\top$ and $\tilde{\delta}_{-1}$ are $(t, f)$-sound with $f(x) = \frac{9x^2}{4} + x$ and $t = d/3$ where $d$ is the distance of the HGP code.
\end{lemma}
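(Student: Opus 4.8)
The plan is to exploit three structural features in sequence: the $X$/$Z$ symmetry of the square HGP code, the soundness of the \emph{bare} HGP complex, and the constant-size arithmetic of the Iceberg encoding. Because the outer code is a square HGP code built symmetrically from $H$ and $H^\top$, and the inner $[[4,2,2]]$ code is invariant under $X \leftrightarrow Z$ (its generators are $XXXX$ and $ZZZZ$, and its logicals come in matched weight-$2$ pairs), the two maps $\tilde\delta_0^\top = \tilde H_Z^\top$ and $\tilde\delta_{-1} = \tilde H_X^\top$ are interchanged by this duality. I would therefore prove the bound for $\tilde\delta_{-1}$ only and obtain the statement for $\tilde\delta_0^\top$ by transposing the same argument.

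For the main bound, I would first decompose the concatenated check matrix $\tilde H_X$ into its \emph{inner} rows (one weight-$4$ generator $XXXX$ per Iceberg block) and its \emph{outer} rows (each HGP $X$-generator with every physical qubit replaced by the corresponding weight-$2$ Iceberg logical $\overline{X}$). The goal is to relate the concatenated complex $\tilde C_\bullet$ to the bare HGP complex $C_\bullet$ by a pair of weight-controlled chain maps: the Iceberg encoding supplies a degreewise map $C_\bullet \to \tilde C_\bullet$ (each HGP qubit/check lifts to its weight-$\le 2$ logical representative together with the per-block inner generators), and a left inverse supplies a map $\tilde C_\bullet \to C_\bullet$ (reading out the logical content of each block and discarding the inner-check data). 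I would verify that these are genuine chain maps and that their weight distortion is governed entirely by the constant Iceberg parameters — four physical qubits per block, logicals of weight $2$, inner generators of weight $4$.

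With these maps in hand, the argument runs as follows. A weight-$x$ image $\sigma(E) = \tilde H_X^\top E$ reads out to a pattern in the bare HGP qubit space $C_0$ supported on $O(x)$ blocks; provided $x < d/3$, the distance relation $d \ge d_2$ and the weight accounting from the proof of Theorem~\ref{thm:naive_dist} (which controls how Iceberg blocks inflate operator weights) keep this induced pattern strictly inside the regime where the bare map $\delta_{-1} = H_X^\top$ admits a low-weight preimage. I would then invoke the known soundness of square HGP codes, as used for single-shot state preparation in Ref.~\cite{hong2024single}, to produce HGP-level checks $E_0$ of weight quadratic in $x$ realizing this pattern. Lifting $E_0$ back through the Iceberg encoding and cleaning up the residual support inside each affected block with the weight-$4$ inner generators yields an $E'$ with the same image as $E$. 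Tracking the constants — a factor $3/2$ in the syndrome argument from the weight-$2$ logical substitution, squared by the quadratic HGP soundness, plus a linear $+x$ contribution from the within-block cleanup — gives exactly $\mathrm{wt}(E') \le \tfrac{9x^2}{4} + x = f(x)$ for all $x < t = d/3$.

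The main obstacle I anticipate is the bookkeeping in this final step: ensuring that resolving the error \emph{within} each Iceberg block (using the inner $XXXX$ generators together with the overlapping weight-$2$ logical operators) does not disturb the syndrome on neighboring blocks, and that the accumulated overhead is exactly the stated constant rather than some larger factor. This requires a careful block-local analysis of how an odd-parity physical pattern on four qubits is rewritten as a logical pattern plus an inner-generator combination — in effect re-deriving, at the level of a single block, the weight accounting used in the proof of Theorem~\ref{thm:naive_dist} — and then checking that summing these local rewrites over the $O(x)$ active blocks respects both the threshold $d/3$ and the precise quadratic-plus-linear form of $f$.
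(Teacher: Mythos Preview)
Your plan matches the paper's proof in structure: split the concatenated $X$-check space into the outer HGP generators (the paper's column set $A$) and the inner Iceberg generators (column set $B$), inherit the $(d,\,x^2/4)$-soundness of $H_X^\top$ for the $A$-part via the encoding map $\gamma\colon C_0 \to \tilde C_0$, and pick up the linear term from the trivially sound $B$-part. The paper works with the explicit column decomposition $\tilde H_X^\top = [\,\tilde H_X^\top|_A \mid \tilde H_X^\top|_B\,]$ and cites Lemma~5 of Ref.~\cite{campbell2019theory} for the bare HGP soundness rather than packaging things as chain maps, but the content is the same.

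One correction on the constants: the coefficient $9/4$ does not arise from ``a factor $3/2$ from the weight-$2$ logical substitution, squared.'' The factor is $3$, and it comes from a different mechanism. In the convention $\overline X_1 = X_1X_3$, $\overline X_2 = X_1X_4$, physical qubit $2$ of each block lies in the support of \emph{no} outer $X$-generator. Hence every inner $XXXX$ generator appearing in $E|_B$ leaves an uncancellable $1$ on its block's qubit-$2$ coordinate in $\sigma$, so the number $m$ of active inner generators satisfies $m \le x$; subtracting their contribution from $\sigma$ on qubits $1,3,4$ then shows that the outer-only syndrome $\tilde H_X^\top|_A\,E|_A$ has weight at most $3x$. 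Applying the inherited $A$-soundness gives $(3x)^2/4 = 9x^2/4$, and the at-most-$x$ inner generators supply the $+x$. The threshold $t = d/3$ is exactly the condition $3x \le d$ needed to invoke the $A$-soundness --- it is not a consequence of Theorem~\ref{thm:naive_dist}. (Also note that every vector in the image of $\tilde H_X^\top$ has \emph{even} weight in each Iceberg block, so the ``odd-parity physical pattern'' rewriting you anticipate in the final bookkeeping step does not actually arise.)
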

\begin{proof}
    From Lemma~5 in Ref.~\cite{campbell2019theory}, we know that the HGP code that we use as our outer code has stabilizer check matrices $H_X$ and $H_Z$ such that $H_X^\top$ and $H_Z^\top$ are $(t,f)$-sound with $f(x) = x^2/4$ and $t = d$. 
    For the subsequent part of the proof, we focus on analyzing $\tilde{H}_X^\top$ and the argument would also hold for $\tilde{H}_Z^\top$.

    First of all, notice that $\tilde{H}_X^\top \in \F_2^{2n \times ((n-k) + n/2)}$ and $H_X^\top \in \F_2^{n \times (n-k)}$.
    Denote the set of $(n-k)$ columns in $H_X^\top$ and $\tilde{H}_X^\top$ as $A$ and the set of $n/2$ columns in $\tilde{H}_X^\top$ as $B$.
    We refer to the stabilizer generators of $\tilde{H}_X$ that correspond to the columns in $A$ as the copied generators.
    The generators of $\tilde{H}_X$ that correspond to the columns in $B$ are referred to as the new generators.
    Denote $\tilde{H}_X^\top|_A$ as the restriction of $\tilde{H}_X^\top$ to the columns in $A$.
    
    For any error configuration $e \in \F_2^{(n-k) + n/2}$ that lies completely within the support of $A$, we have
    \[\left|H_X^\top e\right| \leq  \left|\tilde{H}_X^\top e\right| \leq 2\left|H_X^\top e\right| \]
    from the fact that the copied generators have twice the weight of the original generators in $H_X$ because each physical qubit checked by a generator in $H_X$ is now a logical qubit that corresponds to two physical qubits in the $2n$ physical qubits of the concatenated code. 
    For every $e \in \F_2^{n-k}$ that generates a syndrome in $\tilde{H}_X^\top|_A$ of weight at most $d$, we have $\left|H_X^\top e\right| \leq \left|\tilde{H}_X^\top e\right| \leq d$.
    From the $(d, x^2/4)$-soundness of $H_X^\top$, there exists some $e' \in \F_2^{n-k}$ such that $H_X^\top e = H_X^\top e'$ and $|e'| \leq \left|H_X^\top e\right|^2/4$.
    Now, we claim that $e'$ satisfies $\tilde{H}_X^\top|_A e = \tilde{H}_X^\top|_A e'$.
    To show that, we first note that there exists an injective homomorphism $\gamma:C_0 \to \tilde{C}_0$ since every physical qubit $i \in [n]$ is mapped to some unique pair of physical qubits $(i, j)$ for $i, j \in [n]$ according to the logical operator basis for the $[[4,2,2]]$ code. 
    Thus, 
    \[\tilde{H}_X^\top|_A e = \left(\gamma \circ H_X^\top\right) e = \gamma H_X^\top e' = \tilde{H}_X^\top|_A e'.\]    
    Since $|e'| \leq \left|H_X^\top e\right|^2/4 \leq \left|\tilde{H}_X^\top |_A e\right|^2/4$, this allows us to conclude that $\tilde{H}_X^\top|_A$ is $(d, x^2/4)$-sound. 

    Because $\tilde{H}_X^\top|_B$ consists of $n/2$ different columns with disjoint sets of 4 consecutive ones, it is easy to see that $\tilde{H}_X^\top|_B$ is $(n/2, x/4)$-sound.

    Now, we proceed to show that $\tilde{H}_X^\top$ is $\left(\frac{d}{3}, \frac{9x^2}{4} + x\right)$-sound using the arguments made regarding the soundness of $\tilde{H}_X^\top|_A$ and $\tilde{H}_X^\top|_B$.
    Consider an arbitrary error configuration $\tilde{e} \in \F_2^{(n-k) + n/2}$ such that $\left|\tilde{H}_X^\top \tilde{e}\right| = x \leq d/3$.
    Let $\tilde{e}|_A$ and $\tilde{e}|_B$ denote the restriction of $\tilde{e}$ to the indices in $A$ and $B$ respectively. 
    We now claim that $\left|\tilde{H}_X^\top|_A \tilde{e}|_A\right| \leq 3x \leq d$.
    To see that, note that the physical qubit with index 2 in each Iceberg code block only lies in the support of the new generators since the Iceberg code has logical operator basis $\overline{X}_1 = X_1 X_3$ and $\overline{X}_2 = X_1 X_4$. 
    Since we have 
    \[\tilde{H}_X^\top \tilde{e} = \tilde{H}_X^\top|_A \tilde{e}|_A + \tilde{H}_X^\top|_B \tilde{e}|_B \quad(\bmod{2}),\]
    when $\left[\tilde{H}_X^\top|_A \tilde{e}|_A\right]_i = \left[\tilde{H}_X^\top|_B \tilde{e}|_B\right]_i \neq 0$ for some index $i$ that corresponds to qubit 1, 3, or 4 in an Iceberg code block, there exists some $j$ corresponding to the qubit with index 2 in the same set of 4 Iceberg code qubits containing $i$ such that $\left[\tilde{H}_X^\top \tilde{e}\right]_j = \left[\tilde{H}_X^\top|_B \tilde{e}|_B\right]_j = 1$.
    Since $\left|\tilde{H}_X^\top \tilde{e}\right| = x$, we can have at most $x$ instances of $j$ in $x$ different Iceberg code blocks such that $\left[\tilde{H}_X^\top \tilde{e}\right]_j = \left[\tilde{H}_X^\top|_B \tilde{e}|_B\right]_j = 1$. For each of these instances of $j$, we can have at most 3 $i$'s that belong to the same 4 Iceberg code qubits that contains $j$ such that $\left[\tilde{H}_X^\top|_A \tilde{e}|_A\right]_i = \left[\tilde{H}_X^\top|_B \tilde{e}|_B\right]_i \neq 0$.
    Thus, $\left|\tilde{H}_X^\top|_A \tilde{e}|_A\right| \leq 3x \leq d$.
    From the soundness of $\tilde{H}_X^\top|_A$ and $\tilde{H}_X^\top|_B$, we can find an $\tilde{e}'$ that has restricted forms $\tilde{e}'|_A$ and $\tilde{e}'|_B$ such that $\left|\tilde{e}'|_A\right| \leq (3x)^2 / 4 =  9x^2/4$, $\left|\tilde{e}'|_B\right| \leq 4x/4 = x$ where $\tilde{H}_X^\top \tilde{e} = \tilde{H}_X^\top \tilde{e}'$.
    Thus, we obtain the lemma statement that $\tilde{H}_X^\top$ is $\left(\frac{d}{3}, \frac{9x^2}{4} + x\right)$-sound. 
\end{proof}
We note that the soundness factor can definitely be made tighter by performing a rigorous analysis similar to the one in the proof for Lemma~5 in Ref.~\cite{campbell2019theory}.
However, we do not need such a tight bound for our subsequent arguments.

Now, we are ready to show that the thickened concatenated code has good soundness.

\begin{lemma}[Good Soundness of Thickened Concatenated Code]\label{lem:good_soundness_of_thickened_concatenated_code}
Let the chain complex corresponding to our concatenated code be 
\[\tilde{C}_{-1}\xrightarrow[\tilde{\delta}_{-1} = \tilde{H}_X^\top ]{} \tilde{C}_0 \xrightarrow[\tilde{\delta}_0 = \tilde{H}_Z]{} \tilde{C}_1\]
such that $\tilde{\delta}_0^\top$ and $\tilde{\delta}_{-1}$ are $\left(\frac{d}{3}, \frac{9x^2}{4} + x\right)$-sound where $d$ is the distance of our concatenated code.
Applying the homological product on the concatenated code with a classical code gives us a new length-4 chain complex
\[\breve{C}_{-2}\xrightarrow[\breve{\delta}_{-2} ]{} \breve{C}_{-1}\xrightarrow[\breve{\delta}_{-1} = \breve{H}_X^\top ]{} \breve{C}_0 \xrightarrow[\breve{\delta}_0 = \breve{H}_Z]{} \tilde{C}_1\]
where the map $\breve{\delta}_{-1}^\top = \breve{H}_X$ is $\left(\frac{d}{3}, \frac{9x^3}{4}\right)$-sound.
\end{lemma}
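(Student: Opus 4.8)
The plan is to realize $\breve H_X=\breve\delta_{-1}^{\top}$ as the transpose of a single differential in the total complex of the homological product $\tilde C_\bullet\otimes D_\bullet$, where $D_\bullet$ is the two-term chain complex of the classical code used for thickening (with differential $P$), and then to reduce the soundness of $\breve H_X$ to the soundness of its diagonal factor maps, which are exactly the ones supplied by Lemma~\ref{lem:soundness_transposed_concatenated_code}. First I would write the Künneth block decomposition $\breve C_0=(\tilde C_0\otimes D_0)\oplus(\tilde C_1\otimes D_{-1})$ and $\breve C_{-1}=(\tilde C_{-1}\otimes D_0)\oplus(\tilde C_0\otimes D_{-1})$, under which the tensor-product differential makes $\breve H_X$ block lower triangular,
\[
\breve H_X=\begin{pmatrix}\tilde\delta_{-1}^{\top}\otimes I & 0\\ I\otimes P^{\top} & \tilde\delta_0^{\top}\otimes I\end{pmatrix},
\]
with diagonal blocks that are identity-tensor liftings of $\tilde\delta_{-1}$ and $\tilde\delta_0^{\top}$ (their transposes share the same soundness by the $X/Z$ symmetry of the square construction) and a single off-diagonal coupling $I\otimes P^{\top}$. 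Since tensoring a $(t,f)$-sound map with an identity keeps it $(t,f)$-sound slice-by-slice, Lemma~\ref{lem:soundness_transposed_concatenated_code} immediately certifies that both diagonal blocks are $\left(d/3,\,\tfrac{9x^2}{4}+x\right)$-sound.

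Next I would run a cleaning argument on an error $\breve e=(a,b)$ whose syndrome $s=(s_1,s_2)=\breve H_X\breve e$ has weight $x\le d/3$. The triangular structure forces $s_1=(\tilde\delta_{-1}^{\top}\otimes I)a$ to depend only on $a$, so the first step applies the diagonal soundness to $s_1$ (of weight at most $x\le d/3$) to produce a replacement $a'$ with $|a'|\le \tfrac{9x^2}{4}+x$ and the same $s_1$. The second step solves $(\tilde\delta_0^{\top}\otimes I)b'=s_2+(I\otimes P^{\top})a'$ for $b'$, using the constant column weight of the classical parity check $P$ to control the coupling term $(I\otimes P^{\top})a'$. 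The crucial quantitative point is that one must \emph{not} naively compose the two quadratic soundness bounds, which would feed a weight-$\Theta(x^2)$ syndrome into the second call and produce a degree-four envelope. Instead I would organize the cleaning over the $O(x)$ classical sheets on which the syndrome is supported, treating each sheet as a copy of the quantum complex and handling the inter-sheet coupling through $P$ locally; applying the quadratic factor soundness once per active sheet and summing over the $O(x)$ sheets yields the cubic bound $\tfrac{9x^3}{4}$. Throughout, every individual soundness call only ever sees a syndrome of weight $\le x\le d/3$, so the threshold is preserved at $t=d/3$.

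The main obstacle is precisely this weight accounting in the second step: I must show that the off-diagonal classical coupling amplifies the representative weight by only one additional factor of $x$ rather than squaring an already-quadratic bound. This hinges on the sparsity and sheet structure of the thickening complex $D_\bullet$ (so that an error localized in the quantum factor cannot broadcast its induced syndrome across more than $O(x)$ classical coordinates), together with the freedom to modify $a'$ and $b'$ by elements of the kernels/images arising from the product's built-in redundancy. This mirrors the product-soundness analysis behind Lemma~5 of Ref.~\cite{campbell2019theory}; as there, the constant is deliberately not optimized, so $\tfrac{9x^3}{4}$ is best read as a convenient loose upper bound, consistent with the remark following Lemma~\ref{lem:soundness_transposed_concatenated_code}.

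Finally, I would observe that this $(t,f)$-soundness is in fact \emph{good} soundness in the sense required downstream: the threshold $t=d/3=\Omega(d)=\Omega(n^{1/2})$ grows with the block length for the HGP/expander codes under consideration, while $f(x)=\tfrac{9x^3}{4}$ is an $n$-independent, monotonically increasing polynomial. This is exactly the input needed to invoke Hong's single-shot state preparation scheme~\cite{hong2024single} for the $[[4,2,2]]$-concatenated HGP code.
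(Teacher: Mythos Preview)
Your proposal is correct and follows the same route as the paper: the paper's proof is a one-line deferral to the product-soundness argument of Lemma~6 in Ref.~\cite{campbell2019theory}, with Lemma~\ref{lem:soundness_transposed_concatenated_code} substituted for Campbell's HGP soundness input, and your sketch is precisely an unpacking of that Campbell argument (block-triangular K\"unneth decomposition, slice-by-slice lifting of the factor soundness, and sheet-wise cleaning to avoid compounding the quadratic bound into a quartic one). The minor slip is your citation of ``Lemma~5'' rather than Lemma~6 of Ref.~\cite{campbell2019theory} for the product-soundness step; otherwise your outline matches the intended proof.
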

\begin{proof}
    The proof for this lemma statement is effectively the same as the proof for Lemma~6 in Ref.~\cite{campbell2019theory} except that we use Lemma~\ref{lem:soundness_transposed_concatenated_code} from our paper instead of Lemma~6 from Ref.~\cite{campbell2019theory}.
\end{proof}

Next, we provide two important definitions regarding confinement.

\begin{definition}[{\cite[Confinement]{quintavalle2021single}}]
    Let $t > 0$ be an integer and $f:\mathbb{Z}\to \mathbb{R}$ be an increasing function.
    For a parity-check matrix $H \in \mathbb{F}_2^{m\times n}$, we say it is $(t,f)$-confined if for any Pauli errors $e$ with reduced weight $||e|| \leq t$, its syndrome $\sigma(e) = He$ obeys
    \[f(|\sigma(e)|) \geq ||e||.\]
\end{definition}

\begin{definition}[{\cite[Good Linear Confinement]{quintavalle2021single}}]
    Consider an infinite
check family $\mathcal{M}_n$. 
We say the family has good linear confinement
if each $\mathcal{M}_n$ is $(t, f)$-confined where:
\begin{enumerate}
    \item $t$ grows with $n$ such that $t \geq an^b$
for some positive
constants $a, b$. 
That is, $t \in \Omega\left(n^b
\right)$ with $b > 0$;
\item and $f(x)$ is some linear function that is monotonically
increasing with $x$ and independent of $n$.
\end{enumerate}
\end{definition}

When a code has good linear confinement, the code is single-shot against both adversarial and stochastic noise~\cite{quintavalle2021single}.
The intuition is that the linear confinement property guarantees that the low-energy state space of the code Hamiltonian
is partitioned into well-separated clusters~\cite{anshu2023nlts}.

Before we proceed to show that our concatenated code has good linear confinement, we first state the confinement property of good expander codes.

\begin{lemma}[{\cite[Linear Confinement of Expander Codes]{Leverrier_2015}}]\label{lem:expander_code_linear_confinement}
    For any quantum expander code with distance $d$, an arbitrary error $e$ with reduced weight $||e|| < d$ has a syndrome with weight bounded from below as $|\sigma(e)| \geq \frac{1}{3}||e||$.
\end{lemma}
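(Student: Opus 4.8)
The plan is to recover this bound directly from the unique-neighbor expansion of the underlying classical Tanner graph, reconstructing the combinatorial argument that powers the small-set-flip decoder of Ref.~\cite{Leverrier_2015}. First I would unpack the hypergraph product structure: the qubits of the quantum expander code $HGP(H,H)$ live on $V \times V \sqcup C \times C$, and every $X$- and $Z$-check is built from a (bit, check) pair of the classical code, so the quantum Tanner graph inherits the $(\Delta_V,\Delta_C)$-regularity and the $(\gamma,\delta)$-expansion of $H$. Because the claimed inequality is phrased in terms of the \emph{reduced} weight, I would first replace $e$ by a minimum-weight representative $\hat e$ of its stabilizer coset, so that $|\hat e| = \|e\|$ and $\sigma(\hat e) = \sigma(e)$; the hypothesis $\|e\| < d$ then guarantees that $\hat e$ sits strictly below the distance and hence inside the regime where the good-expansion bounds apply. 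Working with a single CSS type at a time (the two cases being symmetric) reduces the quantum statement to a parity-check expansion question.

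Next I would split $\mathrm{supp}(\hat e)$ across the two sectors and lower-bound the firing syndrome bits sector by sector. The only mechanism by which a check adjacent to $\hat e$ can fail to contribute to $\sigma(e)$ is cancellation, i.e.\ the check meets $\hat e$ in an even number of qubits, so the crux is to lower-bound the number of \emph{uncancelled} checks. Unique-neighbor expansion supplies exactly this: any qubit set $S$ below the expansion threshold has at least $(1-2\delta)\Delta_V|S|$ unique neighbors, and a unique neighbor necessarily sees odd parity and fires. Carrying this bookkeeping through the product grid, tracking the row- and column-type checks of the two sectors separately, bounds the surviving-check fraction below by a constant fixed by $\delta$ and the degrees.

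The constant $\tfrac{1}{3}$ should then fall out once the two sectors are recombined, in the same manner that the soundness/robustness constant $1/3$ appears in the expander-code analysis; concretely I would fix the expansion parameters (taking $\delta$ small enough) so that the surviving-check fraction is at least $1/3$, matching the stated inequality $|\sigma(e)| \geq \tfrac{1}{3}\|e\|$. I expect the cancellation step to be the main obstacle: one must show that a reduced-weight error cannot have too many of its adjacent checks cancelled. The cleanest route is to argue by minimality of $\hat e$ within its coset, since if a large set of checks were cancelled a small-set flip would strictly lower the weight of $\hat e$, contradicting that it is the reduced representative. This is precisely why unique-neighbor expansion, rather than mere vertex expansion, is indispensable here.
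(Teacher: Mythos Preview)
The paper does not supply its own proof of this lemma: it is quoted verbatim from Ref.~\cite{Leverrier_2015} (note the citation in the lemma header) and is used only as a black box inside the proof of Lemma~\ref{lem:linear_confinement_concatenated_code}. There is therefore nothing in the present paper to compare your argument against.

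That said, your sketch is faithful to the argument in the cited reference. Passing to a minimum-weight coset representative, restricting to one CSS type, and then invoking unique-neighbor expansion to lower-bound the number of checks that cannot be cancelled is exactly the mechanism behind the robustness/confinement bounds for quantum expander codes. Your observation that minimality of $\hat e$ in its coset is what rules out excessive cancellation (else a small-set flip would contradict minimality) is the correct conceptual hinge. One caveat worth flagging: the constant $\tfrac{1}{3}$ is not universal for ``any'' quantum expander code as the lemma is phrased; it holds only once the expansion parameter $\delta$ and the degrees $(\Delta_V,\Delta_C)$ are fixed appropriately (in the Leverrier--Tillich--Z\'emor analysis the constant is a function of these). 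You acknowledge this when you say you would ``fix the expansion parameters so that the surviving-check fraction is at least $1/3$,'' but in a written proof you should make the dependence explicit rather than absorb it into the phrase ``quantum expander code.''
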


Now, we proceed to show that our concatenated code has good linear confinement.

\begin{lemma}[Linear Confinement of Concatenated Code]\label{lem:linear_confinement_concatenated_code}
    Let the chain complex corresponding to our concatenated code be 
\[\tilde{C}_{-1}\xrightarrow[\tilde{\delta}_{-1} = \tilde{H}_X^\top ]{} \tilde{C}_0 \xrightarrow[\tilde{\delta}_0 = \tilde{H}_Z]{} \tilde{C}_1\]
where $d$ is the distance of our concatenated code.
If the outer code is $\left(d, 3x\right)$-confined, then the stabilizer matrices $\tilde{H}_X$ and $\tilde{H}_Z$ are $\left(\frac{d}{2}, 6x\right)$-confined. 
\end{lemma}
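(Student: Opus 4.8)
The plan is to prove confinement of the concatenated code by transferring the confinement of the outer expander code (Lemma~\ref{lem:expander_code_linear_confinement}, which is precisely the assumed $(d,3x)$-confinement) through the concatenation. I would work with $Z$-errors and the $X$-syndrome, splitting it as $\sigma(\tilde{e}) = \tilde{H}_X \tilde{e} = (\sigma_A, \sigma_B)$, where $\sigma_A$ collects the copied (HGP) generators and $\sigma_B$ the new (Iceberg) generators; the bound for $\tilde{H}_Z$ then follows by exchanging $X \leftrightarrow Z$. To any concatenated error $\tilde{e}\in\F_2^{2n}$ I would associate an effective outer error $e\in\F_2^{n}$ by projecting the restriction of $\tilde{e}$ to each Iceberg block onto its logical $Z$-content, reusing the injective homomorphism $\gamma$ from the proof of Lemma~\ref{lem:soundness_transposed_concatenated_code}. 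The two syndrome halves then decouple cleanly: the copied generators read off the outer syndrome of the logical content, so $\sigma_A = H_X e$, while $\sigma_B$ flags exactly the blocks whose restricted error has odd weight.

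The heart of the argument is a pair of weight estimates between the reduced weights $||\tilde{e}||$ and $||e||$. For the forward direction I would note that each physical $Z$ contributes a logical operator of weight at most two to $e$ (for instance $Z_1$ acts as $\overline{Z}_1\overline{Z}_2$, while $Z_4$ is logically trivial), so that $||e|| \le |e| \le 2\,||\tilde{e}||$; together with $||\tilde{e}|| \le d/2$ this yields $||e|| \le d$, which is exactly the regime where the outer bound $||e|| \le 3|\sigma_A|$ is valid. For the reverse direction I would build an explicit representative in the coset of $\tilde{e}$: realize a reduced outer error by weight-two logical operators (at most two physical $Z$'s per touched block, so cost $\le 2||e||$), then fix the parity of each of the $|\sigma_B|$ detectable blocks with one additional single-qubit $Z$ chosen to leave both the logical class and $\sigma_A$ unchanged. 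Because this representative shares the syndrome and logical class of $\tilde{e}$, it lies in its coset, giving $||\tilde{e}|| \le 2||e|| + |\sigma_B|$.

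Chaining the two estimates then finishes the proof: $||\tilde{e}|| \le 2||e|| + |\sigma_B| \le 6|\sigma_A| + |\sigma_B| \le 6\,|\sigma(\tilde{e})|$, so that for every $\tilde{e}$ with $||\tilde{e}|| \le d/2$ we have $6|\sigma(\tilde{e})| \ge ||\tilde{e}||$, which is the claimed $(d/2, 6x)$-confinement with $f(x)=6x$.

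The main obstacle I expect is the bookkeeping of reduced weights across the concatenation. Since the copied $Z$-generators are simultaneously outer-code stabilizers, reducing $e$ modulo the outer stabilizer group is coupled to reducing $\tilde{e}$ modulo the concatenated stabilizer group, and the delicate point is to verify that the constructed low-weight representative genuinely lies in the \emph{same coset} as $\tilde{e}$ (matching syndrome \emph{and} logical class) rather than merely reproducing its syndrome. Computing the per-qubit logical-content weights and carrying out the parity-fixing step is routine, but it must be done consistently with the fixed $[[4,2,2]]$ logical basis $\overline{X}_1,\overline{X}_2,\overline{Z}_1,\overline{Z}_2$ so that the factor of two and the additive $|\sigma_B|$ term emerge correctly.
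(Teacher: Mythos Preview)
Your plan is correct and follows essentially the same route as the paper's proof: split the syndrome into the copied (HGP) part $\sigma_A$ and the new (Iceberg) part $\sigma_B$, pass to the outer logical content $e$ with $||e||\le 2\,||\tilde e||\le d$, invoke the outer $(d,3x)$-confinement to get $||e||\le 3|\sigma_A|$, and combine this with the inequality $||\tilde e||\le 2||e||+|\sigma_B|$ to conclude $||\tilde e||\le 6|\sigma(\tilde e)|$. The only technical difference is in how that last inequality is established: you build an explicit coset representative (a reduced outer error realized by weight-two Iceberg logicals, then parity-fixed on the $|\sigma_B|$ flagged blocks via the logically trivial $Z_4$), whereas the paper instead fixes a weight-reduced $e'$, partitions the Iceberg blocks according to whether they carry one or two physical errors, and argues $||e'_{\mathrm{logical}}||\ge |\{e'_{2,k}\}|$ by noting that otherwise the copied generators could further reduce $e'$---your constructive version sidesteps this step and is arguably cleaner, at the price of having to verify (as you correctly flag) that the representative lands in the same stabilizer coset.
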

\begin{proof}
    Suppose we have a Pauli error $e$ that has reduced weight $||e|| < d/2$.
    To make things concrete, let $e'$ be any weight-reduced Pauli error such that $\left|e'\right| = ||e||$ and $\sigma(e) = \sigma\left(e'\right)$.
    
    Next, let us partition the length-$2n$ sequence $e'$ for our $[[2n, k, d]]$ concatenated code into sets of 4 qubits that belong to individual Iceberg code patches, i.e.
    \[P = \left\{\left\{\mathcal{P}_{4i+1}, \mathcal{P}_{4i+2}, \mathcal{P}_{4i+3}, \mathcal{P}_{4i+4}\right\}\right\}_{i = 0}^{\frac{n}{2} - 1},\]
    where $\mathcal{P}$ is some Pauli operator in the set $\{I, X, Y, Z\}$.    
    For an arbitrary $i \in \left\{0,1,\ldots,\frac{n}{2} -1\right\}$, $\left\{\mathcal{P}_{4i+1}, \mathcal{P}_{4i+2}, \mathcal{P}_{4i+3}, \mathcal{P}_{4i+4}\right\}$ can only have non-trivial support on 0, 1, or 2 physical qubits else we can apply the Iceberg code stabilizer generators to further reduce the weight of the error.    
    Let $\left\{e'_{1,j}\right\}_j \subseteq P$ be the set of $\left\{\mathcal{P}_{4i+1}, \mathcal{P}_{4i+2}, \mathcal{P}_{4i+3}, \mathcal{P}_{4i+4}\right\}$ sets that contain exactly 1 physical qubit Pauli error.
    Similarly, let $\left\{e'_{2,k}\right\}_k \subseteq P$ be the set of $\left\{\mathcal{P}_{4i+1}, \mathcal{P}_{4i+2}, \mathcal{P}_{4i+3}, \mathcal{P}_{4i+4}\right\}$ sets that contain exactly 2 physical qubit Pauli errors. 
    In addition, let $\sigma_{\textrm{new}}(e')$ and $\sigma_{\textrm{copied}}(e')$ correspond to the part of the syndrome that corresponds to the new and copied concatenated code generators respectively such that 
    \[\sigma(e') = \sigma_{\textrm{new}}(e') \oplus \sigma_{\textrm{copied}}(e').\]
    Construct a length-$n$ Pauli sequence $e'_{\textrm{logical}}$ from $e'$ with the following two steps.
    Firstly, convert each element in $\left\{e'_{2,k}\right\}_k$ into the corresponding logical Pauli operators of the Iceberg code patch.
    Secondly, convert each element of $\left\{e'_{1,j}\right\}_j$ into some (potentially trivial) logical Pauli operator of the Iceberg code patch depending on whether the element anti-commutes with $\overline{X}_1, \overline{X}_2, \overline{Z}_1, \overline{Z}_2$.
    For example, suppose $e'_{1,j} = X_1I_2I_3I_4$ and $e'_{1,j'} = I_1Z_2I_3I_4$, we then convert the former into $\overline{I}_1\overline{I}_2$ since it commutes with all the logical Pauli operators of the Iceberg code patch and the latter into $\overline{Z}_1\overline{I}_2$ since it only anti-commutes with $\overline{X}_1$.
    Finally, let $\varsigma$ denote the syndrome operator of the $\left[\left[n, k, d\right]\right]$ HGP code that was used to construct our concatenated code.
    
    

    Then, we have
    \begin{align}
        \left|\sigma(e)\right| &= \left|\sigma(e')\right| \\
        &= \left|\sigma_{\textrm{new}}\left(e'\right)\right| + \left|\sigma_{\textrm{copied}}(e')\right| \\
        &= \left|\left\{e'_{1, j}\right\}_j\right| + \left|\varsigma\left(e'_{\textrm{logical}}\right)\right| \\
        &\geq \left|\left\{e'_{1, j}\right\}_j\right| + \frac{1}{3} \left|\left|e'_{\textrm{logical}}\right|\right| \\
        &\geq \left|\left\{e'_{1, j}\right\}_j\right| + \frac{1}{3} \left|\left\{e'_{2,k}\right\}_k\right|\\
        &= \left|\left\{e'_{1, j}\right\}_j\right| + \frac{1}{6} \left(2\left|\left\{e'_{2,k}\right\}_k\right|\right) \\
        &\geq \frac{1}{6}\left(\left|\left\{e'_{1, j}\right\}_j\right| + 2\left|\left\{e'_{2,k}\right\}_k\right|\right) \\
        &= \frac{1}{6}|e'| \\
        &= \frac{1}{6}||e||\label{eq:concatenated_code_confinement}.
    \end{align}
    We note that the third equality comes from the fact that each element in $\left\{e'_{1,j}\right\}_j$ triggers the syndrome of a single new generator in the construction of our concatenation scheme.
    In addition, we have $\left|e'_{\textrm{logical}}\right| \leq 2|e'|$ since a single physical Pauli error can be converted into at most 2 logical errors when we construct $e'_{\textrm{logical}}$.
    Thus, we can use Lemma~\ref{lem:expander_code_linear_confinement} to obtain the latter term in the right hand side of the first inequality because $|e'| < d/2$ implies that $\left|e'_{\textrm{logical}}\right| < d$.
    The second inequality comes from discarding the logical errors that were generated from $\left\{e'_{1,j}\right\}_j$.
    It is not immediately obvious that $||e'_{\textrm{logical}}|| \geq \left|\left\{e'_{2,k}\right\}_k\right|$.
    This inequality arises because the stabilizer generators that are applied to reduce the weight of $e'_{\textrm{logical}}$ cannot reduce the number of Iceberg code blocks with two Pauli errors. 
    If so, this would violate the fact that $e'$ is a weight-reduced error for the concatenated code since we could have reduced the number of Iceberg code blocks with two Pauli errors by applying the corresponding copied generators to $e'$.
    Since \eqref{eq:concatenated_code_confinement} holds for $||e|| = |e'| < d/2$, we have shown that the stabilizer matrices of our concatenated code have $\left(\frac{d}{2}, 6x\right)$-confinement.  
\end{proof}

Since Lemma~\ref{lem:good_soundness_of_thickened_concatenated_code} guarantees us that the thickened concatenated code has good soundness and Lemma~\ref{lem:linear_confinement_concatenated_code} guarantees that our concatenated code has the single-shot property, we are able to use the single-shot logical state preparation scheme in Ref.~\cite{hong2024single}.
We now state the single-shot logical state preparation scheme with respect to our concatenated code for the sake of concreteness.

Suppose we have a classical repetition code with code parameters $[d, 1, d]$ where $d$ is the distance of our $\left[\left[\tilde{n}, \tilde{k}, d\right]\right]$ concatenated HGP code. 
Let $\breve{Q}$ be the $\left[\left[\breve{n}, \breve{k}, d\right]\right]$ code that emerges from thickening our concatenated HGP code in the $Z$-basis with the classical repetition code.
In the first stage of the single-shot $\ket{\overline{0}}^{\otimes \tilde{k}}$ state preparation protocol for our concatenated HGP code, we perform the following:
\begin{enumerate}
    \item Initialize all physical qubits in $\ket{0}^{\otimes \breve{n}}$.
    \item Measure all $X$-type check operators to obtain an initial $X$-syndrome before using the $X$-metachecks to obtain an $X$-metasyndrome.
    \item Decode the $X$-metasyndrome and repair the initial $X$-syndrome.
    \item Decode the repaired syndrome using $\tilde{H}_X$ of the concatenated HGP code to return to $\ket{\overline{0}}^{\otimes \breve{k}}$, which is the logical all-0s state for the thickened $\left[\left[\breve{n}, \breve{k}, \breve{d}\right]\right]$ code, up to some small residual $Z$ error.
\end{enumerate}
The second stage of the protocol proceeds as follows:
\begin{enumerate}
    \item Measure $Z$ on all physical qubits of the thickened code except on one of the boundary concatenated HGP codes and reconstruct a bulk $Z$-syndrome.
    \item Input the above bulk $Z$-measurement outcomes, $Z$-syndrome, and a suitable single-shot decoder for our concatenated code before iteratively decoding, inferring, and updating an $X$-correction vector from each layer of concatenated HGP code.
    \item Apply the final $X$-correction vector on the boundary concatenated HGP code to obtain a single layer of concatenated HGP code that is initialized to be in the $\ket{\overline{0}}^{\otimes \tilde{k}}$ state.
\end{enumerate}
By changing the basis for thickening and replacing $X$/$Z$, we can obtain the single-shot state preparation protocol for initializing the $\ket{\overline{+}}^{\otimes \tilde{k}}$ state for the concatenated HGP code.

\subsection{Z/X basis measurements}
To perform logical $Z/X$ basis measurements, we simply have to perform transversal measurements on all $n$ physical qubits of our concatenated HGP code in the $Z/X$ basis. 
In other words, just as it holds for any other CSS codes, $M_Z^{\otimes n}/M_X^{\otimes n}$ gives us the logical measurement $\overline{M}_Z^{\otimes k}/\overline{M}_X^{\otimes k}$ for our concatenated HGP code.

\subsection{Concatenated Grid Pauli Product Measurements (CGPPMs)}


In the original grid Pauli product measurement (GPPM) scheme in Ref.~\cite{xu2024fast}, corresponding physical qubits in the main HGP code and the ancilla HGP code are entangled via physical CNOT gates.
Our concatenated grid Pauli product measurement (CGPPM) scheme works effectively the same as these GPPMs except that we entangle the logical qubit of some Iceberg code with the physical qubit in the ancilla HGP code. 
To perform such a CNOT gate, we use the circuit shown in Fig.~\ref{fig:logical-physical-CNOT} which performs a logical CNOT that is controlled on the first logical qubit of some Iceberg code block and acts on qubit $a$ of a HGP ancilla code block. 
In this paper, we refer to these CNOTs as Logical-Physical (LP) CNOTs.
This circuit can be adapted to the case where we control on a different logical qubit or the case where we decide to control on the physical qubit of the HGP code instead.
This particular CGPPM scheme allows us to reduce the overhead of the GPPM measurement by not having to concatenate the ancilla HGP codes with the Iceberg codes.
In addition, it also allows us to sidestep the challenging task of assigning Iceberg code logical qubits to the punctured or/and augmented ancilla HGP code blocks.
More details regarding how the GPPMs can be customized to the specific Pauli operators of interest can be found in Algorithm 2 of Ref.~\cite{xu2024fast}.

In addition, our CGPPM is compatible with the selective inter-block teleportation scheme described in Algorithm 3 in Ref.~\cite{xu2024fast} that allows the teleportation of any subset of the logical qubits of a square HGP code to the corresponding logical qubits of another identical square HGP code.
By performing the LP CNOTs described above, we can entangle the desired subset of the logical qubits of the original concatenated HGP code block to an ancilla HGP code block before entangling it to the final concatenated HGP code block with another set of LP CNOTs.
We then proceed to perform CGPPM on the ancilla HGP code block and the original concatenated HGP code block to perform the desired selective teleportation.
The details can be easily derived from Algorithm 3 and Fig.~11(a) in Ref.~\cite{xu2024fast}.

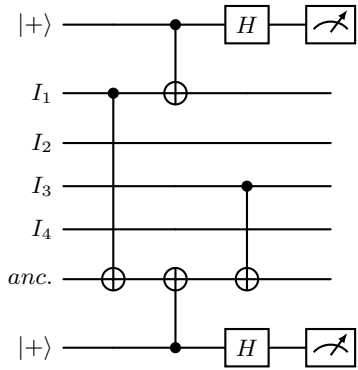
\begin{figure}[H]
	\tikzset{
noisy/.style={starburst,fill=yellow,draw=red,line
width=1pt}
}
	\centering
 \begin{quantikz}
 \lstick{$\ket{+}$} & &\ctrl{1} &\gate{H} &\meter{} \\
 \lstick{$I_1$} &\ctrl{4} &\targ{} & & \\
 \lstick{$I_2$} & & & &  \\
 \lstick{$I_3$} & & &\ctrl{2} &  \\
 \lstick{$I_4$} & & & & \\
 \lstick{$anc.$} &\targ{} &\targ{} &\targ{} &  \\
 \lstick{$\ket{+}$} & &\ctrl{-1} & \gate{H}&\meter{}
 \end{quantikz}
					\caption{By performing physical CNOTs that are controlled on the first and third qubit of the $[[4,2,2]]$ code and then acting on the qubit $anc.$ of the ancilla HGP code, we perform a logical CNOT that is controlled on the first logical qubit of the $[[4,2,2]]$ code and acting on the physical qubit $a$ of the ancilla HGP code. 
                    The $\ket{+}$ states are included as flag qubits to catch the $Z$ error that may appear before the two physical CNOT gates which would propagate to become $Z_1Z_3 = \overline{Z}_1$ in the $[[4,2,2]]$ code. 
                    \label{fig:logical-physical-CNOT}
				 }
\end{figure}

\subsection{Inter-block CNOTs}

For the inter-block CNOTs across all logical qubits of two identical concatenated HGP code, we perform the standard transversal physical CNOTs across all pairs of physical qubits to achieve global logical CNOTs.

Unlike the inter-block CNOTs scheme designed for square HGP codes in Ref.~\cite{xu2024fast}, we are unable to do a set of inter-block CNOTs on an arbitrary subset of the logical qubits in the concatenated HGP code.
This is because doing that would require us to puncture or/and augment the classical codes that are used to construct the HGP code.
The resulting HGP code is not amenable to our current concatenation scheme because the missing columns or rows in the resulting HGP code, in general, remove the symmetry that we use to assign the logical qubits of our $[[4,2,2]]$ code.
While we might still be able to find a different assignment assuming that there are still an even number of physical qubits left in the resulting HGP code, we cannot do tranversal CNOTs on a subset of logical qubits because it will, in general, require us to entangle only one logical qubit of some $[[4,2,2]]$ code with another logical qubit of some other $[[4,2,2]]$ code.
It is unclear whether it is possible for us to do targeted CNOTs on the Iceberg code when the logical qubits of the Iceberg codes are in arbitrary quantum states, especially if the ancilla concatenated HGP code blocks are punctured.
Thus, we construct our inter-block CNOT gadget to perform a global logical CNOT for our concatenation scheme for the HGP code with $[[4,2,2]]$ Iceberg code.
For the rest of the paper, we work with a global inter-block CNOT gadget.

However, we note that it is possible to modify our concatenation scheme so that we utilize only one logical qubit per Iceberg code for every physical qubit of the HGP code.
In other words, we have a concatenated HGP code that has a subsystem structure.
In this case, the concatenated HGP code can inherit the targeted logical CNOT capability from the HGP code since its structure closely resembles that of the original square HGP code.
It is also possible to formulate an alternative inter-block CNOT scheme.
Consider the more restricted setting where one of the blocks is purely an ancilla block where all the logical qubits are either $\ket{\overline{+}}$ or $\ket{\overline{0}}$ states.
To be explicit, suppose we have a punctured concatenated HGP ancilla block that is supposed to be the target block for a transversal logical CNOT.
To construct such a punctured concatenated HGP ancilla block, we first layout the punctured HGP code in the usual geometric layout.
In this picture, there will be some empty rows and columns due to the punctures.
When we concatenate it with the Iceberg codes, we perform the same diagonal symmetric assignment of Iceberg logical qubits to the physical qubits of the punctured HGP code.
For those logical qubits of the Iceberg code blocks that are mapped to the punctured qubits of the HGP ancilla block, we can initialize them to be in the $\ket{\overline{+}}$ or $\ket{\overline{0}}$ state so the global transversal CNOTs do not entangle the logical qubits that are supposed to be punctured.
This can allow us to do inter-block CNOTs on our concatenated HGP codes while respecting the presence of the punctures in the base HGP ancilla blocks, allowing us to only entangle the logical qubits that are not punctured.

\subsection{Intra-Block CNOTs}
\label{sec:Intra-Block-CNOTs}
To implement intra-block CNOT gates between logical qubits in a single $[[4,2,2]]$-concatenated HGP code block, we can perform the circuit in Fig.~\ref{fig:intra-block-CNOT} with our CGPPM gadgets.
Using similar ideas as in Ref.~\cite{xu2024fast}, we can perform all intra-block CNOT gates across logical qubits that are either aligned in the same column or row in $O(\sqrt{k})$ logical cycles.
The intra-block CNOTs that acts on logical qubits that are not in the same column or row can be grouped into clusters where each cluster contains CNOTs that act on pairs of qubits that all reside in the same column and row.
For example, if a CNOT acts on two qubits that are in row 5 and column 10 respectively and another CNOT acts on two other qubits that are also in row 5 and column 10 respectively, these two CNOTs will be in the same cluster.
Each of these clusters can contain at most $\sqrt{k}$ CNOTs and each of these CNOTs will take $O(\sqrt{k})$ logical cycles to implement.
This implies that each cluster takes $O(k)$ logical cycles to implement.
Because these clusters can be implemented in parallel by carefully teleporting the right columns and rows to ancilla code blocks as argued in Appendix B of Ref.~\cite{xu2024fast}, measuring these CNOTs will take $O(k)$ logical cycles.
When the logical translation gadget is available, the number of logical cycles required to perform $O(k)$ intra-block CNOTs can be reduced to $O\left(k^{3/4}\right)$ as described in Appendix B of Ref.~\cite{xu2024fast}.

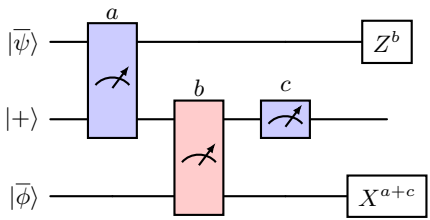
\begin{figure}
	\tikzset{
noisy/.style={starburst,fill=yellow,draw=red,line
width=1pt}
}
	\centering
 \begin{quantikz}
  \lstick{$\ket{\overline{\psi}}$}&\meter[2, style={fill=blue!20}, label style={inner sep=1pt}]{a} & & &\gate{Z^b} \\
 \lstick{$\ket{+}$} & &\meter[2, style={fill=red!20}, label style={inner sep=1pt}]{b} &\meter[style={fill=blue!20}]{c} &   \\
  \lstick{$\ket{\overline{\phi}}$}& & & &\gate{X^{a+c}}
 \end{quantikz}
					\caption{Logical quantum circuit for performing logical CNOT gate between intra-block logical qubits, $\ket{\overline{\psi}}, \ket{\overline{\phi}}$. The gate is controlled on $\ket{\overline{\psi}}$ and targets $\ket{\overline{\phi}}$.
                    The middle qubit line belongs to an ancilla qubit that is initialized in $\ket{\overline{+}}$.
                    The blue measurements with measurement outputs $a$ and $c$ are either $ZZ$ or single $Z$ measurements.
                    The red measurement with measurement output $b$ is an $XX$ measurement.
                    This figure is adapted from Fig.~11(c) in Ref.~\cite{xu2024fast}.
                    \label{fig:intra-block-CNOT}
				 }
\end{figure}

\subsection{H-SWAP}
\label{sec:H-SWAP}
For a square HGP code, the logical Hadamard on all logical qubits is performed by first applying the physical Hadamard on all physical qubits before swapping the twin physical qubits that lie on opposite sides of the principal diagonal.
In addition, this causes the logical qubits that lie on opposites of the principal diagonal in the logical grid to be swapped. More details about the H-SWAP gate for square HGP codes can be found in Ref.~\cite{quintavalle2023partitioning}. It falls under the larger fold-transversal framework of Ref.~\cite{breuckmann2024fold}.
For our $[[4,2,2]]$-concatenated HGP code, H-SWAP is achieved by performing the following:
\begin{enumerate}
    \item Perform physical Hadamard on all physical qubits of the concatenated code.
    \item For each $[[4,2,2]]$ code whose logical qubits are assigned to a pair of physical qubits that are adjacent to each other on the principal diagonal of the HGP code, perform $\mathrm{SWAP}_{3,4}$.
\end{enumerate}

Recall that $H_1 H_2 H_3 H_4$ implements $\overline{H}_1 \overline{H}_2 \overline{\mathrm{SWAP}}_{1,2}$ on a single $[[4,2,2]]$ Iceberg code as shown in Table~\ref{tab:clifford_transversal_logical_ops_baby_Iceberg}.
Because we have assigned the two logical qubits of every Iceberg code to the two twin qubits on opposite sides of the principal diagonal of the HGP code, step 1 would be equivalent to Hadamards and SWAPs on the twin qubits of the HGP code. 
Because we do not want the SWAPs for the qubits on the principal diagonal, we perform step 2 to reverse these swaps for the Iceberg codes that are on the principal diagonal of the HGP code.
Because H-SWAP also performs unwanted swaps on the twin logical qubits of the $[[4,2,2]]$-concatenated HGP code, we have to swap back these pairs of twin logical qubits in order to implement global logical Hadamard on the logical qubits of our concatenated code.
To reverse these swaps, we can first identify pairs of diagonal lines $L$ and $L'$ that are equidistant from the principal diagonal such that the two diagonal lines $L$ and $L'$ correspond to pairs of twin logical qubits.
Next, we teleport $L$ and $L'$ to two different ancilla $[[4,2,2]]$-concatenated HGP blocks that are initialized using Algorithm~4 in Ref.~\cite{xu2024fast} such that the appropriate logical qubits in the ancilla blocks are set to $\ket{\overline{+}}$ and the other logical qubits are set to $\ket{\overline{0}}$.
Next, for each of the two ancilla blocks, we perform a logical SWAP operation using three intra-block CNOTs with the cGPPM gadget so that the logical qubits corresponding to $L\,(L')$ are now swapped to the position of $L'\,(L)$ in the ancilla block.
Lastly, teleport the two diagonal lines of logical qubits back into the original code block.
Since there are $O(\sqrt{k})$ pairs of diagonal lines and the steps stated above take $O(\sqrt{k})$ logical cycles, reversing the unwanted swaps takes $O(k)$ logical cycles.
While the analysis above can be tightened further, showing that implementing global logical Hadamard takes $O(k)$ logical cycles is sufficient since it is not the bottle neck for the time cost for logical computation.

To perform a targeted Hadamard on a subset of logical qubits in a concatenated HGP code block, we simply have to teleport the subset of logical qubits to an ancilla $[[4,2,2]]$-concatenated HGP code block.
We then perform global transversal Hadamard as described above on all the logical qubits in the ancilla code block before teleporting the desired subset back to the original concatenated HGP code block. 
An alternative method to implement targeted Hadamard on $O(k)$ logical qubits would involve iteratively teleporting columns of the logical qubits of interest to an ancilla $[[4,2,2]]$-concatenated HGP code block before performing logical swaps to move them to the principal diagonal of the ancilla block.
Subsequently, we perform the H-SWAP logical gadget on the ancilla block.
Because the logical qubits of interest are on the principal diagonal, there is no need to reverse the unwanted logical swaps because the logical qubits on the principal diagonal are not impacted by the unwanted logical swaps.
Lastly, we swap the logical qubits out of the principal diagonal before teleporting them back to the original code block.
This process is repeated at most $\sqrt{k}$ times because the logical qubits of interest can have support on at most $\sqrt{k}$ columns. Since each iteration takes $O(\sqrt{k})$ logical cycles, it takes at most $O(k)$ logical cycles.
Therefore, to perform targeted Hadamard on $O(k)$ logical qubits, we require $O(k)$ logical cycles.
We note that performing targeted Hadamard on $O(k)$ logical qubits when given access to the logical translation gadget will take $O(\sqrt{k}\log k)$ logical cycles.
Readers can find the details for the implementation in Ref.~\cite{xu2024fast}.
We note that the H-SWAP gadget for the $[[4,2,2]]$-concatenated HGP code requires less non-locality because the physical non-local SWAP gates in the regular HGP code are replaced by the logical swaps for the $[[4,2,2]]$ Iceberg codes that can be implemented locally. 
However, recall that we have to reverse the unwanted logical swaps implemented by the H-SWAP logical gadget.
If we do not have the logical translation gadget, the logical swap implemented by the intra-block CNOTs may consume more non-local entangling gates.

\subsection{CZ-S}
\label{sec:CZ-S}
Using the same fold-transversal framework discussed in Ref.~\cite{breuckmann2024fold}, Quintavalle \emph{et al.} introduced a fold-transversal CZ-S logical gate for square HGP codes~\cite{quintavalle2023partitioning}. The gate applies a $S$ gate on left diagonal logical qubits, $S^\dagger$ on right diagonal logical qubits, and CZ between logical twin qubits.
The gate is performed by:
\begin{enumerate}
    \item Apply CZ gates on pairs of twin qubits
    \item Apply physical $S$ gates to the physical qubits which lie on the principal diagonal of the $L$ sector and $S^\dagger$ gates to the physical qubits which lie on the principal diagonal of the $R$ sector.
\end{enumerate} 


From Table~\ref{tab:clifford_transversal_logical_ops_baby_Iceberg}, observe that applying $S_1^\dagger S_2^\dagger S_3 S_4$ on each Iceberg code corresponding to twin qubits of the HGP code implements the desired logical CZ gate between those twin logical qubits.
The $\overline{S}$ and $\overline{S}^\dagger$ gates as listed in Table~\ref{tab:clifford_logical_ops_baby_Iceberg2} are not transversal because of the physical CZ gate and thus are not inherently fault-tolerant.
By introducing two additional flag qubits, we can use the circuit in Fig.~2(c) in Ref.~\cite{chao2018fault} to implement the physical CZ gate fault-tolerantly.
The explicit construction of the circuit for implementing $\overline{S}_1 = S_1 S_3 CZ_{1,3}$ is shown in Fig.~\ref{fig:cz-flag}. Note that $\overline{S}^\dagger_i$ can be obtained by simply applying $\overline{Z}_i$ after the circuit for $\overline{S}_i$.
Hence we have fault-tolerant implementations of the $\overline{S}, \overline{S}^\dagger$, and $\overline{CZ}$ gates for the $[[4,2,2]]$ code, with which we can perform the CZ-S logical gadget on the $[[4,2,2]]$-concatenated HGP code.


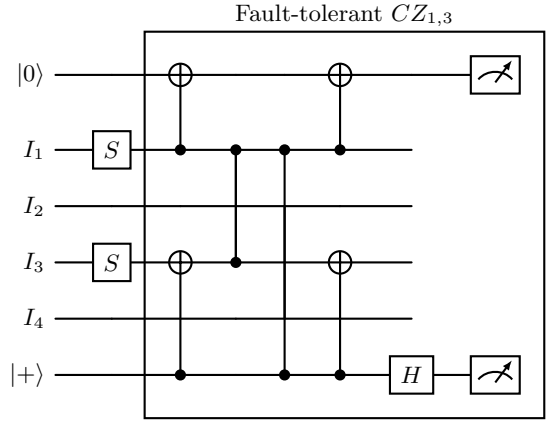
\begin{figure}
	\tikzset{
noisy/.style={starburst,fill=yellow,draw=red,line
width=1pt}
}
	\centering
 \begin{quantikz}
 \lstick{$\ket{0}$} & &\targ{}\gategroup[6, steps=6, style={inner sep=6pt}]{Fault-tolerant $CZ_{1,3}$} & & &\targ{} & &\meter{} \\
 \lstick{$I_1$} &\gate{S} &\ctrl{-1} &\ctrl{2} &\ctrl{3} &\ctrl{-1} &\\
 \lstick{$I_2$} & & & & & &\\
 \lstick{$I_3$} &\gate{S} &\targ{} &\ctrl{-2} & &\targ{} &\\
 \lstick{$I_4$} & & & & & & \\
 \lstick{$\ket{+}$} &&\ctrl{-2} & &\ctrl{-3} &\ctrl{-2} &\gate{H} &\meter{}   \end{quantikz}
					\caption{Fault-tolerant circuit for implementing $\overline{S}_1 = S_1 S_3 CZ_{1,3}$ on the $[[4,2,2]]$ code.
                    \label{fig:cz-flag}
				 }
\end{figure}

To perform targeted S gate on any subset of logical qubits in the concatenated HGP code block, we have to adopt a slightly different procedure. 
Because the CZ-S gadget only allows us to generate $\ket{i}$ states on the principal diagonal, we have to teleport them to the non-diagonal logical qubits in order to be able to apply $\overline{S}$ gates on any subset of logical qubits in the concatenated HGP code block.
To perform the teleportation, we can use the circuit shown in Fig.~\ref{fig:S-gate-teleportation}.
The explicit procedure for implementing logical $S$ gates on any subset of logical qubits would involve the following:
\begin{enumerate}
    \item Initialize the concatenated HGP code to be in the $\ket{\overline{0}}^{\otimes k}$ state. 
    We can teleport the logical qubits to another concatenated HGP code block before performing the initalization if necessary.
    \item Prepare $\sqrt{k}$ $\ket{\overline{i}}$ states on the principal diagonal of the concatenated HGP code using Algorithm 4 in Ref.~\cite{xu2024fast} and our CZ-S logical gadget without performing logical translation.
    \item Teleport the original off-diagonal logical qubits back to the concatenated HGP code block. 
    \item Use the circuit in Fig.~\ref{fig:S-gate-teleportation} to teleport the $\sqrt{k}$ S gates on the principal diagonal    onto the off-diagonal logical qubits that lie in the desired subset.   
    \item If there are excess $S$ gates on the diagonal, we can use a CGPPM to reset those logical qubits to $\ket{\overline{+}}$ so that we can teleport the original diagonal logical qubits back.
    If there are not enough $S$ gates on the diagonal, we can repeat Steps 1 - 5 by carefully teleporting the logical qubits with the completed $S$ gates away.
\end{enumerate}

Suppose we are interested in applying $O(k)$ logical $S$ gates.
Steps 1, 3, 4, and 5 take $O(\sqrt{k})$ logical cycles while step 2 takes $O(\log k)$ logical cycles.
We repeat steps 1 - 5 at most $O(\sqrt{k})$ times.
Thus, the entire procedure for applying $O(k)$ $S$ gates would take $O(k)$ logical cycles.
We note that our procedure is relatively slower than the one in Ref.~\cite{xu2024fast} for square HGP codes because we are not able to prepare $O(k)$ $\ket{\overline{i}}$ states in parallel without the logical translation gadget.
In the protocol stated in Ref.~\cite{xu2024fast} where they assume that the HGP code is constructed from quasi-cyclic classical codes, they are able to implement logical $S$ gates on $O(k)$ logical qubits in $O(\sqrt{k}\log k)$ logical cycles.

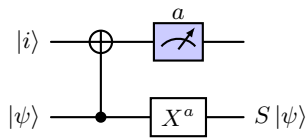
\begin{figure}
	\centering
 \begin{quantikz}
  \lstick{$\ket{i}$} &\targ{} &\meter[style={fill=blue!20}, label style={inner sep=1pt}]{a} & \\
 \lstick{$\ket{\psi}$} &\ctrl{-1} &\gate{X^a} & \rstick{$S\ket{\psi}$}
 \end{quantikz}
					\caption{Logical quantum circuit for teleporting the logical $S$ gate to the logical qubits that do not lie on the principal diagonal. The first qubit line corresponds to some logical qubit in the $\ket{i} \coloneqq S\ket{+}$ state that lies on the principal diagonal and the second qubit line corresponds to some logical qubit that is in some arbitrary $\ket{\psi}$ state and lies outside of the principal diagonal.
                    The blue measurement with measurement output $a$ is a single $Z$ measurement.  
                    \label{fig:S-gate-teleportation}
				 }
\end{figure}

\subsection{Logical translation}

Some quantum error correcting codes are able to implement logical gates simply by permuting their physical qubits. One such example of an automorphism gate~\cite{bravyi2023highthreshold, breuckmann2024fold} is the $\overline{SWAP}_{1,2}$ gate for the $[[4,2,2]]$ code.
Using a specific subclass of quasi-cyclic base classical LDPC codes, Ref.~\cite{xu2024fast} showed that the resulting HGP codes have automorphism gates which implement a translation of the logical qubits.
Because it is not known whether there exist asymptotically good classical LDPC codes satisfying the quasi-cyclic and one-generator-systematic circulant (OGSC) properties, we only included the construction of the logical translation gadget at the end and have discussed how we can implement the other gadgets without access to the logical translation gadget.

For our $[[4,2,2]]$-concatenated HGP code, permutation of the Iceberg logical qubits is a non-trivial task since we need to preserve the diagonal symmetry post-automorphism.
To be explicit, the Iceberg code blocks are initially assigned such that the two logical qubits are mapped to adjacent diagonal qubits or twin qubits of the HGP code.
Permuting the `physical' qubits of the $[[4,2,2]]$-concatenated HGP code would, in general, destroy this symmetry and prevent us from implementing logical gadgets post-permutation.
Thus, we need to make sure adjacent diagonal qubits and twin qubits are part of the same Iceberg block after permutation.

To accomplish this, we use the fact that we are able to teleport a single logical qubit out of the $[[4,2,2]]$ code, following Ref.~\cite{gottesman1998theory}. 
Suppose we are interested in teleporting away the second logical qubit of the Iceberg code that is in the $\ket{\overline{\psi\phi}}$ state.
To do so, we:
\begin{enumerate}
    \item Prepare an ancilla Iceberg code block in the $\ket{\overline{0+}}$ state following Sec.~\ref{sec:state_prep_Iceberg_codes}.
    \item Perform a transversal CNOT targeting $\ket{\overline{\psi\phi}}$ and controlled on the ancilla code block.
    \item Perform a non-destructive logical $Z$ measurement of the second logical qubit of $\ket{\overline{\psi\phi}}$.
\end{enumerate}
The resulting state of the ancilla Iceberg code block is then $\ket{\overline{0\phi}}$.
We can also perform an analogous procedure with $\ket{\overline{+0}}$ as the ancilla block to teleport the first logical qubit of $\ket{\overline{\psi\phi}}$.
Let us now discuss how we can produce $\ket{\overline{\psi \phi}}$ from $\ket{\overline{\psi 0}}$ and $\ket{\overline{0\phi}}$ using teleportation.
We can perform the following:
\begin{equation}
\nonumber
\ket{\overline{\psi 0}} \xrightarrow[]{H^{\otimes 4}} \ket{\overline{+}} \otimes \overline{H}\ket{\overline{\psi}} \xrightarrow[]{Tel.\, \ket{\overline{0+}}} \ket{\overline{0}} \otimes \overline{H}\ket{\overline{\psi}} \xrightarrow[]{H^{\otimes 4}} \ket{\overline{\psi +}}    
\end{equation}
where $Tel.\,\ket{\overline{0+}}$ teleports $\overline{H}\ket{\overline{\psi}}$ to an ancilla Iceberg code block.
We can perform the analogous operations to $\ket{\overline{0\phi}}$ to obtain $\ket{\overline{+\phi}}$.
Now, we perform a transversal logical CNOT that is controlled on $\ket{\overline{+\phi}}$ and targeting $\ket{\overline{\psi +}}$ before performing a non-destructive logical $Z$ measurement on the first logical qubit of $\ket{\overline{\psi +}}$.
This would implement the following transformation: $\ket{\overline{+\phi}} \xrightarrow[]{} \ket{\overline{\psi\phi}}.$

Now, we are ready to provide an explicit description of the logical translation gadget.
For each Iceberg code block $B$ in the concatenated HGP code, we prepare two ancilla Iceberg code blocks in $\ket{\overline{0+}}$ and $\ket{\overline{+0}}$ before teleporting the two logical qubits of $B$ into the two different ancilla code blocks.
We then assign the ancilla code blocks according to the desired automorphism gate before teleporting the new adjacent diagonal and twin qubits into the same Iceberg block.
For each Iceberg code block, we use at most three additional Iceberg code blocks to perform the permutation.
Because this process can be done in parallel for each Iceberg code block, the logical translation gadget requires $O(1)$ logical cycles.

\subsection{Combining the Logical Gadgets}
We restate the main theorem for the logical computation protocol for our $[[4,2,2]]$-concatenated HGP code for the readers' convenience.

\begin{theorem}[Clifford Gates for Concatenated HGP Code]
    A single layer of an ideal Clifford circuit on $k$ logical qubits with $\Theta(k)$ gates consisting of Hadamard, S, and CNOT gates can be simulated on a square HGP code concatenated with $[[4,2,2]]$ Iceberg code blocks at either one of the following space-time costs:
    \begin{enumerate}
        \item $O(k)$ space and $O\left(k^{3/2}\right)$ time
        \item $O\left(k^{3/2}\right)$ space and $O(k)$ time.
    \end{enumerate}
    If the square HGP code was constructed from OGSC quasi-cyclic base codes, then the concatenated code can simulate the layer with either one of the following space-time costs:
    \begin{enumerate}
        \item $O(k)$ space and $O\left(k^{5/4}\right)$ time
        \item $O\left(k^{3/2}\right)$ space and $O\left(k^{3/4}\right)$ time.
    \end{enumerate}
\end{theorem}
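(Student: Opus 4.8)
The plan is to establish all four space--time costs by composing the logical gadgets built in the preceding subsections, while separately tracking (i) the number of \emph{logical cycles} needed to realize each gate family and (ii) the physical cost---in qubits and in rounds of syndrome extraction---of a single logical cycle. First I would decompose the ideal Clifford layer into its constituent Hadamard, $S$, and CNOT gates; since the layer contains $\Theta(k)$ gates, there are $O(k)$ of each type, and because all $k$ logical qubits reside in a single concatenated HGP block, every CNOT is an \emph{intra-block} CNOT. I would then route each family through its gadget: Hadamards via the H-SWAP construction of Section~\ref{sec:H-SWAP}, phase gates via the CZ-S gadget and the $S$-teleportation of Section~\ref{sec:CZ-S}, and CNOTs via the intra-block CNOT gadget of Section~\ref{sec:Intra-Block-CNOTs} built on CGPPMs.

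The key observation is that the intra-block CNOT gadget dominates the logical-cycle count. From Section~\ref{sec:Intra-Block-CNOTs}, realizing $O(k)$ intra-block CNOTs costs $O(k)$ logical cycles without the logical translation gadget, improving to $O(k^{3/4})$ cycles when translation is available for OGSC quasi-cyclic base codes. The targeted Hadamard and $S$ gadgets each cost $O(k)$ cycles without translation and $O(\sqrt{k}\log k)$ with it, so in both regimes they are subdominant; the total logical-cycle count is therefore set by the CNOTs, namely $O(k)$ in general and $O(k^{3/4})$ for OGSC codes.

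Next I would multiply this cycle count by the per-cycle physical cost, which is precisely where the space--time trade-off enters. Each logical cycle requires a fault-tolerant logical measurement or state preparation. Using the standard CSS preparation described earlier in this appendix costs $O(d)$ syndrome-extraction rounds per cycle while keeping the qubit count at $O(k)$; for the good square HGP codes of interest one has $d = \Theta(\sqrt{k})$, so this is $O(\sqrt{k})$ rounds per cycle. Substituting instead the single-shot scheme of Ref.~\cite{hong2024single}---whose applicability is guaranteed by the good soundness of Lemma~\ref{lem:good_soundness_of_thickened_concatenated_code} together with the linear confinement of Lemma~\ref{lem:linear_confinement_concatenated_code}---reduces the per-cycle cost to $O(1)$ rounds, at the expense of thickening the code by a length-$d$ repetition code, which multiplies the space by $O(d) = O(\sqrt{k})$ to $O(k^{3/2})$. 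Combining the two cycle counts with the two per-cycle costs reproduces exactly the claimed pairs: $(O(k),\,O(k^{3/2}))$ and $(O(k^{3/2}),\,O(k))$ in general, and $(O(k),\,O(k^{5/4}))$ and $(O(k^{3/2}),\,O(k^{3/4}))$ for OGSC base codes.

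The main obstacle I anticipate is verifying that the intra-block CNOTs genuinely parallelize to the stated cycle counts. One must argue that arbitrary CNOTs partition into clusters, each acting on qubit pairs sharing a common row and column, that distinct clusters execute simultaneously by teleporting the relevant rows and columns to ancilla blocks following Appendix B of Ref.~\cite{xu2024fast}, and---crucially for our setting---that this routing stays consistent with the diagonal/twin assignment of Iceberg logical qubits, which the inter- and intra-block gadgets presume. A secondary point requiring care is confirming that the single-shot preparation and the CGPPM-based gadgets compose fault-tolerantly, so that residual errors left after one logical cycle do not corrupt the next; this is exactly where Lemmas~\ref{lem:good_soundness_of_thickened_concatenated_code} and~\ref{lem:linear_confinement_concatenated_code} do the essential work.
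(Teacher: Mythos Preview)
Your proposal is correct and follows essentially the same approach as the paper. The paper's own proof is in fact considerably terser than yours---it simply states that the result ``follows directly from the construction of the logical gadgets described in the appendix'' and that the second space--time option in each pair comes from replacing standard CSS state preparation by the single-shot scheme---so your explicit accounting of which gadget dominates the logical-cycle count, and your multiplication by the per-cycle cost $O(d)=O(\sqrt{k})$ versus $O(1)$, fills in exactly what the paper leaves implicit. One small imprecision: in the non-OGSC case the H-SWAP and CZ-S gadgets also cost $O(k)$ cycles, so they tie the intra-block CNOTs rather than being strictly subdominant; this does not affect the final bound.
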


\begin{proof}
    The proof follows directly from the construction of the logical gadgets described in the appendix.
    The space-time overhead for the concatenated HGP code constructed with OGSC quasi-cyclic codes can be easily shown to be true using arguments from App.~B of Ref.~\cite{xu2024fast}.
    We note that the latter space-time cost for each of the construction in the theorem statement comes from the fact that we can perform all of our logical gadgets in $O(1)$ logical cycles at the expense of an increased space cost due to the single-shot state preparation.
\end{proof}
Similar to Ref.~\cite{xu2024fast}, we emphasize that the number of logical cycles is a rather loose upper bound that comes from a constructive compilation and the actual time costs of simulating the layer of $O(k)$ Clifford gates is likely to be significantly lower.
An efficient compilation of the Clifford gates would almost definitely allow us to improve the parallelism and drastically reduce the circuit depth.

\section{Universal Computation}
Using the magic state distillation and consumption protocols constructed in Section V-C-2 of Ref.~\cite{xu2024fast}, we are able to use the logical gadgets constructed above to perform the same ``8-to-CCZ'' magic state distillation protocol and implement the non-Clifford gates in parallel for $k$ logical qubits.
If we were to use a general square HGP code with good expansion, we can either perform the above with $O(k)$ space and $O\left(k^{3/2}\right)$ time or $O\left(k^{3/2}\right)$ space and $O\left(k\right)$ time.
In the event where we use a square HGP constructed from OGSC quasi-cyclic base codes, our protocol allows us to perform the above with $O(k)$ space and $O\left(k \log k\right)$ time or $O\left(k^{3/2}\right)$ space and $O\left(\sqrt{k}\log k\right)$ time.
Again, the less efficient space-time costs for the non-quasi-cyclic construction of HGP codes come from the fact that we are unable to use the logical translation gadget which would reduce the number of logical cycles for the CZ-S logical gadget from $O(k)$ to $O(\sqrt{k}\log k)$.

\end{document}